\newif\iffull
\title{Constructions of High-Rate Minimum Storage\\Regenerating Codes over Small Fields}
\author{Netanel Raviv \and Natalia Silberstein \and Tuvi Etzion}
\date{\today}
\DeclarePairedDelimiter{\floor}{\lfloor}{\rfloor}
\newtheorem{theorem}{Theorem}
\newtheorem{definition}{Definition}
\newtheorem{lemma}{Lemma}
\newtheorem{corollary}{Corollary}
\newtheorem{example}{Example}
\newtheorem{remark}{Remark}
\newcommand{\AS}{(\cA,\cS)}
\newcommand{\bC}{\mathbb{C}}
\newcommand{\bF}{\mathbb{F}}
\newcommand{\bN}{\mathbb{N}}
\newcommand{\cA}{\mathcal{A}}
\newcommand{\cB}{\mathcal{B}}
\newcommand{\cS}{\mathcal{S}}
\newcommand{\cX}{\mathcal{X}}
\newcommand{\cY}{\mathcal{Y}}
\newcommand{\cZ}{\mathcal{Z}}
\newcommand{\Span}[1]{{\left\langle {#1} \right\rangle}}
\DeclareMathOperator{\rank}{rank}
\DeclareMathOperator{\image}{Im}
\def\blfootnote{\xdef\@thefnmark{}\@footnotetext}
\begin{document}
\maketitle

\begin{abstract}
A novel technique for construction of minimum storage regenerating (MSR) codes is presented. Based on this technique, three explicit constructions of MSR codes are given. The first two constructions provide access-optimal MSR codes, with two and three parities, respectively, which attain the sub-packetization bound for access-optimal codes. The third construction provides longer MSR codes with three parities, which are not access-optimal, and do not necessarily attain the sub-packetization bound.

In addition to a minimum storage in a node, all three constructions allow the entire data to be recovered from a minimal number of storage nodes. That is, given storage $\ell$ in each node, the entire stored data can be recovered from any $2\log_2 \ell$ for 2 parity nodes, and either $3\log_3\ell$ or $4\log _3\ell$ for 3 parity nodes. Second, in the first two constructions, a helper node accesses the minimum number of its symbols for repair of a failed node (access-optimality). The generator matrix of these codes is based on perfect matchings of complete graphs and hypergraphs, and on a rational canonical form of matrices. The goal of this paper is to provide a construction of such optimal codes over the smallest possible finite fields. 
For two parities, the field size is reduced by a factor of two for access-optimal codes compared to previous constructions. For three parities, in the first construction the field size is $6\log_3 \ell+1$ (or $3\log_3 \ell+1$ for fields with characteristic~2), and in the second construction the field size is larger, yet linear in $\log_3\ell$. Both constructions with 3 parities provide a significant improvement over existing previous works, since only non-explicit constructions with exponential field size (in $\log_3\ell$) were known so far.
\end{abstract}

\footnotetext[1]{
\noindent		This research was supported in part by the Israeli
	Science Foundation (ISF), Jerusalem, Israel, under
	Grant~10/12.
	
\noindent	The work of Netanel Raviv was supported in part by the IBM fellowship. It is a part of his Ph.D.
	thesis performed at the Technion.
	
\noindent	The work of Natalia Silberstein was supported in part by the Fine fellowship.
	
\noindent	The authors are with the Department of Computer Science, Technion,
	Haifa 32000, Israel.
	
\noindent	e-mail: \{netanel,natalys,etzion\}@cs.technion.ac.il.}

\section{Introduction}\label{section:introduction}
\emph{Regenerating codes} are a family of erasure codes proposed by Dimakis et al.~\cite{dimakis} to store data in \emph{distributed storage systems} (DSSs) in order to reduce the amount of data (\emph{repair bandwidth}) downloaded during a repair of a failed node.
An $(n,k,\ell, d,\beta, \cB)_q$ regenerating code $C$, for $k\leq d\leq n-1$, $\beta\leq \ell$, is used to store a file of size $\cB$ in a DSS across a network of $n$ nodes, where each node of the system stores $\ell$ symbols from $\bF_q$, a finite field with $q$ elements, such that the stored file can be recovered by downloading  the data from any set of $k$ nodes.
When a single node fails, a \emph{newcomer} node which substitutes the failed node, contacts any set of $d$ nodes, called \emph{helper nodes}, and downloads $\beta$ symbols from  each helper node to reconstruct the data stored in the failed node. This process is called a \emph{node repair process} and the parameter $d$ is called the \emph{repair degree}. 
There are two general methods of node repairs: \emph{functional repair} and \emph{exact repair}. Functional repair ensures that when a node repair process is completed, the system is \emph{equivalent} to the original one, i.e., the stored file can be recovered from any $k$ nodes. However, the newcomer node may contain a different data from what was stored in the failed node. 
Exact repair requires that the newcomer node will store exactly the same data as was stored in the failed node. Usually, exact repair is required for systematic nodes (nodes that contain the actual data), while the parity nodes can be functionally repaired.

Based on a min-cut analysis of the information flow graph which represents a DSS, Dimakis et al.~\cite{dimakis} presented an upper bound on the size of a file that can be stored using a regenerating code under functional repairs,
\begin{equation*}
	\label{inf_flow_bound}
	\cB\leq \sum_{i=1}^k\min\{(d-i+1)\beta,\ell\}.
\end{equation*}
Given the values of $\cB,n,k,d$, this bound provides a tradeoff between the number $\ell$ of stored symbols in a node and the repair bandwidth $\beta d$. The extremal point on this tradeoff, where $\ell$ is minimized is referred to as \emph{minimum storage regenerating} (MSR) point, and a code that attains it, namely, a minimum storage regenerating (MSR) code satisfies~\cite{dimakis}
\begin{equation}
	\label{MSR code parameters}
	\left(\ell,\beta d\right)=\left(\frac{\cB}{k},\frac{\cB d}{k(d-k+1)}\right).
\end{equation}

The other extremal point, where $\beta\ell$ is minimized is referred to as \emph{minimum bandwidth regenerating} (MBR) point, and a code that attains it, namely, a minimum bandwidth regenerating (MBR) code satisfies~\cite{dimakis}
\begin{equation*}
	\label{MBR code parameters}
	\left(\ell,\beta d\right)=\left(\frac{2\cB d}{2k d-k ^2+k},\frac{2\cB d}{2k d-k^2+k}\right).
\end{equation*}
Constructions of MBR codes which support exact repair can be found for example in~\cite{rashmi,RSKR12,Rashmi09}.
Constructions of MSR codes which support exact repair can be found in~\cite{DimakisHadamard,rashmi,RSKR12,Rashmi09,suh1,LongMDS,Zigzag} and references therein.
Note that MSR codes are in particular MDS array codes~\cite{BlRo99,CaBr06}.

In this paper we focus on exact MSR codes which provide minimum repair bandwidth of systematic nodes and have additional properties listed below.

\begin{enumerate}
	\item \textbf{Maximum repair degree $d=n-1$}: this enables to minimize the repair bandwidth among all MSR codes. Such MSR codes satisfy $(\ell, \beta d)=\left(\frac{\cB}{k},\frac{\cB(n-1)}{k(n-k)}\right)$~\cite{dimakis}.
	\item \textbf{High rate $\frac{k}{n}$}: in particular  the number of parity nodes $r=n-k$ is  $r=2$ or $r=3$ (see e.g.~\cite{DimakisHadamard,Zigzag,LongMDS} for previously known constructions of such MSR codes).
	\item \textbf{Optimal access}:  the number of symbols  accessed in a helper node is minimal and equals to the number $\beta$ of symbols transmitted during node repair.
	(See~\cite{AccessKumar,AccessTWB} for bounds and constructions of access-optimal codes.)
	\item \textbf{Optimal sub-packetization factor}:  for an $(n,k,\ell, d,\beta, \cB)_q$ regenerating code, the number of stored symbols $\ell$ in a node is also  called  the \emph{sub-packetization} factor of the code. Low-rate ($\frac{k}{n}\leq \frac{1}{2}$, i.e., $\frac{r}{n}\geq \frac{1}{2}$) MSR codes with $d=n-1$,
	where $\ell$  is  linear in $r$ were constructed in~\cite{rashmi}. However, in the known  high-rate ($\frac{k}{n}>\frac{1}{2}$) MSR codes
	$\ell$ is exponential in $k$~\cite{DimakisHadamard, LongMDS}. Moreover, it was proved in~\cite{AccessTWB} that for an access-optimal code, given a fixed sub-packetization factor $\ell$ and $r$ parity nodes, the largest number $k$ of systematic nodes is
	\begin{equation}
		\label{eq:access-optimal-k}
		k=r\log_r\ell,
	\end{equation}
	i.e., the required sub-packetization factor~$\ell$ is~$r^{\frac{k}{r}}$.
	\item \textbf{Small finite field}: construction of access-optimal MSR codes with $r=2$  and optimal sub-packetization  $2^{\frac{k}{2}}$ over a finite field of size $1+2\log_2\ell$ is presented in~\cite{LongMDS}. More precisely, this construction provides a code with a larger number $k'=3\log_2\ell$ of systematic nodes out of which $k=2\log_2\ell$ have the optimal access property. Hence the shortened code with $k$ systematic nodes is an access-optimal code. For  general $r$, codes with sub-packetization factor $\ell=r^m$ and  $k=rm$, over $\bF_q$, where $q\geq k^{r-1}r^{m-1}+1$ were presented in~\cite{LongMDS} and codes for any $q\geq \binom{n}{k}r^{m+1}$ were presented in~\cite{AccessKumar}. 
	In addition, a construction for $r=2$ which achieves $k=2\log_2\ell$ and requires $q\ge \log_2\ell+1$ is presented in~\cite{framework}. This construction requires $q$ to be even, and while it does not provide the optimal access property, it provides a different property called \textit{optimal update}\footnote{DSS codes which satisfy the property
		that any change in the original data requires minimal updates at the storage nodes are called optimal update
		codes.}.
\end{enumerate}

We propose a construction of access-optimal MSR codes with optimal sub-packetization factor $\ell=r^m$, $k=rm$, for $r=2$ and $r=3$, over any finite field $\bF_q$ such that $q\geq m+1$ and $q\geq 6m+1$, respectively. Moreover, for $r=3$, if $q$ is a power of 2 then the field size can be reduced to $q\geq 3m+1$. In addition, we present a construction of a longer code which is not access-optimal, with $r=3$, $k=4m$, and $q=\Theta(m)$.

The comparison of the results presented in this paper with some previously known MSR codes can be found in the following tables.

\begin{table*}[ht]
	{\centering
		\begin{tabular}{|c|c|c|c|c|c|}
			\hline
			&$\bC$ (Theorem~\ref{theorem:main}) & \cite{framework} & \cite{LongMDS} (long MDS code) & \cite{AccessKumar} & \cite{Zigzag} (zigzag code) \\ \hline\hline
			$\ell$       &               $2^m$               &      $2^m$       &             $2^m$              &       $2^m$        &            $2^m$            \\ \hline
			$k$        &               $2m$                &       $2m$       &              $3m$              &        $2m$        &            $m+1$            \\ \hline
			\multicolumn{1}{|c|}{access} &\multirow{2}{*}{$\checkmark$} &  \multirow{2}{*}{for $m$ nodes}  & \multirow{2}{*}{for $2m$ nodes} & \multirow{2}{*}{$\checkmark$}  & \multirow{2}{*}{$\checkmark$} \\ 
			\multicolumn{1}{|c|}{optimality} &~&~&~&~&~\\\hline
			field size $q$   &               $m+1$               &    even $m+1$    &             $2m+1$             &       $2m+1$       &             $3$             \\ \hline
		\end{tabular} \vspace{0.1cm}\caption{Comparison of our codes with some previously known MSR codes with two parities.}\label{table1}
	}
\end{table*}

\begin{table*}[ht]
	{\centering
		\begin{tabular}{|c|c|c|c|c|c|c|}
			\hline
			
			\multirow{2}{*}{~}&\multicolumn{1}{|c|}{$\bC_1$} & \multicolumn{1}{|c|}{$\bC_2$}  & \multicolumn{1}{|c|}{\cite{LongMDS}} & \multirow{2}{*}{\cite{AccessKumar}} & \multicolumn{1}{|c|}{\cite{Zigzag}}  \\ 
			~&(Theorem~\ref{theorem:3ParitiesAccessOptimal})&
			  (Theorem~\ref{theorem:theorem:3ParitiesNotAccessOptimal})&
			  (long MDS code)&~&
			  (zigzag code)\\ \hline\hline			
			$\ell$              &                   $3^m$                   &                   $3^m$                   &              $3^m$              &                   $3^m$                   &             $3^m$               \\ \hline
			$k$                &                   $3m$                    &                   $4m$                    &              $4m$               &                   $3m$                    &             $m+1$               \\ \hline
			\multicolumn{1}{|c|}{access}   &       \multirow{2}{*}{$\checkmark$}       &      \multirow{2}{*}{for $3m$ nodes}      & \multirow{2}{*}{for $3m$ nodes} &       \multirow{2}{*}{$\checkmark$}       & \multirow{2}{*}{$\checkmark$}   \\
			\multicolumn{1}{|c|}{optimality} &                     ~                     &                     ~                     &                ~                &                     ~                     &  \\ \hline
			\multirow{2}{*}{field size $q$}  &    \multicolumn{1}{|c|}{even $3m+1$;}     &       \multirow{2}{*}{$\Theta(m)$}        & \multirow{2}{*}{$m^23^{m+1}+1$} & \multirow{2}{*}{$\binom{3m+3}{3}3^{m+1}$} &     \multirow{2}{*}{$4$ }       \\
			&     \multicolumn{1}{|c|}{odd $6m+1$}      &                     ~                     &                ~                &                     ~                    ~&\\ \hline
		\end{tabular}\vspace{0.1cm}\caption{Comparison of our codes with some previously known MSR codes with three parities.}\label{table2}
	}
\end{table*}

\subsection{Organization}
The construction for $r=2$ is given in Section~\ref{section:twoParities}
 and for $r=3$ in Section~\ref{section:construction3}. Additional construction for $r=3$, which does not have the access-optimal property, is given is Section~\ref{section:rPlus1}. 
Section~\ref{section:preliminaries} and Section~\ref{section:ourTechniques} present the techniques and underlying ideas used in Sections~\ref{section:twoParities},~\ref{section:construction3}, and~\ref{section:rPlus1}. Subsection~\ref{section:background} contains some necessary mathematical background, Subsection~\ref{section:subspaceCondition} describes the underlying algebraic problem, and Section~\ref{section:ourTechniques} explains the outline for all constructions.

\section{Preliminaries}\label{section:preliminaries}
\subsection{Algebra of Matrices - Background and Notations}\label{section:background}
The constructions in Section~\ref{section:twoParities} and Section~\ref{section:construction3} extensively use several standard linear-algebraic notions. For the sake of completeness, we include below a short introduction about these necessary notions. Some of the given background is not directly used in the constructions, but may assist the reader with understanding our techniques, and their underlying reasoning.

For a prime power $q$, $\bF_q^*$ is the set $\bF_q\setminus\{0\}$, $\bF_q^\ell$ is a vector space of dimension $\ell$ over $\bF_q$, which consists of vectors of length $\ell$, and $\bF_q^{\ell\times\ell}$ is the set of all $\ell\times\ell$ matrices with entries in $\bF_q$. It is widely known that a matrix $M\in\bF_q^{\ell\times\ell}$ admits \textit{(left\footnote{Unless otherwise stated, all multiplications of a vector $v$ by a matrix $M$ in this paper are from the left, i.e., $vM$.}) eigenvectors} and \textit{eigenvalues}~\cite[Section VII.7]{Algebra}. If $v\in\bF_q^\ell$ and $vM=\lambda v$ for some $\lambda\in\bF_q$, then $v$ is called a (left) eigenvector for the eigenvalue $\lambda$. The linear span  of all eigenvectors for a certain eigenvalue $\lambda$ is a subspace of $\bF_q^\ell$, and it is called a \textit{(left) eigenspace} of $M$.


For a subspace $S$ of $\bF_q^\ell$, let $S M\triangleq\{sM~\vert~s\in S\}$. The set $SM$ is obviously a subspace of $\bF_q^\ell$, and if $M$ is invertible then $\dim S=\dim (SM)$. A subspace $S$ which satisfies $SM=S$ is called \textit{a (left) invariant subspace of $M$}~\cite[Section XI.4]{Algebra} (in short, $S$ is $M$ invariant). Clearly, an eigenspace of $M$ is also an invariant subspace of $M$, but not necessarily vice versa.

For a polynomial $p(x)\in\bF_q[x]$ such that $p(x)=\sum_{i=0}^{d}p_ix^i$, let $p(M)\triangleq \sum_{i=0}^{d}p_i M^i$. The \textit{characteristic polynomial} $c(x)$ of $M$ is the determinant of $M-xI$~\cite[Section IX.5]{Algebra}, where $I$ is the $\ell\times\ell$ identity matrix, i.e. $c(M)=0$. Furthermore, there exists a unique monic polynomial $m(x)\in\bF_q[x]$, of minimum degree, such that $m(M)=0$. The polynomial $m(x)$, called the \textit{minimal polynomial of~$M$}, divides the characteristic polynomial $c(x)$ of $M$, and its roots are the eigenvalues of $M$.

If $P\in\bF_q^{\ell\times\ell}$ is an invertible matrix, then the matrices $P^{-1}MP$ and $M$ are called \textit{similar matrices}, and the matrix $P$ is called a \textit{change matrix}, (or a \textit{change-of-basis} matrix)~\cite[Section VII.7]{Algebra}. It is easily verified that if $e_0,\ldots,e_{\ell-1}$ is the standard basis of $\bF_q^\ell$, and $p_0,\ldots,p_{\ell-1}$ are the rows of $P$, then $P^{-1}MP$ acts on $p_0,\ldots,p_{\ell-1}$ exactly as $M$ acts on $e_0,\ldots,e_{\ell-1}$. That is, if
\begin{eqnarray}\label{eqn:algebraicFact}
	\left(\sum_{i=0}^{\ell-1}\mu_i e_i\right)M=\sum_{i=0}^{\ell-1}\delta_i e_i
\end{eqnarray}
for some coefficients $(\mu_i)_{i=0}^{\ell-1}$ and $(\delta_i)_{i=1}^{\ell-1}$, then
\begin{eqnarray*}
	\left(\sum_{i=0}^{\ell-1}\mu_i p_i\right)\left(P^{-1}MP\right)=\sum_{i=0}^{\ell-1}\delta_i p_i.
\end{eqnarray*}
As a result of this fact, we have that similar matrices share the same eigenvalues, but not necessarily the same eigenvectors. In addition, similar matrices also share the same minimal polynomial~\cite[Section IX.7]{Algebra}.

Determining matrix similarity is possible by converting given matrices to one of several canonical forms. One such canonical form, which does not always exist, is the \textit{diagonal} form. If a matrix $M$ is similar to a diagonal matrix then $M$ is called \textit{a diagonalizable} matrix. It is well known that $M$ is diagonalizable if and only if there exists a basis of $\bF_q^\ell$ in which all vectors are eigenvectors of $M$~\cite[Section~VII.8, Theorem 19]{Algebra}, and two matrices with the same diagonal form are similar. Two diagonalizable matrices $A$ and $B$ are called \textit{simultaneously diagonalizable} if there exists an invertible matrix $P$ such that $P^{-1}AP$ and $P^{-1}BP$ are diagonal matrices. Equivalently, $A$ and $B$ are simultaneously diagonalizable if there exists a basis of $\bF_q^\ell$ whose vectors are eigenvectors of $A$ and of~$B$. Since diagonal matrices commute, it is easy to prove that simultaneously diagonalizable matrices commute as well.

Determining matrix similarity for matrices which are not necessarily diagonalizable is a corollary of the so-called \textit{decomposition theorem}~\cite[Section XI.4, Theorem 8]{Algebra}, one of the profoundest results in linear algebra. In order to state this theorem, we require the notions of a \textit{block diagonal matrix} and a \textit{companion matrix}. A block diagonal matrix is a block matrix (that is, a matrix which is interpreted as being partitioned to submatrices) in which the only non-zero sub-matrices are on the main diagonal. A companion matrix of a polynomial $p(x)$ is a $\deg p\times \deg p$ matrix consisting of $1$'s in the main sub-diagonal, the additive inverses of the coefficients of $p$ in the rightmost column, and $0$ elsewhere.

The decomposition theorem states that any matrix $M$ is similar to a block diagonal matrix, whose blocks are companion matrices of certain factors of the characteristic polynomial. The polynomials corresponding to these companion matrices may be ordered such that any polynomial is a multiple of the next, and the first one is the minimal polynomial of $M$. This block diagonal matrix is called \textit{the rational canonical form} (rational form, in short) of $M$, any matrix $M'$ is similar to $M$ if and only if the share the same rational form, which exists over any field.

In what follows, for a set of row vectors $T$ we denote its $\bF_q$-linear span by $\Span{T}$ and for subspaces $U$ and $V$, let $U+V\triangleq\{u+v~\vert~u\in U,v\in V\}$. For a matrix $M$ we denote its (left) image by $\image(M)\triangleq\{vM~\vert~v\in\bF_q^\ell\}$ and its row span by $\Span{M}$.
\subsection{The Subspace Condition}\label{section:subspaceCondition}
Usually, a distributed storage system has a \textit{systematic part}, i.e. certain nodes in the system should store an uncoded part of the data. Such nodes are called \textit{systematic nodes}, and they allow instant access to their stored data. An efficient repair algorithm for a failed systematic node is vital. In this paper, we devise an MSR code which allows a minimum repair bandwidth for a failed systematic node.

This problem was previously studied by~\cite{LongMDS,SubpacketizationBound_,ImprovedSubpacketizationBound,AccessTWB}, where it was shown to be equivalent to a purely algebraic condition called \textit{the subspace condition}\footnote{\cite{LongMDS,AccessTWB} used the term ``subspace property''.}. In this subsection we describe this condition, and explain why codes which satisfy it provide minimum repair bandwidth for a failed systematic node. We refer the interested reader to~\cite{LongMDS} for a proof that the subspace condition is also necessary. A more general formulation of the subspace condition, which is irrelevant in our context, may also be found in~\cite{LongMDS}.

In an MSR code with $k$ systematic nodes, $r$ parity nodes, sub-packetization $\ell$, and maximum repair degree $d=n-1$, a file $\textbf{f}\in\bF_q^{k\ell}$ is partitioned into $k$ parts of length $\ell$ each, denoted by
$\textbf{f} = (C_1,\ldots,C_k)$. The file $\textbf{f}$ is multiplied by a $k\ell\times (k+r)\ell$ generator block matrix of the form
	
	\begin{eqnarray}\label{equation:generatorMatrixGeneral}
		\begin{pmatrix}
			I &~     &~     &A_{1,0}&A_{1,1}&\cdots&A_{1,r-1}\\
			~ &\ddots&~&\vdots&\vdots&\ddots&\vdots\\
			~ &~     &I     &A_{k,0}&A_{k,1}&\cdots&A_{k,r-1}\\
		\end{pmatrix}
	\end{eqnarray}
	where $I$ is the $\ell\times\ell$ identity matrix, and the $A_{i,j}$'s are invertible matrices, which satisfy a certain set of properties~\cite{LongMDS}. In this paper $A_{i,j}=A_i^j$ for some $A_1,\ldots,A_k$ that will be defined in the sequel. That is, \eqref{equation:generatorMatrixGeneral} simplifies to 
	
	\begin{eqnarray}\label{equation:generatorMatrix}
		\begin{pmatrix}
			I &~     &~     &I&A_1   &\cdots&A_1^{r-1}\\
			~ &\ddots&~&I&\vdots&\ddots&\vdots\\
			~ &~     &I     &I&A_k   &\cdots&A_k^{r-1}\\
		\end{pmatrix}.
	\end{eqnarray}

The resulting codeword is partitioned into $k+r$ columns of length $\ell$ each, denoted $(C_1,\ldots,C_k,C_{k+1}, $ $\ldots,C_{k+r})$, where for all $j\in[r]\triangleq\{1,\ldots,r\}$,
\[
C_{k+j}=\sum_{i=1}^k A_i^{j-1}C_i.
\]
Each column $C_i$ is stored in a different storage node, where the first $k$ nodes are the systematic ones and the remaining~$r$ nodes are called \textit{parity} nodes.

Upon a failure of a systematic node $m\in[k]$, storing $C_m$, it is required to repair it by downloading a minimal amount of data. According to~(\ref{MSR code parameters}), since $n=k+r$ and $d=n-1$, we have that the minimum bandwidth $\beta d$ in this scenario is
\begin{eqnarray}\label{equation:minimumBandwidth}
	\beta d=\frac{\cB}{k}\cdot \frac{d}{d-k+1}=\frac{k\ell}{k}\cdot\frac{k+r-1}{r}=\frac{\ell}{r}(k+r-1).
\end{eqnarray}
That is, each of the remaining $k+r-1$ nodes should contribute $1/r$ of its stored data~\cite{dimakis}. Sufficient conditions for this minimum repair bandwidth are as follows.

\begin{definition}\label{definition:theSubspaceCondition}(The Subspace Condition, \cite[Section II]{AccessTWB} Let $\ell$ and $r$ be integers such that $r$ divides $\ell$. A set of pairs $\{(A_i,S_i)\}_{i=1}^k$, where for all $i$, $A_i$ is an invertible $\ell\times\ell$ matrix and $S_i$ is an $\ell/r$-subspace of $\bF_q^\ell$, satisfies the subspace condition if the following properties hold.
	\begin{description}
		\item[The \textit{independence} property:]for each $i\in[k]$, \[S_i+S_iA_i+S_iA_i^2+\ldots+S_iA_i^{r-1}=\bF_q^\ell.\]
		\item[The \textit{invariance} property:] for all $i,j\in[k]$, $i\ne j$, $S_iA_j= S_i$.
		\item[The \textit{nonsingular} property:] Every square block submatrix of the following block matrix is invertible.
		\begin{eqnarray}\label{equation:nonSystematicPart}
			\begin{pmatrix}
				I & A_1 & \cdots & A_1^{r-1}\\
				\vdots & \vdots &  \ddots & \vdots\\
				I & A_k &  \cdots & A_k^{r-1}
			\end{pmatrix}
		\end{eqnarray}
	\end{description}
\end{definition}

If a subspace $S$ satisfies the invariance property for a matrix $A$, then $S$ is an \textit{invariant subspace} of $A$ (see Section~\ref{section:background}). 
If a subspace $S'$ satisfies the independence property for $A$, then $S'$ is an \textit{independent subspace} of $A$. Notice that the nonsingular property must hold for the code to be an MDS array code~\cite{BlRo99,CaBr06}, regardless of any applications in distributed storage.

\begin{theorem}\label{theorem:subspaceCondition}\cite{LongMDS}
	If the set $\{(A_i,S_i)\}_{i=1}^k$ satisfies the subspace condition for given $\ell$ and $r$, then the code whose generator matrix is given in~(\ref{equation:generatorMatrix}) is an MSR code which allows a minimum repair bandwidth for any systematic node.
\end{theorem}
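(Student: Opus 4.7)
The statement has two parts: the code must be MDS (any $k$ surviving columns recover the data) and it must admit minimum-bandwidth repair of any systematic node. My plan is to address these separately, using one of the three components of the subspace condition for each.

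For the MDS property, suppose a set $M \sus [k]$ of $s$ systematic columns is missing, with $s$ corresponding surviving parity columns $\{k+j : j \in J\}$. The remaining $k-s$ systematic columns expose $C_i$ outright for $i \notin M$. Substituting these known values into the parity equations $C_{k+j} = \sum_{i=1}^k A_i^{j-1} C_i$ for $j \in J$ produces a linear system in $\{C_m\}_{m \in M}$ whose coefficient matrix is an $s \times s$ block submatrix of~(\ref{equation:nonSystematicPart}). The nonsingular property of the subspace condition is precisely the statement that every such submatrix is invertible, so the missing $C_m$'s are uniquely determined.

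For the optimal repair of a failed systematic node $m$, let $V_m$ be a $\beta \times \ell$ matrix whose rows form a basis of $S_m$. The newcomer downloads $V_m C_i$ from each of the $k+r-1$ helper nodes (directly for systematic helpers, and as $V_m C_{k+j} = \sum_{i=1}^k V_m A_i^{j-1} C_i$ for parity helpers). Each download contributes $\beta = \ell/r$ symbols, matching the lower bound on $\beta d$ from~(\ref{equation:minimumBandwidth}). The invariance property then guarantees, for each $i \ne m$ and each $j$, that $S_m A_i^{j-1} = S_m$; hence $V_m A_i^{j-1} = W_{i,j} V_m$ for some invertible $\beta \times \beta$ matrix $W_{i,j}$, so the ``foreign'' contribution $V_m A_i^{j-1} C_i$ to the parity download equals $W_{i,j}(V_m C_i)$ and is computable from the systematic downloads. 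Subtracting off these known contributions isolates $V_m A_m^{j-1} C_m$ for $j = 1, \ldots, r$, and stacking these $r$ blocks gives an $\ell \times \ell$ linear system for $C_m$ whose row span is $S_m + S_m A_m + \ldots + S_m A_m^{r-1}$. By the independence property this span equals $\bF_q^\ell$, so the system is invertible and $C_m$ is recovered.

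The main technical content is really just the matching of each clause of the subspace condition to the corresponding conclusion---nonsingular to MDS, invariance to the ability to cancel foreign contributions, and independence to the full-rank recovery matrix. Once one commits to downloading projections onto $S_m$ (rather than arbitrary $\beta$-subspaces), the bookkeeping becomes forced, and I expect no genuine obstacle beyond making these three linkages explicit.
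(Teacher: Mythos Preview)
The paper does not give a proof of this theorem; it is cited from~\cite{LongMDS}, and the surrounding text offers only a one-line sketch (``the remaining nodes project their data on $S_j$ \ldots'') together with the remark that the nonsingular property is exactly the MDS condition. Your argument is correct and expands exactly this sketch: nonsingular $\Rightarrow$ MDS, invariance $\Rightarrow$ interference cancellation after projection, independence $\Rightarrow$ full-rank recovery of $C_m$. This is the standard proof and matches what the paper refers the reader to in~\cite{LongMDS,AccessTWB}.
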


The subspaces $\{S_i\}_{i=1}^k$ in this theorem are used in the repair process, and are often called \textit{repair subspaces}. To repair a systematic node~$j$, the remaining nodes project their data on $S_j$, i.e. multiply their data by some full rank matrix whose row span is $S_j$, and send it to the newcomer. For additional details see~\cite{LongMDS,SubpacketizationBound_,ImprovedSubpacketizationBound,AccessTWB}.

In order to compute the projections on the subspace $S_j$, each of the remaining nodes must access a certain amount of its stored symbols, and clearly, at least $\ell/r$ symbols must be accessed. A code in which this minimum is attained is called an access-optimal code~\cite{Zigzag,LongMDS,framework}. It can be shown that a code is access-optimal if and only if each subspace $S_j$ has a basis which consists of unit vectors only~\cite[Section~V]{AccessTWB}.

A set of the form $\{(A_i,S_i)\}_{i=1}^k$ is called \textit{an $\AS$-set}. Since the subspace condition is necessary and sufficient for construction of MSR codes, this paper will focus solely on the construction of $(\cA,\cS)$-sets which satisfy it.

\section{Our Techniques}\label{section:ourTechniques}
Our constructions rely on the properties of some matrix $A$, to which the matrices in our $\AS$-set are similar using certain change matrices. These change matrices are defined according to a set of matchings in the complete $r$-uniform hypergraph on $\ell$ vertices $K_\ell^r$. In this subsection the matrix $A$ is described, its properties are discussed, and the use of matchings for the definition of the change matrices is explained.

The matrix $A$ and the change matrices will be described with respect to a construction with $r$ parities, for a general $r$. In the following sections, the cases of $r=2$ and $r=3$ will be discussed in detail.

For a given number of parities $r$ and an integer $m$, the matrix $A$ is an $r^m\times r^m$ block diagonal matrix whose constituent blocks are the $r\times r$ companion matrix of $x^r-1$. That is, 

\begin{equation*}
	C\triangleq
	\begin{pmatrix}
		0 & \cdots & 0 & 1\\
		1 & \cdots & 0 & 0\\
		0 & \cdots & 0 & 0\\
		0 & \cdots & 1 & 0
	\end{pmatrix},
\end{equation*}
and the matrix $A$ is
\begin{equation} \label{equation:A}
	A\triangleq
	\begin{pmatrix}
		C & ~ & ~\\
		~ & \ddots & ~\\
		~ & ~ & C\\
	\end{pmatrix}.
\end{equation}

Since it is desirable that $A$ will have as many eigenspaces as possible, we operate over a field $\bF_q$, where $r|q-1$. This assumption about~$q$ provides the existence of all roots of unity $1,\gamma_1,\ldots,\gamma_{r-1}$ of order $r$ in the field $\bF_q$ (using the well-known Sylow theorems~\cite[Section XII.5]{Algebra}). It is readily verified that the eigenvalues of $A$ are $1,\gamma_1,\ldots,\gamma_{r-1}\in\bF_q$, since they are the roots of the minimal polynomial $x^r-1$ of $A$. We note that for the special case of $r=2$ (Section~\ref{section:twoParities}), we use an additional technique which allows to operate with any $q\ge m+1$, \textit{without} requiring that $2|q-1$.

In what follows we present the structure of the eigenspaces of $A$, and the eigenspaces of matrices which are similar to $A$.
	
	\begin{lemma}\label{lemma:matrixC}
		The matrix $C\in\bF_q^{r\times r}$ is a diagonalizable matrix whose set of linearly independent eigenvectors is $\{(1,\ldots,1)\}\cup\{(1,\gamma_i,\gamma_i^2,\ldots,\gamma_i^{r-1})\}_{i=1}^{r-1}$. Furthermore, the subspace $S=\Span{e_0}$ is an independent subspace of $C$, i.e. $S+SC+\ldots+SC^{r-1}=\bF_q^r$.
	\end{lemma}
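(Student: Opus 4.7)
The plan is to first establish diagonalizability of $C$ from its minimal polynomial, then verify the claimed eigenvectors by direct multiplication, and finally compute the orbit of $e_0$ under powers of $C$ for the independence claim. First I would note that since $C$ is the companion matrix of $p(x)=x^r-1$, both the characteristic and the minimal polynomial of $C$ equal $p(x)$. The standing hypothesis $r\mid q-1$ forces $\ch\bF_q$ to be coprime to $r$, so $p(x)$ is separable; its roots are precisely the $r$-th roots of unity $1,\gamma_1,\ldots,\gamma_{r-1}\in\bF_q$, and $p(x)=(x-1)\prod_{i=1}^{r-1}(x-\gamma_i)$. A matrix whose minimal polynomial splits into distinct linear factors is diagonalizable, which immediately yields the first assertion.

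Next, I would verify the eigenvectors by inspection. Reading off the nonzero entries of $C$ (ones at positions $(i+1,i)$ for $0\le i\le r-2$ and at $(0,r-1)$), its left action on a row vector is the cyclic left-shift $vC=(v_1,v_2,\ldots,v_{r-1},v_0)$. Applying this to $v^{(i)}\triangleq(1,\gamma_i,\gamma_i^2,\ldots,\gamma_i^{r-1})$ gives $v^{(i)}C=(\gamma_i,\gamma_i^2,\ldots,\gamma_i^{r-1},1)$, and since $\gamma_i^r=1$ the trailing entry equals $\gamma_i\cdot\gamma_i^{r-1}$, so $v^{(i)}C=\gamma_i v^{(i)}$. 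Setting $\gamma_0\triangleq 1$, the listed set is $\{v^{(0)},v^{(1)},\ldots,v^{(r-1)}\}$, and its linear independence follows either from the Vandermonde determinant on the distinct nodes $1,\gamma_1,\ldots,\gamma_{r-1}$, or from the standard fact that eigenvectors for distinct eigenvalues are linearly independent.

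For the independence property of $S=\Span{e_0}$, I would simply iterate the cyclic shift on $e_0=(1,0,\ldots,0)$: each multiplication by $C$ moves the single nonzero coordinate one step to the right cyclically, giving $e_0 C^j=e_{(-j)\bmod r}$ for $j=0,1,\ldots,r-1$. Thus $\{e_0 C^j\}_{j=0}^{r-1}$ is a permutation of the standard basis $\{e_0,e_1,\ldots,e_{r-1}\}$, and hence $S+SC+\cdots+SC^{r-1}=\Span{e_0,e_1,\ldots,e_{r-1}}=\bF_q^r$, as required.

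Both parts reduce to direct verifications; the only points requiring care are the direction of the cyclic shift produced by this specific layout of the companion matrix, and invoking separability of $x^r-1$ (which is guaranteed by $r\mid q-1$). I do not anticipate any substantive obstacle.
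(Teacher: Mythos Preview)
Your proof is correct and follows essentially the same approach as the paper: both verify the cyclic-shift action $e_iC=e_{i-1\bmod r}$, check the claimed eigenvectors by direct multiplication, invoke the Vandermonde determinant for linear independence (hence diagonalizability), and deduce the independence property from the fact that $\{e_0C^j\}_{j=0}^{r-1}$ is a permutation of the standard basis. (One harmless slip: the nonzero coordinate of $e_0$ moves one step to the \emph{left}, not the right, but your formula $e_0C^j=e_{(-j)\bmod r}$ is correct and the conclusion is unaffected; your preliminary minimal-polynomial argument is fine but redundant once you exhibit $r$ independent eigenvectors.)
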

	
	\begin{proof}
		It is readily verified that for all $i\in\{0,\ldots,r-1\}$, $e_iC=e_{i-1\bmod r}$. 
		Hence, $(1,\ldots,1)$ is an eigenvector for the eigenvalue 1. In addition, for any $i\in[r-1]$, we have that
		\begin{eqnarray*}
			(1,\gamma_i,\gamma_i^2,\ldots,\gamma_i^{r-1})C&=&\phantom{\gamma_i}(\gamma_i,\gamma_i^2,\ldots,\gamma_i^{r-1},1)\\
			&=&\gamma_i(1,\gamma_i,\gamma_i^2,\ldots,\gamma_i^{r-1}),
		\end{eqnarray*}
		and thus the vector $(1,\gamma_i,\gamma_i^2,\ldots,\gamma_i^{r-1})$ is an eigenvector which corresponds to the eigenvalue $\gamma_i$. These eigenvectors form an $r\times r$ Vandermonde matrix~\cite[p.~270]{LN97-FiniteFields}, and hence they are linearly independent and $C$ is diagonalizable. Since for all $i\in\{0,\ldots,r-1\}$, $e_iC=e_{i-1\bmod r}$, we also have that $S+SC+\ldots+SC^{r-1}=\bF_q^r$.
	\end{proof}
	
	The structure of the eigenspaces and eigenvalues of $A$ is a simple corollary of Lemma~\ref{lemma:matrixC}.
	
	\begin{corollary}\label{corollary:matrixA}
		The matrix $A$ is diagonalizable, and the linearly independent eigenvectors of $A$ are as follows.
		\begin{enumerate}
			\item \label{item:eigenspace1} For the eigenvalue $1$, the $\ell/r$ linearly independent eigenvectors are
			\[			\begin{array}{llcl}
			(1,1,\ldots,1,&0,0,\ldots,0,&\cdots&,0,0,\ldots,0)\\
			(0,0,\ldots,0,&1,1,\ldots,1,&\cdots&,0,0,\ldots,0)\\
			~&~&\ddots&\\
			(0,0,\ldots,0,&0,0,\ldots,0,&\cdots&,1,1,\ldots,1)
			\end{array}\]
			\item \label{item:eigenspacer1}For the eigenvalue $\gamma_i,i\in[r-1]$, the $\ell/r$ linearly independent eigenvectors are
\[			\begin{array}{llcl}
(1,\gamma_i,\ldots,\gamma_i^{r-1},&0,0,\ldots,0,&\cdots&,0,0,\ldots,0)\\
(0,0,\ldots,0,&1,\gamma_i,\ldots,\gamma_i^{r-1},&\cdots&,0,0,\ldots,0)\\
~&~&\ddots&\\
(0,0,\ldots,0,&0,0,\ldots,0,&\cdots&,1,\gamma_i,\ldots,\gamma_i^{r-1})
\end{array}\]
		\end{enumerate}
		In addition, if $S\triangleq\Span{e_0,e_r,e_{2r},\ldots,e_{\ell-r}}$ then $S+SA+SA^2+\ldots+SA^{r-1}=\bF_q^\ell$.
	\end{corollary}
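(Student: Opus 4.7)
The plan is to lift Lemma~\ref{lemma:matrixC} blockwise using the block-diagonal structure of $A$. Since $A$ consists of $\ell/r$ identical copies of $C$ along the diagonal, any vector $v$ supported on a single block of $r$ consecutive coordinates satisfies $vA = (v'C)$ embedded into that same block (with zeros elsewhere), where $v'\in\bF_q^r$ is the restriction of $v$ to that block. Thus, embedding each of the $r$ eigenvectors of $C$ from Lemma~\ref{lemma:matrixC} into each of the $\ell/r$ blocks, padded with zeros elsewhere, produces $r\cdot(\ell/r)=\ell$ vectors. The vectors listed in items~\ref{item:eigenspace1} and~\ref{item:eigenspacer1} of the corollary are precisely these embeddings, and the eigenvalue relation is inherited directly from the action of $C$ on the embedded block.

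Next I would verify linear independence of the $\ell$ listed vectors. Vectors embedded in different blocks have disjoint coordinate supports, so linear independence reduces to linear independence within each block. Within a single block, the $r$ vectors restrict to the rows $(1,\gamma^0,\gamma^1,\ldots,\gamma^{r-1})$ for $\gamma\in\{1,\gamma_1,\ldots,\gamma_{r-1}\}$, which form an $r\times r$ Vandermonde matrix with distinct nodes and hence full rank. Therefore the $\ell$ listed vectors are linearly independent, and since $\dim\bF_q^\ell=\ell$, they form a basis of eigenvectors of $A$, proving that $A$ is diagonalizable.

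For the independent-subspace claim, I would use the identity $e_iC=e_{i-1\bmod r}$ from the proof of Lemma~\ref{lemma:matrixC}. Applied blockwise, if $t\in\{0,1,\ldots,\ell/r-1\}$ and $j\in\{0,\ldots,r-1\}$, then $e_{tr}A^j=e_{tr+(-j\bmod r)}$, which stays within block~$t$. Hence $SA^j=\Span{e_{tr+(-j\bmod r)}~\vert~t=0,\ldots,\ell/r-1}$, and as $j$ ranges over $0,1,\ldots,r-1$, the union of generating sets of $S,SA,\ldots,SA^{r-1}$ covers each standard basis vector $e_0,\ldots,e_{\ell-1}$ exactly once. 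Consequently $S+SA+\ldots+SA^{r-1}=\bF_q^\ell$.

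There is no real obstacle here: everything reduces mechanically to Lemma~\ref{lemma:matrixC} via the block-diagonal structure of $A$. The only care required is bookkeeping of indices, both in tracking the shift $e_i\mapsto e_{i-1\bmod r}$ within each length-$r$ block, and in confirming that the disjoint-support argument is correctly combined with the Vandermonde argument when establishing the linear independence of the full set of $\ell$ eigenvectors.
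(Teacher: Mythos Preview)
Your proposal is correct and follows exactly the approach the paper intends: the paper does not write out a proof, stating only that the result is a simple corollary of Lemma~\ref{lemma:matrixC}, and your blockwise lifting via the block-diagonal structure of $A$ is precisely the mechanism implicit in that remark. Your index bookkeeping and the disjoint-support/Vandermonde argument are both accurate.
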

	
	The matrices in our construction are similar to the matrix $A$. The following lemma, that is based on~\eqref{eqn:algebraicFact}, presents the eigenvalues and eigenvectors of a matrix which is similar to~$A$.
	
	\begin{lemma}\label{lemma:eigenspacesSimilar}
		If $P\in\bF_q^{\ell\times\ell}$ is an invertible matrix whose rows are $p_0,\ldots,p_{\ell-1}$, and $B\triangleq P^{-1}A P$, then $B$ is diagonalizable, with the following eigenspaces,
		\begin{itemize}
			\item[\emph{A1.}] For the eigenvalue $1$, a basis of the eigenspace is $\left\{\sum_{j=0}^{r-1}p_{ir+j}~\big\vert~i\in\{0,\ldots,\ell/r-1\}\right\}$.
			\item[\emph{A2.}] For the eigenvalue $\gamma_i$, a basis of the eigenspace is $\left\{\sum_{j=0}^{r-1}\gamma_i^{j}p_{ir+j}~\big\vert~i\in\{0,\ldots,\ell/r-1\}\right\}$.
		\end{itemize}
		In addition, the subspace $T\triangleq \Span{p_0,p_r,p_{2r},\ldots,p_{\ell-r}}$ satisfies $T+TB+TB^2+\ldots+TB^{r-1}=\bF_q^\ell$.
	\end{lemma}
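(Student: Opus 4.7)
The plan is to derive everything from the ``action-transfer'' fact in~\eqref{eqn:algebraicFact}, which is the key link between $A$ and any matrix similar to it. Concretely, \eqref{eqn:algebraicFact} says that applying $B=P^{-1}AP$ to the row $\sum_i \mu_i p_i$ yields the same coefficient pattern as applying $A$ to $\sum_i \mu_i e_i$. So each eigenvector of $A$ with eigenvalue $\lambda$, expressed in the standard basis, gives rise to an eigenvector of $B$ with the same eigenvalue by substituting $p_i$ in place of $e_i$.

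First I would establish diagonalizability. Since $B$ is similar to $A$, they share the same minimal polynomial; by Corollary~\ref{corollary:matrixA} (or directly, because the minimal polynomial of $A$ is $x^r-1$, which splits into distinct linear factors over $\bF_q$ thanks to the assumption $r\mid q-1$), the minimal polynomial of $B$ also splits into distinct linear factors, hence $B$ is diagonalizable. Next, for each eigenvector of $A$ listed in Corollary~\ref{corollary:matrixA}, I would apply \eqref{eqn:algebraicFact} to translate it to an eigenvector of $B$: the eigenvector $\sum_{j=0}^{r-1}\gamma_i^j e_{ir+j}$ of $A$ for eigenvalue $\gamma_i$ becomes $\sum_{j=0}^{r-1}\gamma_i^j p_{ir+j}$ as an eigenvector of $B$ for $\gamma_i$, and similarly for the eigenvalue $1$ (taking $\gamma_0=1$). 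Linear independence of the resulting $\ell/r$ vectors in each eigenspace follows from the linear independence of the corresponding standard-basis combinations (which was established in Corollary~\ref{corollary:matrixA}) together with the invertibility of $P$: the map $\sum \mu_i e_i \mapsto \sum \mu_i p_i$ is the linear isomorphism induced by $P$. This gives precisely the bases claimed in items A1 and A2.

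Finally, for the independent-subspace statement, I would observe that $T=SP$, where $S=\Span{e_0,e_r,\ldots,e_{\ell-r}}$. Then for every $j$,
\[
TB^j \;=\; SP\cdot (P^{-1}AP)^j \;=\; SP\cdot P^{-1}A^jP \;=\; (SA^j)P,
\]
so
\[
T+TB+\cdots+TB^{r-1} \;=\; \bigl(S+SA+\cdots+SA^{r-1}\bigr)P \;=\; \bF_q^\ell\cdot P \;=\; \bF_q^\ell,
\]
where the middle equality uses the last conclusion of Corollary~\ref{corollary:matrixA} and the last uses the invertibility of $P$.

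I do not anticipate a real obstacle: the entire lemma is a clean conjugation argument, and the only mildly delicate point is to be careful that the algebraic fact \eqref{eqn:algebraicFact} is being applied with the correct side conventions (left multiplications, rows of $P$), which is already the convention established in Section~\ref{section:background}.
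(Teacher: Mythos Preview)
Your proposal is correct and follows essentially the same approach as the paper: the paper also observes that the claimed eigenvectors are obtained by right-multiplying the eigenvectors of $A$ by $P$ (equivalently, your use of~\eqref{eqn:algebraicFact}), verifies $(vP)B=\lambda vP$ directly, and carries out the identical $T=SP$, $TB^j=(SA^j)P$ computation for the independent-subspace claim. Your added remarks on diagonalizability via the minimal polynomial and on linear independence via the invertibility of $P$ just make explicit what the paper leaves implicit.
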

	
	\begin{proof}
		Notice that the vectors in A1 and A2 are given by multiplying the eigenvectors of $A$ (see Corollary~\ref{corollary:matrixA}) by the matrix $P$. If $v$ is an eigenvector of $A$ which corresponds to an eigenvalue $\lambda$ then
		\begin{eqnarray*}
			(vP)B=(vP)P^{-1}AP=vAP=\lambda vP,
		\end{eqnarray*}
		and hence $vP$ is an eigenvector of $B$.
		
		Notice that the subspace $T$ may be written as $T=\Span{e_0P,e_rP,\ldots,e_{\ell-r}P}=\Span{e_0,e_r,\ldots,e_{\ell-r}}P=SP$, where $S=\Span{e_0,e_r,e_{2r},\ldots,e_{\ell-r}}$. Hence, it follows from Corollary~\ref{corollary:matrixA} that
		\begin{eqnarray*}
			T+TB+TB^2+\ldots+TB^{r-1}&=&SP+SPB+SPB^2+\ldots+SPB^{r-1}\\
			&=&SP+SAP+SA^2P+\ldots+SA^{r-1}P\\
			&=&(S+SA+SA^2+\ldots+SA^{r-1})P=\bF_q^\ell P=\bF_q^\ell.
		\end{eqnarray*}
	\end{proof}

The matrices in our construction are similar to the matrix $A$. The change matrices which induce the similarity are defined using \textit{perfect colored matchings} in the \textit{complete $r$-uniform hypergraph}. Although the specific choice of these change matrices varies from one construction to another, the general idea behind the use of matchings is roughly identical, and will be explained in the remainder of this subsection.

\begin{definition}\label{definition:matching}
	A perfect colored matching (matching, in short) is a perfect matching in the $r$-uniform hypergraph, whose vertices are colored in $r$ colors, such that no edge contains two nodes of the same color (see Figure~\ref{figure:oneMatching}).
\end{definition}

We denote a matching by $\cZ=(Z^{(0)},\ldots,Z^{(r-1)})$, where each $Z^{(i)}$ is an ordered color set (i.e. a subset of all vertices which are colored in the same color), and if $Z^{(i)}=(z_0^{(i)},\ldots,z_{\ell/r-1}^{(i)})$ for each $i\in\{0,\ldots,r-1\}$, then the edges of $\cZ$ are \[\left\{ \{z_j^{(0)},z_j^{(1)},\ldots,z_j^{(r-1)}\}\right\}_{j=0}^{\ell/r-1}.\]For example, for $r=2$, a matching is denoted by $\cZ=(Z,Z')$ (we use $Z$ and $Z'$ instead of $Z^{(0)}$ and $Z^{(1)}$ for convenience), where $Z=(z_0,\ldots,z_{\ell/2-1})$, $Z'=(z_0',\ldots,z_{\ell/2-1}')$, and the edges of $\cZ$ are $\left\{ \{ z_i,z_i'\}\right\}_{i=0}^{\ell/2-1}$.

\begin{figure}[here]
	\begin{center}
		\begin{tikzpicture}[scale=0.5][line cap=round,line join=round,>=triangle 45,x=1.0cm,y=1.0cm]
		\clip(11.,2.) rectangle (22.,13.);
		\draw(13.,11.) circle (1.cm)node[anchor=center] {$\heartsuit$};
		\draw(20.,11.) circle (1.cm) node[anchor=center] {$\diamondsuit$};
		\draw(13.,4.) circle (1.cm)node[anchor=center] {$\diamondsuit$};
		\draw(20.,4.) circle (1.cm)node[anchor=center] {$\heartsuit$};
		\draw [line width=4.4pt] (14.,11.)-- (19.,11.);
		\draw [dotted] (20.,10.)-- (20.,5.);
		\draw [dotted] (13.,10.)-- (13.,5.);
		\draw [line width=4.4pt] (14.,4.)-- (19.,4.);
		\draw [dotted] (19.292893218813454,10.292893218813452)-- (13.707106781186548,4.707106781186548);
		\draw [dotted] (13.707106781186548,10.292893218813452)-- (19.292893218813454,4.707106781186548);
		\draw (12.733998968070003,11.611143997505334);
		\draw (19.78068891530595,4.591873077301424);
		\draw (19.78068891530595,11.55630594344124) ;
		\draw (12.788837022134095,4.591873077301424) ;
		\end{tikzpicture}\caption{A perfect colored matching on the complete (2-uniform hyper)graph with four vertices. The edges of the matching appear in bold, and the nodes are colored in $\heartsuit,\diamondsuit$ such that there is no monochromatic edge.}\label{figure:oneMatching}
	\end{center}
\end{figure}
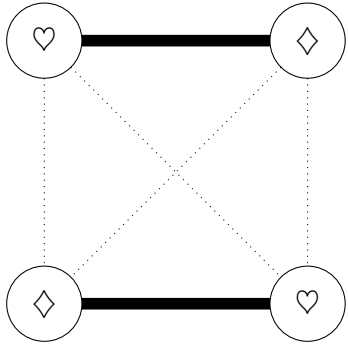

Each matching will be used to construct $r$ (or $r+1$ in Section~\ref{section:rPlus1}) change matrices for the $\AS$-set. Each $\ell\times\ell$ change matrix is constructed using constituent $r\times\ell$ sub-matrices~\eqref{equation:frameworkP}. Each such submatrix is a function of a single edge in the matching. That is, if the matching is $\cZ=(Z^{(0)},\ldots,Z^{(r-1)})$, then $r$ matrices in the $\AS$-set are constructed as $A_i=P_i^{-1}AP_i$, where
\begin{eqnarray}\label{equation:frameworkP}
	P_i=
	\begin{pmatrix}
		~\\\boxed{
			\begin{matrix}
				\mbox{An } r\times\ell \mbox{ submatrix based on }~\\
				\{z_0^{(0)},z_0^{(1)},\ldots,z_0^{(r-1)}\}
			\end{matrix}
		}\\
		~\\
		\boxed{
			\begin{matrix}
				\mbox{An } r\times\ell \mbox{ submatrix based on }~\\
				\{z_1^{(0)},z_1^{(1)},\ldots,z_1^{(r-1)}\}
			\end{matrix}
		}\\~
		\vdots\\
		\boxed{
			\begin{matrix}
				\mbox{An } r\times\ell \mbox{ submatrix based on }~\\
				\{z_{\ell/r-1}^{(0)},z_{\ell/r-1}^{(1)},\ldots,z_{\ell/r-1}^{(r-1)}\}
			\end{matrix}
		}\\~
	\end{pmatrix}.
\end{eqnarray}

The vertices of the $r$-uniform hypergraph $K_\ell^r$ are identified with the $\ell$ unit vectors $e_0,\ldots,e_{\ell-1}$. In all subsequent constructions, the subspaces in the $\AS$-set are defined using the color sets from the matchings, i.e., if $\cZ=(Z^{(0)},\ldots,Z^{(r-1)})$ is a matching, then we define $r$ subspaces ($r+1$ in Section~\ref{section:rPlus1}) of dimension $\ell/r$ as follows
\begin{eqnarray*}
	\forall i\in\{0,\ldots,r-1\},~S_{Z^{(i)}}&\triangleq& \Span{Z^{(i)}}\\
	S_{Z^*} &\triangleq& \Span{ \left\{ z_i^{(0)}+z_i^{(1)}+\cdots+z_i^{(r-1)}\right\}_{i=0}^{\ell/r-1}}.
\end{eqnarray*}

That is, each subspace $S_{Z^{(i)}}$ is the span of the color set $Z^{(i)}$, and the additional subspace $S_{Z^*}$ is the span of the sums of each edge in $\cZ$. To enlarge the $\AS$-set, different matchings can be used, as long as they satisfy the following simple condition.

\begin{definition}\label{definition:pairingCondition}
	Two matchings $\cX=(X^{(0)},\ldots,X^{(r-1)})$ and $\cY=(Y^{(0)},\ldots,Y^{(r-1)})$ satisfy the pairing condition if any edge in $\cX$ is monochromatic in $\cY$, and vice versa (see Figure~\ref{figure:twoMatchings}).
\end{definition}

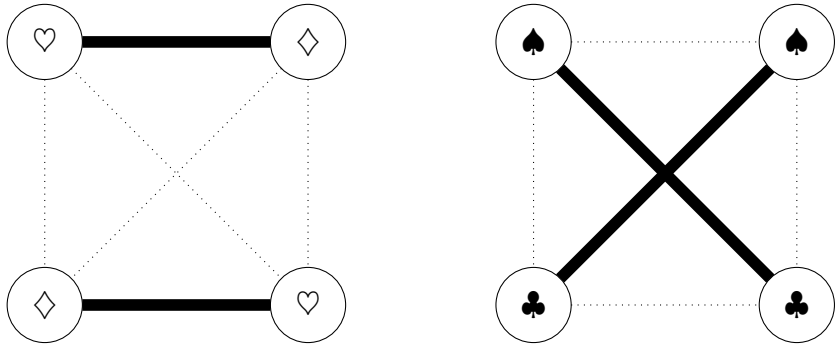
\begin{figure}[here]
	\begin{center}
		\begin{tikzpicture}[scale=0.5][line cap=round,line join=round,>=triangle 45,x=1.0cm,y=1.0cm]
		\clip(11.,2.) rectangle (35.,13.);
		\draw(13.,11.) circle (1.cm)node[anchor=center] {$\heartsuit$};
		\draw(20.,11.) circle (1.cm)node[anchor=center] {$\diamondsuit$};
		\draw(13.,4.) circle (1.cm)node[anchor=center] {$\diamondsuit$};
		\draw(20.,4.) circle (1.cm)node[anchor=center] {$\heartsuit$};
		\draw [line width=4.4pt] (14.,11.)-- (19.,11.);
		\draw [dotted] (20.,10.)-- (20.,5.);
		\draw [dotted] (13.,10.)-- (13.,5.);
		\draw [line width=4.4pt] (14.,4.)-- (19.,4.);
		\draw [dotted] (19.292893218813454,10.292893218813452)-- (13.707106781186548,4.707106781186548);
		\draw [dotted] (13.707106781186548,10.292893218813452)-- (19.292893218813454,4.707106781186548);
		\draw (12.73399896807,11.611143997505334) ;
		\draw (19.78068891530594,4.591873077301424) ;
		\draw (19.78068891530594,11.55630594344124) ;
		\draw (12.788837022134093,4.591873077301424) ;
		\draw(25.996618813190036,11.) circle (1.cm)node[anchor=center] {$\spadesuit$};
		\draw(32.99661881319001,11.) circle (1.cm)node[anchor=center] {$\spadesuit$};
		\draw(25.996618813190036,4.) circle (1.cm)node[anchor=center] {$\clubsuit$};
		\draw(32.99661881319001,4.) circle (1.cm)node[anchor=center] {$\clubsuit$};
		\draw [dotted] (26.996618813190036,11.)-- (31.99661881319001,11.);
		\draw [dotted] (32.99661881319001,10.)-- (32.99661881319001,5.);
		\draw [dotted] (25.996618813190036,10.)-- (25.996618813190036,5.);
		\draw [dotted] (26.996618813190036,4.)-- (31.99661881319001,4.);
		\draw [line width=4.4pt] (32.289512032003465,10.292893218813452)-- (26.703725594376582,4.707106781186548);
		\draw [line width=4.4pt] (26.703725594376582,10.292893218813452)-- (32.289512032003465,4.707106781186548);
		\draw (25.730617781260026,11.611143997505334) ;
		\draw (32.77730772849597,4.591873077301424) ;
		\draw (32.77730772849597,11.55630594344124) ;
		\draw (25.78545583532412,4.591873077301424) ;
		\end{tikzpicture}
		\caption{Two matchings in the complete (2-uniform hyper)graph which satisfy the pairing condition. Each edge in one is monochromatic in the other, and vice versa.}\label{figure:twoMatchings}
	\end{center}
\end{figure}

Subspaces which correspond to distinct matchings $\cX=(X^{(0)},\ldots,X^{(r-1)})$ and $\cY=(Y^{(0)},\ldots,Y^{(r-1)})$, that satisfy the pairing condition, have a useful property. This property is a corollary of the following lemma.

\begin{lemma}\label{lemma:lOver9intersection}
	If $D\in\{X^{(0)},\ldots,X^{(r-1)}\}$ and $E\in\{Y^{(0)},\ldots,Y^{(r-1)}\}$ then $|D\cap E| = \ell/r^2$.
\end{lemma}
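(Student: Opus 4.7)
The plan is to argue by double counting, using the pairing condition to partition the edges of $\cX$ according to their monochromatic color in $\cY$. Write $D = X^{(i)}$ and $E = Y^{(i')}$ where $i,i' \in \{0,\ldots,r-1\}$ (a color class of one matching intersected with a color class of the other). Since each matching is perfect and colored, each $X^{(j)}$ has exactly $\ell/r$ vertices, and so does each $Y^{(j)}$.

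First I would fix the color $i'$ and count the edges of $\cX$ that happen to be monochromatic of color $Y^{(i')}$ in $\cY$. Let $N_{i'}$ be this number. By the pairing condition every edge of $\cX$ is monochromatic in $\cY$, so the $\ell/r$ edges of $\cX$ partition into $r$ groups indexed by their $\cY$-color. Each edge counted by $N_{i'}$ contributes exactly $r$ vertices of $\cY$-color $i'$, and every vertex of $\cY$-color $i'$ lies in one such edge (since $\cX$ is a perfect matching and the pairing condition forces its edge to be monochromatic of that color). Thus $r\cdot N_{i'} = |Y^{(i')}| = \ell/r$, giving $N_{i'} = \ell/r^2$.

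Next I would count $|D \cap E| = |X^{(i)} \cap Y^{(i')}|$ directly. A vertex $v$ lies in $X^{(i)} \cap Y^{(i')}$ iff its $\cX$-color is $i$ and its $\cY$-color is $i'$. Any such $v$ belongs to a unique edge of $\cX$, which must be monochromatic of $\cY$-color $i'$, i.e., one of the $N_{i'}$ edges counted above. Conversely, each such edge is an $r$-set with exactly one vertex per $\cX$-color, hence it contributes exactly one vertex to $X^{(i)}$, and this vertex automatically has $\cY$-color $i'$. This gives a bijection between $X^{(i)} \cap Y^{(i')}$ and the set of edges of $\cX$ monochromatic of $\cY$-color $i'$, so $|X^{(i)} \cap Y^{(i')}| = N_{i'} = \ell/r^2$.

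The argument is essentially a two-line counting and the only thing to be careful about is invoking the pairing condition correctly at the right moments (once to identify which edges of $\cX$ contain all vertices of a given $\cY$-color, and implicitly to ensure the partition of $\cX$-edges by $\cY$-color is well-defined). No obstacle beyond bookkeeping is expected.
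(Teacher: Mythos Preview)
Your proof is correct and is essentially the same idea as the paper's, just using the opposite direction of the (symmetric) pairing condition: the paper writes $D$ as a union of $\ell/r^2$ edges of $\cY$, each contributing exactly one vertex of $E$, whereas you write $E$ as a union of $\ell/r^2$ edges of $\cX$, each contributing exactly one vertex of $D$. The paper's version is a one-liner while yours passes through the intermediate count $N_{i'}$, but the underlying counting argument is identical.
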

\begin{proof}
	Since $\cX$ and $\cY$ satisfy the pairing condition, $D$ can be written as a union of edges from $\cY$, that is, $D=\cup_{i=0}^{\ell/r^2-1}\{y_i^{(0)},y_i^{(1)},\ldots,y_i^{(r-1)}\}$. Hence, $D$ contains exactly $\ell/r^2$ elements of $E$.
\end{proof}

By~\cite[Lemma~11]{AccessTWB}, in any MSR code which attains minimum repair bandwidth, the dimension of the intersection between any two repair subspaces is at most $\ell/r^2$. As a result of Lemma~\ref{lemma:lOver9intersection}, we have that repair subspaces which correspond to different matchings attain this bound with equality.

\begin{corollary}\label{corollary:eigenspaceIntesections}
	If $U\in\{S_{X^{(0)}},S_{X^{(1)}},\ldots,S_{X^{(r-1)}},S_{X^*}\}$ and $V\in\{S_{Y^{(0)}},S_{Y^{(1)}},\ldots,S_{Y^{(r-1)}},S_{Y^*}\}$, then $\dim(U\cap V)=\ell/r^2$.
\end{corollary}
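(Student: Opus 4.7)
My plan is to split into three cases depending on how many of $U,V$ are of the ``edge-sum'' type $S_{X^*}$ or $S_{Y^*}$, and to reduce each case to a combinatorial count governed by the pairing condition.

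In the case where both $U=S_{X^{(i)}}$ and $V=S_{Y^{(j)}}$ are plain color-set spans, each is spanned by unit vectors and therefore $U\cap V=\Span{X^{(i)}\cap Y^{(j)}}$, whose dimension is $|X^{(i)}\cap Y^{(j)}|=\ell/r^2$ by Lemma~\ref{lemma:lOver9intersection}. In the case where exactly one is an edge-sum subspace, say $U=S_{X^{(i)}}$ and $V=S_{Y^*}$ (the opposite sub-case being symmetric), I will expand any $v\in V$ uniquely as $v=\sum_j \alpha_j(y_j^{(0)}+\cdots+y_j^{(r-1)})$; since the $\cY$-edges partition the vertices, the coefficient of $e_w$ in $v$ equals $\alpha_{j(w)}$, where $j(w)$ is the unique $\cY$-edge containing $w$. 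Requiring $v\in S_{X^{(i)}}$ then forces $\alpha_j=0$ whenever the $j$-th $\cY$-edge exits $X^{(i)}$. By the pairing condition, every $\cY$-edge is already contained in some $X$-color class, so the surviving edges are exactly the $\cY$-edges sitting inside $X^{(i)}$. A simple count ($|X^{(i)}|/r=\ell/r^2$ such edges) then gives $\dim(U\cap V)=\ell/r^2$, with a basis of indicators $\{y_j^{(0)}+\cdots+y_j^{(r-1)}\}$ over surviving edges.

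The third case, $U=S_{X^*}$ and $V=S_{Y^*}$, will be the main obstacle. Here $v\in U\cap V$ iff $v$ is constant on every $\cX$-edge and every $\cY$-edge, so $\dim(U\cap V)$ equals the number of connected components of the auxiliary hypergraph $H$ on $\{0,\ldots,\ell-1\}$ whose hyperedges are the $\cX$- and $\cY$-edges together. The plan is to prove that every component of $H$ contains exactly $r^2$ vertices. Starting at any vertex $v$, its $\cX$-edge yields $r$ vertices that share a common $\cY$-color by pairing; following the $\cY$-edges at those $r$ vertices contributes $r(r-1)$ further vertices, forming the natural $r\times r$ ``grid'' indexed by $\cX$- and $\cY$-colors. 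The delicate step, which I expect to be the hardest point, is establishing that this $r^2$-set is closed under further $\cX$- and $\cY$-hops; this does not follow from the pairing condition alone, and will require invoking the finer structural properties of the specific matchings constructed in Sections~\ref{section:twoParities},~\ref{section:construction3}, and~\ref{section:rPlus1}. Once closure is proved, the vertex set of $H$ partitions into $\ell/r^2$ components of size $r^2$ and the identity $\dim(U\cap V)=\ell/r^2$ follows.
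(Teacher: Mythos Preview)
Your Cases~1--3 coincide with the paper's argument. For Case~4 ($U=S_{X^*}$, $V=S_{Y^*}$) the paper argues that each $\alpha_i$ pins down $r$ of the $\beta_j$'s and then asserts ``hence this equation has $\ell/r^2$ degrees of freedom''; you instead reformulate via connected components of the hypergraph $H$ whose hyperedges are the $\cX$- and $\cY$-edges together, and flag closure of the $r\times r$ grid as the delicate point.

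Your caution is warranted: the paper's leap is exactly the closure step you isolate, and it does \emph{not} follow from the pairing condition alone. For $r=2$, $\ell=8$, take $\cX$-edges $\{0,4\},\{1,5\},\{2,6\},\{3,7\}$ with $X=\{0,1,2,3\}$, $X'=\{4,5,6,7\}$, and $\cY$-edges $\{0,2\},\{1,3\},\{4,7\},\{5,6\}$ with $Y=\{0,1,4,5\}$, $Y'=\{2,3,6,7\}$; the pairing condition holds, but $H$ is connected, so $\dim(S_{X^*}\cap S_{Y^*})=1\ne 2=\ell/r^2$. Thus the corollary, as stated in Section~\ref{section:ourTechniques} for arbitrary matchings satisfying Definition~\ref{definition:pairingCondition}, is actually false in Case~4, and the paper's proof has a gap precisely where you anticipated one.

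Your proposed fix---falling back on the concrete matchings of Definitions~\ref{definition:matchings} and~\ref{definition:matchings3}---is the right one and is easy to carry out: there the $\cX_i$-edges are ``vary coordinate $i$'' and the $\cX_j$-edges are ``vary coordinate $j$,'' so the component through any vertex $v$ is exactly $\{w:w_k=v_k\text{ for all }k\ne i,j\}$, of size $r^2$, and closure is immediate. Since every downstream use of the corollary in the paper is for these specific matchings, the later results survive; only the generality of the corollary's statement needs to be curtailed.
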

\iffull
\begin{proof}
		We distinguish between the following cases.
		\begin{description}
			\item[Case 1.] $U\in \{S_{X^{(0)}},S_{X^{(1)}},\ldots,S_{X^{(r-1)}}\}$ and $V\in \{S_{Y^{(0)}},S_{Y^{(1)}},\ldots,S_{Y^{(r-1)}}\}$. Let $B_U$ and $B_V$ be bases of $U$ and $V$, respectively, which consists of unit vectors only. Any $w\in U\cap V$ corresponds to a unique solution to the following equation, whose variables are $\{\alpha_i\}_{i=0}^{\ell/r-1}$ and $\{\beta_i\}_{i=0}^{\ell/r-1}$.
			\begin{eqnarray*}
				\sum_{u_i\in B_U}\alpha_i u_i = \sum_{v_i\in B_V}\beta_i v_i.
			\end{eqnarray*}
			Since all vectors involved in this equation are unit vectors, and since by Lemma~\ref{lemma:lOver9intersection} we have that $|B_U\cap B_V|=\ell/r^2$, it follows that \textit{exactly} $\ell/r^2$ of the coefficients in the left-hand side are equal to \textit{exactly} $\ell/r^2$ in the right-hand side, and the rest of the coefficients are zero. Therefore, this equation has \textit{exactly} $\ell/r^2$ degrees of freedom, and thus $\dim(U\cap V)=\ell/r^2$.
			\item[Case 2.] $U\in \{S_{X^{(0)}},S_{X^{(1)}},\ldots,S_{X^{(r-1)}}\}$ and $V=S_{Y^*}$. Let $B_U$ be a basis of $U$ which consists of unit vectors only. As in the previous case, any $w\in U\cap V$ corresponds to a solution of
			\begin{eqnarray*}
				\sum_{u_i\in B_U}\alpha_i u_i=\sum_{i=0}^{\ell/r-1}\beta_i(y_i^{(0)}+y_i^{(1)}+\ldots+y_i^{(r-1)}).
			\end{eqnarray*}
			By Lemma~\ref{lemma:lOver9intersection}, exactly $\ell/r^2$ of the edges $\{y_i^{(0)},y_i^{(1)},\ldots,y_i^{(r-1)}\}$ are in $B_U$, and hence, $(r-1)\ell/r^2$ of the coefficients in the right-hand side must be zero. The remaining $\ell/r^2$ coefficients may be chosen arbitrarily. Thus there are $\ell/r^2$ degrees of freedom for this equation, and hence $\dim(U\cap V)=\ell/r^2$.
			\item[Case 3.] $U\in \{S_{Y^{(0)}},S_{Y^{(1)}},\ldots,S_{Y^{(r-1)}}\}$ and $V=S_{X^*}$. This case is symmetric to Case 2.
			\item[Case 4.] $U=S_{X^*}$, $V=S_{Y^*}$. Any vector $w\in U\cap V$ corresponds to a solution to
			\begin{eqnarray*}
				\sum_{i=0}^{\ell/r-1}\alpha_i(x_i^{(0)}+x_i^{(1)}+\ldots+x_i^{(r-1)})=\sum_{i=0}^{\ell/r-1}\beta_i(y_i^{(0)}+y_i^{(1)}+\ldots+y_i^{(r-1)}).
			\end{eqnarray*}
			Any edge $\{x_i^{(0)},x_i^{(1)},\ldots,x_i^{(r-1)}\}$ is contained in either of $Y^{(0)},Y^{(1)},\ldots,Y^{(r-1)}$. Hence, for each value chosen for $\alpha_i$, exactly $r$ distinct values for the different $\beta$-s immediately follow. Hence, this equation also has $\ell/r^2$ degrees of freedom, and $\dim(U\cap V)=\ell/r^2$.
		\end{description}
	\end{proof}
	
	\begin{remark}\label{remark:missingCase}
		In the proof of Corollary~\ref{corollary:eigenspaceIntesections}, the subspace $S_{X^*}$ could equally be replaced with any subspace of the form 
		\[\Span{\left\{c_0 x_i^{(0)}+c_1 x_i^{(1)}+\cdots+c_{r-1}x_i^{(r-1)}\right\}_{i=0}^{\ell/r-1}}\]
		for any constants $c_i$ such that $c_i\ne 0$ for all~$i$, and the subspace $S_{Y^*}$ could equally be replaced with any subspace of the form 
		\[\Span{\left\{d_0 y_i^{(0)}+d_1 y_i^{(1)}+\cdots+d_{r-1}y_i^{(r-1)}\right\}_{i=0}^{\ell/r-1}}\]
		for any constants $d_i$ such that $d_i\ne 0$ for all~$i$. This is since cases 2-4 of the proof do not use the specific choice of $c_i=1$ and $d_i=1$. This fact will be used in the proof of Lemma~\ref{lemma:simDig} to follow.
	\end{remark}
	\fi
	In the sequel we use a large set of matchings in which every two matchings satisfy the pairing condition. To satisfy the nonsingular property (Definition~\ref{definition:theSubspaceCondition}), each matrix of the $\AS$-set is multiplied by a properly chosen field constant, without compromising the invariance property and the independence property. The constructions of the $\AS$-sets, which follow the general outline described in this subsection, are discussed in detail in the following sections.
\section{Construction of an MSR Code with Two Parities} \label{section:twoParities}

\subsection{Two Parities Code from One Matching}
Recall that the vertices of the complete graph $K_\ell$ are identified by all unit vectors $e_0,\ldots,e_{\ell-1}$ of length~$\ell$, $\ell=2^m$ for some integer $m$, and a matching $\cZ=(Z,Z')$ is a set of $\ell/2$ vertex-disjoint edges of $K_\ell$.
Such a matching will provide an $\AS$-set of size 2, satisfying the subspace condition. The construction of this $\AS$-set also relies on the following $\ell\times\ell$ matrices, which resemble the matrix in~\eqref{equation:A}. For $\lambda \in \bF_q^*$, consider the following two $\ell/2\times\ell/2$ matrices
\begin{eqnarray*}
A^+(\lambda)\triangleq
\begin{pmatrix}
0      & \lambda& 0 	 & 0 	  & \cdots & 0 		& 0\\
\lambda& 0      & 0 	 & 0 	  & \cdots & 0 		& 0\\
0      & 0      & 0 	 & \lambda& \cdots & 0 		& 0\\
0      & 0      & \lambda& 0 	  & \cdots & 0 		& 0\\
\vdots & \vdots & \vdots & \vdots & \ddots & \vdots & \vdots\\
0      & 0      & 0 	 & 0 	  & \cdots & 0 		& \lambda\\
0      & 0      & 0 	 & 0 	  & \cdots & \lambda& 0\\
\end{pmatrix},
A^-(\lambda)\triangleq -A^+(\lambda),
\end{eqnarray*}
and let $A(\lambda)$ be the following $\ell\times\ell$ block diagonal matrix
\begin{eqnarray}\label{equation:A_lambda}
A(\lambda)\triangleq
\begin{pmatrix}
A^+(\lambda)& 0 \\
0	 	  & A^-(\lambda)\\
\end{pmatrix}.
\end{eqnarray}
The matrix $A(\lambda)$ possesses several useful properties, which are essential in our construction. These useful properties follow from the fact that the minimal polynomial of $A(\lambda)$ is $x^2-\lambda^2$. This form of the minimal polynomial shows that the matrix $A(\lambda)$ acts as a \textit{transposition} on the vectors of $\bF_{q}^\ell$ which are not eigenvectors, up to a multiplication by $\lambda$. That is, all vectors which are not eigenvectors may be partitioned to pairs $(u,v)$ such that $uA(\lambda)=v$ and $vA(\lambda)=\lambda^2 u$, as proved in Lemma~\ref{lemma:PAP} which follows. In addition, for field with even characteristic, the matrix $A(\lambda)$ is non-diagonalizable. To the best of our knowledge, this constitutes the first construction of an $\AS$-set satisfying the subspace condition whose matrices are non-diagonalizable. Notice that the multiplication of a vector $v$ by the matrix $A(\lambda)$ switches between entries $2t$ and $2t+1$ of $v$ for all $t\in\{0,\ldots,\ell/2-1\}$, and multiplies all entries by either $\lambda$ or $-\lambda$ according to $t\le \ell/4-1$ or $t>\ell/4-1$. This is demonstrated in the following lemma.

\begin{lemma}\label{lemma:PAP}
If $P\in\bF_q^{\ell\times\ell}$ is an invertible matrix whose rows are $p_0,\ldots,p_{\ell-1}$, and $B\triangleq P^{-1}A(\lambda) P$ for some $\lambda\in\bF_q^*$, then for all $t\in\{0,\ldots,\ell/2-1\}$
\begin{eqnarray*}
p_{2t}B=
\begin{cases}
\phantom{-}\lambda p_{2t+1} &\mbox{if } t\le \ell/4-1\\
-\lambda p_{2t+1} &\mbox{if } t> \ell/4-1
\end{cases},~~~
p_{2t+1}B=
\begin{cases}
\phantom{-}\lambda p_{2t} &\mbox{if } t\le \ell/4-1\\
-\lambda p_{2t} &\mbox{if } t> \ell/4-1
\end{cases}.
\end{eqnarray*}
Furthermore, the vectors $p_{2t+1}+p_{2t}$ and $p_{2t+1}-p_{2t}$ are eigenvectors of $B$.
\end{lemma}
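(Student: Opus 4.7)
The plan is to reduce everything to a direct computation on the standard basis, then transport to the $p_i$'s via the similarity relation~\eqref{eqn:algebraicFact}.

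First I would describe the action of $A(\lambda)$ on the standard basis $e_0,\ldots,e_{\ell-1}$ directly from its definition in~\eqref{equation:A_lambda}. Since $A^+(\lambda)$ is a block-diagonal matrix consisting of $\ell/4$ copies of $\bigl(\begin{smallmatrix}0&\lambda\\\lambda&0\end{smallmatrix}\bigr)$ occupying the first $\ell/2$ coordinates, and $A^-(\lambda)=-A^+(\lambda)$ occupying the last $\ell/2$ coordinates, one immediately reads off, for $t\in\{0,\ldots,\ell/2-1\}$:
\begin{equation*}
e_{2t}A(\lambda)=\begin{cases}\phantom{-}\lambda e_{2t+1}&t\le \ell/4-1\\-\lambda e_{2t+1}&t>\ell/4-1\end{cases},\qquad
e_{2t+1}A(\lambda)=\begin{cases}\phantom{-}\lambda e_{2t}&t\le \ell/4-1\\-\lambda e_{2t}&t>\ell/4-1\end{cases}.
\end{equation*}

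Next I would invoke the algebraic fact~\eqref{eqn:algebraicFact}: since $B=P^{-1}A(\lambda)P$ and $p_0,\ldots,p_{\ell-1}$ are the rows of $P$, the matrix $B$ acts on the basis $p_0,\ldots,p_{\ell-1}$ exactly as $A(\lambda)$ acts on $e_0,\ldots,e_{\ell-1}$. Translating each of the four cases above with $e_i\mapsto p_i$ gives the claimed formulas for $p_{2t}B$ and $p_{2t+1}B$ verbatim.

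Finally, for the eigenvector statement I would simply add and subtract the two identities. For $t\le \ell/4-1$, $(p_{2t+1}+p_{2t})B=\lambda p_{2t}+\lambda p_{2t+1}=\lambda(p_{2t+1}+p_{2t})$ and $(p_{2t+1}-p_{2t})B=-\lambda(p_{2t+1}-p_{2t})$; for $t>\ell/4-1$ the roles of $\lambda$ and $-\lambda$ swap. In either case $p_{2t+1}+p_{2t}$ and $p_{2t+1}-p_{2t}$ are eigenvectors of $B$ (with eigenvalues $\pm\lambda$).

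There is really no obstacle here: the lemma is essentially a bookkeeping consequence of the explicit form of $A(\lambda)$ together with the standard similarity observation~\eqref{eqn:algebraicFact}. The only subtle point worth flagging is that in characteristic $2$ the two eigenvectors $p_{2t+1}\pm p_{2t}$ coincide, which foreshadows the non-diagonalizability of $A(\lambda)$ mentioned in the paragraph preceding the lemma; but this does not affect the validity of the identities themselves.
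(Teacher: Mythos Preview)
Your proposal is correct and follows essentially the same approach as the paper: first read off the action of $A(\lambda)$ on the standard basis from~\eqref{equation:A_lambda}, then transport this via the similarity $B=P^{-1}A(\lambda)P$ (the paper does this explicitly via $p_iP^{-1}=e_i$ rather than citing~\eqref{eqn:algebraicFact}, but it is the same computation), and finally add/subtract to obtain the eigenvectors. Your closing remark about characteristic~$2$ is a nice observation that the paper omits from the proof but discusses in the surrounding text.
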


\begin{proof}
By (\ref{equation:A_lambda}), for all $t\in\{0,\ldots,\ell/2-1\}$ we have that
\begin{eqnarray*}
e_{2t}A(\lambda)=
\begin{cases}
\phantom{-}\lambda e_{2t+1} &\mbox{if } t\le \ell/4-1\\
-\lambda e_{2t+1} &\mbox{if } t> \ell/4-1
\end{cases},~~~
e_{2t+1}A(\lambda)=
\begin{cases}
\phantom{-}\lambda e_{2t} &\mbox{if } t\le \ell/4-1\\
-\lambda e_{2t} &\mbox{if } t> \ell/4-1
\end{cases}.
\end{eqnarray*}
In addition, since $PP^{-1}=I$, it follows that $p_iP^{-1}=e_i$ for all $i\in\{0,\ldots,\ell-1\}$. Therefore, for all $t\in\{0,\ldots,\ell/2-1\}$
\begin{eqnarray}
\nonumber p_{2t}B&=&p_{2t}P^{-1}A(\lambda)P\\
\nonumber &=&e_{2t}A(\lambda)P\\
&=&\pm\lambda e_{2t+1}P=\pm \lambda p_{2t+1}\label{equation:Beven},\\
\nonumber p_{2t+1}B&=&p_{2t+1}P^{-1}A(\lambda)P\\
\nonumber &=&e_{2t+1}A(\lambda)P\\
&=&\pm\lambda e_{2t}P=\pm \lambda p_{2t}\label{equation:Bodd},
\end{eqnarray}
where the $\pm$ sign distinguishes between the cases $t\le \ell/4-1$ and $t>\ell/4-1$. To see that $p_{2t+1}+p_{2t}$ and $p_{2t+1}-p_{2t}$ are eigenvectors of $B$, notice that by adding and substracting (\ref{equation:Beven}) and (\ref{equation:Bodd}), we have that
\begin{eqnarray}
(p_{2t+1}+p_{2t})B&=&\pm\lambda(p_{2t+1}+p_{2t})\label{equation:sumIsEigenVector}\\
(p_{2t+1}-p_{2t})B&=&\mp\lambda(p_{2t+1}-p_{2t})\label{equation:diffIsEigenVector}.
\end{eqnarray}
\end{proof}

Given a matching $\cZ=(Z,Z')$, it is easily verified that the following two matrices are invertible, where $z_i,z_i'$ are vertices in the complete graph, which are identified by unit vectors of length~$\ell$.
\begin{eqnarray*}
P_Z\triangleq\begin{pmatrix}
z_0\\z_0'-z_0\\z_1\\z_1'-z_1\\ \vdots \\z_{\ell/2-1}\\z_{\ell/2-1}'-z_{\ell/2-1}
\end{pmatrix},
P_{Z'}\triangleq\begin{pmatrix}
z_0'\\z_0+z_0'\\z_1'\\z_1+z_1'\\ \vdots \\z_{\ell/2-1}'\\z_{\ell/2-1}+z_{\ell/2-1}'
\end{pmatrix}
\end{eqnarray*}

\begin{definition}\label{definition:oneMatchingCode}
Given a matching $\cZ=(Z,Z')$, let
\begin{align}
\label{equation:AZ}  A_Z   (\lambda)&\triangleq   P_Z   ^{-1}\cdot A(\lambda) \cdot P_Z   ,& S_Z   &\triangleq \Span{Z}=\Span{\{z_i\}_{i=0}^{\ell/2-1}}\\
\label{equation:AZ'} A_{Z'}(\lambda)&\triangleq   P_{Z'}^{-1}\cdot A(\lambda) \cdot P_{Z'},& S_{Z'}&\triangleq \Span{Z'}=\Span{\{z_i'\}_{i=0}^{\ell/2-1}}.
\end{align}
\end{definition}

As an immediate consequence of Lemma~\ref{lemma:PAP} and Definition~\ref{definition:oneMatchingCode}, we have the following.

\begin{corollary}\label{corollary:fromLemmaPAP}
For every $i\in\{0,\ldots,\ell/4-1\}$,
\begin{eqnarray*}
z_iA_Z(\lambda)=
\begin{cases}
\phantom{-}\lambda (z_i'-z_i) &\mbox{if } i\le \ell/4-1\\
-\lambda (z_i'-z_i) &\mbox{if } i> \ell/4-1
\end{cases},~~~
z_i'A_{Z'}(\lambda)=
\begin{cases}
\phantom{-}\lambda (z_i+z_i') &\mbox{if } i\le \ell/4-1\\
-\lambda (z_i+z_i') &\mbox{if } i> \ell/4-1
\end{cases},
\end{eqnarray*}
and,
\begin{itemize}
\item For $i\le \ell/4-1$,
\begin{itemize}
\item $z_i'$ is an eigenvector of $A_Z(\lambda)$ which corresponds to the eigenvalue $\lambda$.
\item $z_i$ is an eigenvector of $A_{Z'}(\lambda)$ which corresponds to the eigenvalue $-\lambda$.
\end{itemize}
\item For $i>\ell/4-1$,
\begin{itemize}
\item $z_i'$ is an eigenvector of $A_Z(\lambda)$ which corresponds to the eigenvalue $-\lambda$.
\item $z_i$ is an eigenvector of $A_{Z'}(\lambda)$ which corresponds to the eigenvalue $\lambda$.
\end{itemize}
\end{itemize}
\end{corollary}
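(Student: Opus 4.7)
The plan is to derive both parts of the corollary as direct instantiations of Lemma~\ref{lemma:PAP} applied to the change matrices $P_Z$ and $P_{Z'}$ from Definition~\ref{definition:oneMatchingCode}. For $A_Z(\lambda) = P_Z^{-1}A(\lambda)P_Z$, I would identify the rows of $P_Z$ with the vectors $p_0,\ldots,p_{\ell-1}$ appearing in Lemma~\ref{lemma:PAP}: explicitly, $p_{2t}=z_t$ and $p_{2t+1}=z_t'-z_t$ for each $t\in\{0,\ldots,\ell/2-1\}$. Then the formula $p_{2t}B=\pm\lambda p_{2t+1}$ from Lemma~\ref{lemma:PAP} translates immediately into $z_tA_Z(\lambda)=\pm\lambda(z_t'-z_t)$, with sign $+$ when $t\le \ell/4-1$ and $-$ when $t>\ell/4-1$. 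The analogous setup for $A_{Z'}(\lambda)$ uses $p_{2t}=z_t'$ and $p_{2t+1}=z_t+z_t'$, yielding $z_t'A_{Z'}(\lambda)=\pm\lambda(z_t+z_t')$ with the same sign convention.

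For the eigenvector assertions, I would invoke the last sentence of Lemma~\ref{lemma:PAP}, which guarantees that $p_{2t}+p_{2t+1}$ and $p_{2t+1}-p_{2t}$ are eigenvectors of the conjugated matrix, with eigenvalues given by equations~(\ref{equation:sumIsEigenVector}) and~(\ref{equation:diffIsEigenVector}) in the proof of that lemma. Under the identification used for $P_Z$, the sum $p_{2t}+p_{2t+1}=z_t+(z_t'-z_t)=z_t'$ is an eigenvector of $A_Z(\lambda)$, and its eigenvalue inherits the same sign as $p_{2t}B=\pm\lambda p_{2t+1}$, which is $+\lambda$ for $t\le \ell/4-1$ and $-\lambda$ otherwise. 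Under the identification used for $P_{Z'}$, the difference $p_{2t+1}-p_{2t}=(z_t+z_t')-z_t'=z_t$ is an eigenvector of $A_{Z'}(\lambda)$, with eigenvalue inheriting the opposite sign, i.e., $-\lambda$ for $t\le \ell/4-1$ and $+\lambda$ otherwise. These signs match exactly what the corollary states.

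I do not anticipate a substantive obstacle: the entire argument is pattern-matching against Lemma~\ref{lemma:PAP}. The only point requiring care is the sign bookkeeping across the two regimes $t\le \ell/4-1$ versus $t>\ell/4-1$, together with correctly identifying which of the vectors $z_t,\,z_t',\,z_t'-z_t,\,z_t+z_t'$ plays the role of $p_{2t}$ and which plays the role of $p_{2t+1}$ in each of the two matrices $P_Z$ and $P_{Z'}$.
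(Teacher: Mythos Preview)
Your proposal is correct and follows precisely the approach the paper intends: the paper states this corollary as ``an immediate consequence of Lemma~\ref{lemma:PAP} and Definition~\ref{definition:oneMatchingCode}'' without further proof, and your identification of the rows of $P_Z$ and $P_{Z'}$ with the $p_{2t},p_{2t+1}$ of Lemma~\ref{lemma:PAP}, together with the sign tracking via~(\ref{equation:sumIsEigenVector}) and~(\ref{equation:diffIsEigenVector}), is exactly the intended unpacking.
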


A matching $\cZ$ provides an $\AS$-set of size two as follows.

\begin{lemma}\label{lemma:onePair}
If $\cZ=(Z,Z')$ is a matching, then $\{(A_Z(\lambda),S_Z),(A_{Z'}(\lambda),S_{Z'})\}$ satisfies the subspace condition.
\end{lemma}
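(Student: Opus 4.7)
The plan is to verify each of the three properties in Definition~\ref{definition:theSubspaceCondition} (independence, invariance, and nonsingular) by explicit computation on the basis $\{z_i, z_i'\}_{i=0}^{\ell/2-1}$ of $\bF_q^\ell$, using the action of $A_Z(\lambda)$ and $A_{Z'}(\lambda)$ described in Corollary~\ref{corollary:fromLemmaPAP}. Throughout I exploit the fact that these $\ell$ unit vectors form a basis of $\bF_q^\ell$.

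First, for independence, the generators $z_i$ of $S_Z$ satisfy $z_i A_Z(\lambda) = \pm \lambda (z_i' - z_i)$ by Corollary~\ref{corollary:fromLemmaPAP}, so $S_Z A_Z(\lambda) = \Span{\{z_i' - z_i\}_{i=0}^{\ell/2-1}}$; hence $S_Z + S_Z A_Z(\lambda)$ contains every $z_i$ and every $z_i'$, and so equals $\bF_q^\ell$. The analogous statement for $S_{Z'}$ follows by the same argument with $z_i' A_{Z'}(\lambda) = \pm\lambda(z_i+z_i')$.

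Next, for invariance, I use the eigenvector structure from Corollary~\ref{corollary:fromLemmaPAP}: each $z_i$ is an eigenvector of $A_{Z'}(\lambda)$, so $z_i A_{Z'}(\lambda) = \pm\lambda z_i \in S_Z$, giving $S_Z A_{Z'}(\lambda) \subseteq S_Z$, with equality since $A_{Z'}(\lambda)$ is invertible. Symmetrically, each $z_i'$ is an eigenvector of $A_Z(\lambda)$, so $S_{Z'} A_Z(\lambda) = S_{Z'}$.

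The bulk of the work is the nonsingular property. For $k=r=2$, the only square block submatrix of~(\ref{equation:nonSystematicPart}) beyond the individual blocks (which are trivially invertible) is the full matrix $\begin{pmatrix} I & A_Z(\lambda) \\ I & A_{Z'}(\lambda) \end{pmatrix}$. Its determinant (by row reduction, subtracting the first block row from the second) equals $\det(A_{Z'}(\lambda) - A_Z(\lambda))$, so it suffices to show that $A_{Z'}(\lambda) - A_Z(\lambda)$ is invertible. To this end I will compute its action on the basis $\{z_i, z_i'\}$ using Corollary~\ref{corollary:fromLemmaPAP}: for $i \le \ell/4-1$ one obtains $z_i(A_{Z'}(\lambda) - A_Z(\lambda)) = -\lambda z_i - \lambda(z_i' - z_i) = -\lambda z_i'$ and $z_i'(A_{Z'}(\lambda) - A_Z(\lambda)) = \lambda(z_i+z_i') - \lambda z_i' = \lambda z_i$, with analogous signed relations for $i > \ell/4-1$. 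Thus $A_{Z'}(\lambda) - A_Z(\lambda)$ acts on the basis $\{z_i, z_i'\}$ as a signed permutation (up to scaling by $\pm\lambda$), hence is invertible.

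I expect the main obstacle, such as it is, to be bookkeeping of the $\pm$ signs that arise from the two blocks $A^+(\lambda)$ and $A^-(\lambda) = -A^+(\lambda)$ of $A(\lambda)$; but since the signs never cancel the scalar $\lambda \in \bF_q^*$, and every $z_i$ (resp.\ $z_i'$) is mapped to a nonzero multiple of $z_i'$ (resp.\ $z_i$), each of the three properties follows uniformly in the two index ranges $i \le \ell/4-1$ and $i > \ell/4-1$.
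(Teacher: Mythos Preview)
Your proposal is correct and follows essentially the same approach as the paper: both verify independence and invariance directly from Corollary~\ref{corollary:fromLemmaPAP}, and both establish the nonsingular property by computing $z_i(A_Z-A_{Z'})$ and $z_i'(A_Z-A_{Z'})$ (up to your sign convention $A_{Z'}-A_Z$) explicitly on the basis $\{z_i,z_i'\}$ to see that the difference acts as a signed permutation scaled by~$\pm\lambda$. The only cosmetic difference is that the paper phrases the conclusion as ``the image contains $Z\cup Z'$'' while you phrase it as ``acts as a signed permutation,'' which are equivalent here.
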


\begin{proof}
For convenience of notation, and since the proof which follows holds for each $\lambda\ne 0$, let $A_Z$ and $A_{Z'}$ denote $A_Z(\lambda)$ and $A_{Z'}(\lambda)$, respectively. We show that all properties of the subspace condition are satisfied.

To prove the independence property, notice that by Corollary~\ref{corollary:fromLemmaPAP},
\begin{eqnarray*}
S_ZA_Z&=&\Span{\{z_i'-z_i\}_{i=0}^{\ell/2-1}},\\
S_{Z'}A_{Z'}&=&\Span{\{z_i+z_i'\}_{i=0}^{\ell/2-1}},
\end{eqnarray*}
and thus, $S_ZA_Z+S_Z = S_{Z'}A_{Z'}+S_{Z'} = \bF_q^\ell$.

To prove the invariance property, notice that by Corollary~\ref{corollary:fromLemmaPAP}, $S_Z$ (resp. $S_{Z'}$) is a span of eigenvectors\footnote{Note that it does not comply with the definition of an eigenspace, since it contains vectors that correspond to distinct eigenvalues.} of $A_{Z'}$ (resp. $A_{Z}$) and hence it is $A_{Z'}$ (resp. $A_{Z}$) invariant.


To prove the nonsingular property, first notice that $A_Z,A_{Z'}$ are invertible since they are defined as a product of invertible matrices, and thus every $1\times 1$ block submatrix is invertible. Second, notice that\[\begin{pmatrix} I & I \\ A_Z & A_{Z'} \end{pmatrix}\]is invertible if and only if $A_Z-A_{Z'}$ is invertible. Since $Z\cup Z'$ is a basis of $\bF_q^\ell$, to show that $A_Z-A_{Z'}$ is invertible it suffices to show that its image contains $Z\cup Z'$.

Let $i\in\{0,\ldots,\ell/2-1\}$, and notice that by Corollary~\ref{corollary:fromLemmaPAP}, if $ i\le\ell/4-1$ then
\begin{eqnarray*}
\lambda ^{-1}z_i(A_Z-A_{Z'})&=&\phantom{-}\lambda^{-1}\left(z_iA_Z-z_iA_{Z'}\right)\\
&=&\phantom{-}\lambda^{-1}\left(\lambda(z_i'-z_i)+\lambda z_i\right)=z_i'\\
-\lambda^{-1} z_i'(A_Z-A_{Z'})&=&-\lambda^{-1}\left( z_i'A_Z-z_i'A_{Z'}\right)\\
&=&-\lambda^{-1}\left( \lambda z_i'-\lambda(z_i+z_i')\right)=z_i.\\
\end{eqnarray*}
On the other hand, if $i>\ell/4+1$, then
\begin{eqnarray*}
-\lambda^{-1} z_i(A_Z-A_{Z'})&=&-\lambda^{-1}\left( z_iA_{Z}-z_iA_{Z'}\right)\\
&=&-\lambda^{-1}\left(-\lambda(z_i'-z_i)-\lambda z_i\right)=z_i'\\
\lambda^{-1}z_i'(A_Z-A_{Z'})&=&\phantom{-}\lambda^{-1}\left( z_i'A_Z-z_i'A_{Z'}\right)\\
&=&\phantom{-}\lambda^{-1}\left(-\lambda m_i'+\lambda(z_i'+z_i)\right)=z_i.
\end{eqnarray*}
Therefore, for all $i\in\{0,\ldots,\ell/2-1\}$, the vectors $z_i$ and $z_i'$ are in the image $\image(A_Z-A_{Z'})$, which implies that $A_Z-A_{Z'}$ is of full rank.
\end{proof}

From Lemma~\ref{lemma:onePair} it is evident that any pair $(\cZ,\lambda)$ of a matching $\cZ=(Z,Z')$ and a nonzero field element $\lambda$ provides an $\AS$-set of size two. In Section~\ref{section:twoMatchings2} which follows we discuss the required relation between \textit{two} such pairs $(\cX,\lambda_x)$, $(\cY,\lambda_y)$ that allow the corresponding $\AS$-sets to be united without compromising the subspace condition.

\subsection{Two Parities Code from Two Matchings}\label{section:twoMatchings2}
To construct larger $\AS$-sets, we analyse the required relations between two \textit{distinct} pairs $(\cX,\lambda_x)$, $(\cY,\lambda_y)$ of matchings $\cX=(X,X'),~\cY=(Y,Y')$ and field elements $\lambda_x,~\lambda_y$, that allow the construction of an $\AS$-set of size four. In Lemma~\ref{lemma:twoMatchings}, which follows, we show that there exist three sufficient conditions that $(\cX,\lambda_x),(\cY,\lambda_y)$ should satisfy for this purpose. The first condition states that $\lambda_x$ and $\lambda_y$ must be distinct. The second condition, called the \textit{pairing condition}, appears in Definition~\ref{definition:pairingCondition}. The third condition, which is a more subtle one and will only be relevant in fields with odd characteristic, is that the vertices of certain edges from $\cX$ fall into distinct halves defined by the order of $\cY$, and vice versa.

Clearly, a set $\{(\cX_i,\lambda_i)\}_{i=1}^t$ such that any two pairs satisfy all of the above conditions, will provide an $\AS$-set of size $2t$. In the sequel we provide such a set of size $m$ over $\bF_q$, for any $m\in\bN$ and any $q\ge m+1$. This set will yield an $\AS$-set of size $2m$ for $q\ge m+1$, which consists of matrices of size~$2^m\times 2^m$.
%
%
%
%

\begin{lemma}\label{lemma:twoMatchings}
If $\cX=(X,X'),~\cY=(Y,Y')$ are matchings and $\lambda_x,\lambda_y$ are nonzero field elements such that
\begin{itemize}
\item[\emph{B1.}] $\lambda_x\ne\lambda y$.
\item[\emph{B2.}] $\cX$ and $\cY$ satisfy the pairing condition (see Definition~\ref{definition:pairingCondition}).
\item[\emph{B3.}] If $\lambda_x=-\lambda_y$, then for all $i\in\{0,\ldots,\ell/2-1\}$,
\begin{eqnarray*}
\mbox{if }(x_i,x_i')=(y_j,y_t)   &\mbox{ then } & i\le \ell/4-1,~ j\le \ell/4-1,\mbox{ and } t>\ell/4-1, \mbox{ and}\\
\mbox{if }(x_i,x_i')=(y_j',y_t') &\mbox{ then } & i> \ell/4-1,~j\le \ell/4-1, \mbox{ and } t>\ell/4-1.
\end{eqnarray*}
\end{itemize}
then the $\AS$-set
\begin{eqnarray*}
\left\{(A_X(\lambda_x),S_X),(A_{X'}(\lambda_x),S_{X'}),(A_Y(\lambda_y),S_Y),(A_{Y'}(\lambda_y),S_{Y'})\right\}
\end{eqnarray*}
satisfies the subspace condition.
\end{lemma}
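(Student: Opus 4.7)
The plan is to verify the three conditions of Definition~\ref{definition:theSubspaceCondition} for the four-element set $\{(A_X(\lambda_x),S_X),(A_{X'}(\lambda_x),S_{X'}),(A_Y(\lambda_y),S_Y),(A_{Y'}(\lambda_y),S_{Y'})\}$. The independence property follows immediately from Lemma~\ref{lemma:onePair} applied separately to the matchings $\cX$ and $\cY$, which also handles the two within-matching instances of the invariance and nonsingular properties.

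For the invariance property, I would partition the vertex set $\{e_0,\ldots,e_{\ell-1}\}$ into four types of size $\ell/4$ each (using Lemma~\ref{lemma:lOver9intersection}): $T_1 = X'\cap Y'$, $T_2 = X'\cap Y$, $T_3 = X\cap Y'$, $T_4 = X\cap Y$. To show $S_X A_Y = S_X$, note that $X = T_3\cup T_4$. A vertex $v\in T_3$ is some $y_j'$, which by Corollary~\ref{corollary:fromLemmaPAP} is an eigenvector of $A_Y$, so $vA_Y\in\Span{v}\subseteq S_X$. A vertex $v=y_j\in T_4$ gives $vA_Y=\pm\lambda_y(y_j'-v)$, and the pairing condition (B2) forces the $\cY$-edge $\{y_j,y_j'\}$ to be monochromatic in $\cX$, so $y_j'\in X$ and $vA_Y\in S_X$. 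The seven other cross-matching invariance statements ($S_X A_{Y'}$, $S_{X'} A_Y$, etc.) follow by the same kind of argument: in each case the vertex is either already an eigenvector of the acting matrix or the pairing condition pushes its image into the correct span.

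For the nonsingular property, the only nontrivial block submatrices are the $2\times 2$ blocks $\begin{pmatrix} I & A_i \\ I & A_j\end{pmatrix}$, whose invertibility reduces to that of $A_i-A_j$. Apart from the two within-matching differences handled by Lemma~\ref{lemma:onePair}, I need to check four cross-matching differences; I focus on $A_X(\lambda_x)-A_Y(\lambda_y)$, the remaining three being analogous. A direct computation using Corollary~\ref{corollary:fromLemmaPAP} shows that, when the basis is ordered by type as $(T_1,T_2,T_3,T_4)$, every off-diagonal entry produced by $A_X$ or $A_Y$ lands in a strictly lower-indexed type. Thus $A_X-A_Y$ is block lower triangular in this ordering, with each diagonal block itself a diagonal matrix. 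Consequently $\det(A_X-A_Y)=\prod_v (A_X-A_Y)_{vv}$ over the $\ell$ basis vectors, and each diagonal entry takes one of the four forms $\pm\lambda_x\pm\lambda_y$, with specific signs $\epsilon_i,\delta_j\in\{+1,-1\}$ determined by where $v$ sits in the two matchings.

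The crux of the proof is to show that each of these $\ell$ linear factors is nonzero. Condition B1 rules out the factor $\lambda_x-\lambda_y$; if in addition $\lambda_x\ne-\lambda_y$, then all four sign combinations are automatically nonzero. The delicate case is $\lambda_x=-\lambda_y$, possible only in odd characteristic (in characteristic~$2$ it would coincide with $\lambda_x=\lambda_y$ and contradict B1). Here condition B3 prescribes exactly which half of $\{0,\ldots,\ell/2-1\}$ the $\cX$- and $\cY$-indices of each edge fall into, according to whether the edge is of type $(y_j,y_t)$ or $(y_j',y_t')$; a short case analysis over the four types $T_1,\ldots,T_4$ then confirms that the induced signs $\epsilon_i,\delta_j$ are always forced so that the resulting factor evaluates to $\pm 2\lambda_y\ne 0$, completing the verification.
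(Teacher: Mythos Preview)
Your proof is correct. The independence and invariance arguments coincide with the paper's almost verbatim; the real point of comparison is the nonsingular property.

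For the cross-matching differences, the paper works by direct image-chasing: it computes $x_i(A_X-A_Y)$ and $x_i'(A_X-A_Y)$ in every one of the sixteen subcases indexed by whether the $\cX$-edge lies in $Y$ or $Y'$ and by the four half-interval possibilities for $i,j,t$, and then argues inductively that each $x_i,x_i'$ lies in $\image(A_X-A_Y)$, bootstrapping from the vectors that happen to be common eigenvectors. Your triangularization is a cleaner repackaging of exactly the same computation: the observation that the off-diagonal contributions always land in a type that is ``closer to being a common eigenvector'' is precisely what makes the paper's back-substitution terminate, and reading it as a block-lower-triangular matrix lets you replace the inductive image argument by $\det(A_X-A_Y)=\prod_v(\epsilon_v\lambda_x-\delta_v\lambda_y)$. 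Both approaches then reduce the $\lambda_x=-\lambda_y$ case to the same sign check against B3. What your organization buys is conceptual clarity (one structural observation instead of sixteen displayed equations); what the paper's buys is that the surviving subcases under B3 are already written out explicitly, so no further verification is needed.

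One small point worth making explicit: the specific ordering $(T_1,T_2,T_3,T_4)$ triangularizes $A_X-A_Y$, but the other three cross-matching differences need their own permutations of the four types (for $A_{X'}-A_{Y'}$, for instance, the common-eigenvector block is $X\cap Y$, so that type must come first). This is genuinely ``analogous'' and causes no trouble, but it is not a pure relabeling symmetry, so a reader expanding your sketch would need to redo the four-line type analysis for each pair.
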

\begin{proof}
For convenience, we omit the notations of $\lambda_x,\lambda_y$ from $A_X(\lambda_x),A_{X'}(\lambda_x),A_Y(\lambda_y)$, and $A_{Y'}(\lambda_y)$ (even so $\lambda_x$ and $\lambda_y$ are crucial for this proof). The independence property follows directly from Lemma~\ref{lemma:onePair}, as well as the non-singularity of any $1\times 1$ submatrix in the nonsingular property. To prove the invariance property, notice that the cases
\begin{align*}
S_XA_{X'}&=S_X      &S_YA_{Y'}&=S_Y \\
S_{X'}A_{X}&=S_{X'} &S_{Y'}A_{Y}&=S_{Y'}
\end{align*}
follow from Lemma~\ref{lemma:onePair} as well. We prove now that $S_XA_Y=S_X$, and the rest of the cases follow by symmetry.

Since $S_X=\Span{x_0,\ldots,x_{\ell/2-1}}$, a necessary and sufficient condition for $S_XA_Y=S_X$ is that  $x_iA_Y\in S_X$ for each $i\in\{0,\ldots,\ell/2-1\}$. Let $x_i\in S_X$ for some $i\in\{0,\ldots,\ell/2-1\}$. Since $\cX$ and $\cY$ are matchings over the same vertex set, we have that either $x_i\in Y$ or $x_i\in Y'$. If $x_i\in Y'$, i.e. $x_i=y_j'$ for some $j\in\{0,\ldots,\ell/2-1\}$, then by Corollary~\ref{corollary:fromLemmaPAP} and by the definition of $A_Y$ (\ref{equation:AZ}), we have that $y_j'$ is an eigenvector of $A_Y$. Therefore,
\begin{eqnarray*}
x_iA_Y=y_j'A_Y=\pm \lambda_y y_j'=\pm\lambda_y x_i\in S_X.
\end{eqnarray*}
On the other hand, if $x_i\in Y$, i.e. $x_i=y_j$ for some $j\in\{0,\ldots,\ell/2-1\}$, then by Corollary~\ref{corollary:fromLemmaPAP},
\begin{eqnarray}\label{equation:sumInSX}
x_iA_Y=y_jA_Y=\pm\lambda_y(y_j'-y_j)=\pm\lambda_yy_j'\mp\lambda_y y_j=\pm\lambda_yy_j'\mp\lambda_y x_i.
\end{eqnarray}
According to B2 (the pairing condition), we have that if $y_j\in X$, then $y_j'\in X$ as well. Therefore~(\ref{equation:sumInSX}) is a sum of two vectors in $S_X$, which implies that $x_iA_Y\in S_X$.

To prove the nonsingular property, we show that $X\cup X'\subseteq \image\left(A_X-A_Y\right)$, and the rest of the cases follow by symmetry. Since $X\cup X'$ is a basis of $\bF_q^\ell$, it will follow that $\rank(A_X-A_Y)=\ell$ as required. We split the proof to two cases as follows.


\begin{description}
\item[Case 1.]$\lambda_x\ne-\lambda_y$ (and thus $\lambda_x\ne\pm \lambda_y$ by B1). If $i\in\{0,\ldots,\ell/2-1\}$, then by A2, we have that either $(x_i,x_i')=(y_j,y_t)$ or $(x_i,x_i')=(y_j',y_t')$ for some distinct $j,t\in\{0,\ldots,\ell/2-1\}$. If $(x_i,x_i')=(y_j',y_t')$, then simple calculations that follow from Corollary~\ref{corollary:fromLemmaPAP} show that
\begin{eqnarray}
\label{equation:xIn}
x_i(A_X-A_Y)&=&
\begin{cases}
\phantom{-}\lambda_x x_i'-(\lambda_x+\lambda_y)x_i & \mbox{if } i\le \ell/4-1,j\le \ell/4-1\\
\phantom{-}\lambda_x x_i'-(\lambda_x-\lambda_y)x_i & \mbox{if } i\le \ell/4-1,j> \ell/4-1\\
-\lambda_x x_i'+(\lambda_x-\lambda_y)x_i & \mbox{if } i> \ell/4-1,j\le \ell/4-1\\
-\lambda_x x_i'+(\lambda_x+\lambda_y)x_i & \mbox{if } i> \ell/4-1,j> \ell/4-1
\end{cases}\\
\label{equation:xpIn}
x_i'(A_X-A_Y)&=&
\begin{cases}
\phantom{-}(\lambda_x -\lambda_y) x_i' & \mbox{if } i\le \ell/4-1,t\le \ell/4-1\\
\phantom{-}(\lambda_x +\lambda_y) x_i' & \mbox{if } i\le \ell/4-1,t> \ell/4-1\\
-(\lambda_x +\lambda_y) x_i' & \mbox{if } i> \ell/4-1,t\le \ell/4-1\\
-(\lambda_x -\lambda_y) x_i' & \mbox{if } i> \ell/4-1,t> \ell/4-1.
\end{cases}
\end{eqnarray}
Since $\lambda_x\ne \pm \lambda_y$, it follows by (\ref{equation:xpIn}) that $x_i'\in\image(A_X-A_Y)$, which also implies by~(\ref{equation:xIn}) that $x_i\in\image(A_X-A_Y)$.
If $(x_i,x_i')=(y_j,y_t)$, then similar calculations show that
\begin{eqnarray}
\label{equation:xIn2}
x_i(A_X-A_Y)&=&
\begin{cases}
\phantom{-}\lambda_x x_i'-(\lambda_x-\lambda_y)x_i-\lambda_y y_j' & \mbox{if } i\le \ell/4-1,j\le \ell/4-1\\
\phantom{-}\lambda_x x_i'-(\lambda_x+\lambda_y)x_i+\lambda_yy_j' & \mbox{if } i\le \ell/4-1,j> \ell/4-1\\
-\lambda_x x_i'+(\lambda_x+\lambda_y)x_i-\lambda_y y_j' & \mbox{if } i> \ell/4-1,j\le \ell/4-1\\
-\lambda_x x_i'+(\lambda_x-\lambda_y)x_i+\lambda_y y_j' & \mbox{if } i> \ell/4-1,j> \ell/4-1
\end{cases}\\
\label{equation:xpIn2}
x_i'(A_X-A_Y)&=&
\begin{cases}
\phantom{-}(\lambda_x +\lambda_y) x_i'-\lambda_y y_t' & \mbox{if } i\le \ell/4-1,t\le \ell/4-1\\
\phantom{-}(\lambda_x -\lambda_y) x_i'+\lambda_y y_t'& \mbox{if } i\le \ell/4-1,t> \ell/4-1\\
-(\lambda_x-\lambda_y) x_i'-\lambda_y y_t' & \mbox{if } i> \ell/4-1,t\le \ell/4-1\\
-(\lambda_x+\lambda_y) x_i'+\lambda_y y_t' & \mbox{if } i> \ell/4-1,t> \ell/4-1
\end{cases}.
\end{eqnarray}
Now, notice that since $x_i=y_j$ we have that $y_j\in X$. By B2,  we also have that $y_j'\in X$, and hence $y_j'=x_s$ for some $s\in\{0,\ldots,\ell/2-1\}$. We have shown earlier that if $x_s=y_j'$ then $x_s=y_j'\in\image (A_X-A_Y)$. Similarly, since $x_i'=y_t$, we have that $y_t'\in X'$, i.e. $y_t'=x_r'$ for some $r\in\{0,\ldots,\ell/2-1\}$. This implies that $x_r,x_r'\in Y'$ by the pairing condition, and thus, $x_r'=y_t'\in \image (A_X-A_Y)$.

Since $y_t'\in \image (A_X-A_Y)$, and since $\lambda_x\ne\pm\lambda_y$, it follows from~(\ref{equation:xpIn2}) that $x_i'\in \image (A_X-A_Y)$. Therefore, by~(\ref{equation:xIn2}), and since $y_j'\in \image (A_X-A_Y)$ and $\lambda_x\ne\pm\lambda_y$, it follows that $x_i\in \image (A_X-A_Y)$ as well.
\item[Case 2.] $\lambda_x=-\lambda_y$ (and thus we have to consider B3). 
The pairing condition implies that either $(x_i,x_i')=(y_j,y_t)$ or $(x_i,x_i')=(y_j',y_t')$ for some distinct $j,t\in\{0,\ldots,\ell/2-1\}$. However, by B3, most of the cases in~(\ref{equation:xIn}),(\ref{equation:xpIn}),(\ref{equation:xIn2}),(\ref{equation:xpIn2}) are impossible. Hence, if $(x_i,x_i')=(y_j',y_t')$ then  $i> \ell/4-1,j\le \ell/4-1$, and $t>\ell/4-1$, and thus
\begin{eqnarray*}
x_i(A_X-A_Y)&=&
-\lambda_x x_i'+(\lambda_x-\lambda_y)x_i \\
x_i'(A_X-A_Y)&=&(\lambda_y -\lambda_x) x_i'.
\end{eqnarray*}
Since $\lambda_x\ne \lambda_y$, we have that $x_i,x_i'\in\image(A_X-A_Y)$. If $(x_i,x_i')=(y_j,y_t)$ then $i\le \ell/4-1,j\le \ell/4-1$, and $t>\ell/4-1$, and thus
\begin{eqnarray*}
x_i(A_X-A_Y)&=&\phantom{-}\lambda_x x_i'+(\lambda_y-\lambda_x)x_i-\lambda_y y_j' \\
x_i'(A_X-A_Y)&=&
\phantom{-}(\lambda_x -\lambda_y) x_i'+\lambda_y y_t'.
\end{eqnarray*}
As in Case~1, we can prove that $y_j',y_t'\in\image(A_X-A_Y)$, and get that since $\lambda_x\ne\lambda_y$, we have that  $x_i,x_i'\in\image(A_X-A_Y)$.
\end{description}
\end{proof}
By Lemma~\ref{lemma:twoMatchings} we have that two matchings $\cX,\cY$ and two corresponding field elements $\lambda_x,\lambda_y$ that meet the requirements B1-B3, provide an $\AS$-set of size four. Therefore, a construction of a large set of pairs $(\cX_i,\lambda_i)$, such that any two pairs satisfy B1-B3, is required for a construction of a large $\AS$-set which satisfies the subspace condition.

\subsection{Construction of Matchings for Two Parities}

In the sequel we construct a set $\{(\cX_i,\lambda_i)\}_{i=0}^{m-1}$ whose elements satisfy the requirements of Lemma~\ref{lemma:twoMatchings} in pairs. We identify vertex $e_i$ of $K_\ell$ with the binary $m$-bit representation of $i$. We will use the following standard notion of a boolean cube.

\begin{definition}\label{definition:booleanCube}
Given a sequence of distinct indices $i_1,\ldots,i_k\in \{0,\ldots,m-1\}$ and a sequence of binary values $b_1,\ldots,b_k\in\{0,1\}$, the \emph{boolean cube $C(\{(i_j,b_j)\}_{j=1}^k)$}, of size $2^{m-k}$, is the set of all $m$-bit vectors over $\{0,1\}$ that have $b_j$ in entry $i_j$ for all $j=1,\ldots,k$. That is,
\begin{eqnarray*}
C(\{(i_j,b_j)\}_{j=1}^k)\triangleq \left\{x\in\{0,1\}^m~\vert~\mbox{for all } j\in[k],~x_{i_j}=b_j\right\}.
\end{eqnarray*}
We consider the elements in such a boolean cube as \emph{ordered} according to the lexicographic order (see Example~\ref{example:booleanCube} below), that is, we consider a boolean cube as a \emph{sequence} rather than a~\emph{set}.
\end{definition}

\begin{example}\label{example:booleanCube}
If $m=4$ then the boolean cube $C(\{(1,1),(2,1)\})$ is the set $\{v_1,v_2,v_3,v_4\}$ such that $(v_1,v_2,v_3,v_4)=(0110,0111,1110,1111)$.
\end{example}

We begin by defining a set of matchings that meets the pairing condition.

\begin{definition}\label{definition:matchings}
For any $m\in\bN$, define $m$ matchings $\{\cX_i=(X_i,X_i')\}_{i=0}^{m-1}$ as follows
\begin{eqnarray*}
\cX_{2t}:~~~
\begin{cases}
X_{2t} = C(\{(2t,0),(2t+1,0)\})\circ C(\{(2t,0),(2t+1,1)\})\\
X_{2t}' = C(\{(2t,1),(2t+1,0)\})\circ C(\{(2t,1),(2t+1,1)\}),
\end{cases}\\
\cX_{2t+1}:~~~
\begin{cases}
X_{2t+1} = C(\{(2t,0),(2t+1,0)\})\circ C(\{(2t,1),(2t+1,0)\})\\
X_{2t+1}' = C(\{(2t,0),(2t+1,1)\})\circ C(\{(2t,1),(2t+1,1)\})
\end{cases}
\end{eqnarray*}
where $t\in\{0,\ldots,\floor{\frac{m}{2}}-1\}$, and $\circ$ indicates the concatenation of sequences. If $m$ is odd, we add the matching
\begin{eqnarray*}
\cX_{m-1}:~~~
\begin{cases}
X_{m-1} = C(\{(m-1,0)\})\\
X_{m-1}' = C(\{(m-1,1)\})
\end{cases}.
\end{eqnarray*}
\end{definition}

\begin{example}
If $m=4$, then
\begin{eqnarray*}
\cX_{0}:~~~
\begin{cases}
X_{0} = C(\{(0,0),(1,0)\})\circ C(\{(0,0),(1,1)\})\\
X_{0}' = C(\{(0,1),(1,0)\})\circ C(\{(0,1),(1,1)\})
\end{cases}\\
\cX_{1}:~~~
\begin{cases}
X_{1} = C(\{(0,0),(1,0)\})\circ C(\{(0,1),(1,0)\})\\
X_{1}' = C(\{(0,0),(1,1)\})\circ C(\{(0,1),(1,1)\})
\end{cases}\\
\cX_{2}:~~~
\begin{cases}
X_{2} = C(\{(2,0),(3,0)\})\circ C(\{(2,0),(3,1)\})\\
X_{2}' = C(\{(2,1),(3,0)\})\circ C(\{(2,1),(3,1)\})
\end{cases}\\
\cX_{3}:~~~
\begin{cases}
X_{3} = C(\{(2,0),(3,0)\})\circ C(\{(2,1),(3,0)\})\\
X_{3}' = C(\{(2,0),(3,1)\})\circ C(\{(2,1),(3,1)\}),
\end{cases}
\end{eqnarray*}
which implies that
\begin{eqnarray*}
\cX_{0}:~~~
\begin{cases}
X_{0} = (\mathbf{00}00,\mathbf{00}01,\mathbf{00}10,\mathbf{00}11,\mathbf{01}00,\mathbf{01}01,\mathbf{01}10,\mathbf{01}11)\\
X_{0}' = (\mathbf{10}00,\mathbf{10}01,\mathbf{10}10,\mathbf{10}11,\mathbf{11}00,\mathbf{11}01,\mathbf{11}10,\mathbf{11}11)
\end{cases}\\
\cX_{1}:~~~
\begin{cases}
X_{1} = (\mathbf{00}00,\mathbf{00}01,\mathbf{00}10,\mathbf{00}11,\mathbf{10}00,\mathbf{10}01,\mathbf{10}10,\mathbf{10}11)\\
X_{1}' = (\mathbf{01}00,\mathbf{01}01,\mathbf{01}10,\mathbf{01}11,\mathbf{11}00,\mathbf{11}01,\mathbf{11}10,\mathbf{11}11)
\end{cases}\\
\cX_{2}:~~~
\begin{cases}
X_{2} = (00\mathbf{00},01\mathbf{00},10\mathbf{00},11\mathbf{00},00\mathbf{01},01\mathbf{01},10\mathbf{01},11\mathbf{01})\\
X_{2}' = (00\mathbf{10},01\mathbf{10},10\mathbf{10},11\mathbf{10},00\mathbf{11},01\mathbf{11},10\mathbf{11},11\mathbf{11})
\end{cases}\\
\cX_{3}:~~~
\begin{cases}
X_{3} = (00\mathbf{00},01\mathbf{00},10\mathbf{00},11\mathbf{00},00\mathbf{10},01\mathbf{10},10\mathbf{10},11\mathbf{10}
)\\
X_{3}' = (00\mathbf{01},01\mathbf{01},10\mathbf{01},11\mathbf{01},00\mathbf{11},01\mathbf{11},10\mathbf{11},11\mathbf{11}),
\end{cases}
\end{eqnarray*}
where the values in bold indicate the fixed entries in each boolean cube.
\end{example}

Before we choose a field element for each matching, which satisfies B1-B3 of Lemma~\ref{lemma:twoMatchings}, we prove that the matchings from Definition~\ref{definition:matchings} satisfy the pairing condition.
%

\begin{lemma}\label{lemma:pairingCondition}
Each two distinct matchings $\cX_i,\cX_j$ from Definition~\ref{definition:matchings} satisfy the pairing condition.
\end{lemma}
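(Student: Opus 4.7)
The plan is to exhibit, for each matching $\cX_i$, a distinguished coordinate $b(i) \in \{0, \ldots, m-1\}$ that controls both the color classes and the edge structure. Specifically, I claim that $X_i$ is precisely the set of $m$-bit vectors with bit $b(i)$ equal to $0$, and $X_i'$ the set with bit $b(i)$ equal to $1$; moreover, the edges of $\cX_i$ are precisely the pairs of vectors differing only in bit $b(i)$. Reading off Definition~\ref{definition:matchings}, I would take $b(2t) = 2t$, $b(2t+1) = 2t+1$, and in the odd-$m$ case $b(m-1) = m-1$. The map $i \mapsto b(i)$ is clearly injective on the index set of matchings.

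To verify the claim for $\cX_{2t}$, I would unpack Definition~\ref{definition:matchings}: both boolean cubes comprising $X_{2t}$ fix bit $2t$ to $0$ while the two comprising $X_{2t}'$ fix it to $1$, so the color classes are as claimed. The $k$-th element of the first cube of $X_{2t}$ has $(2t, 2t+1) = (0, 0)$ with remaining bits determined by the lexicographic enumeration indexed by $k$, and the $k$-th element of the first cube of $X_{2t}'$ has $(2t, 2t+1) = (1, 0)$ with the same remaining bits; thus these two vectors differ only in bit $2t$. The same check applies to the second cubes (with $(2t, 2t+1) \in \{(0, 1), (1, 1)\}$), and an analogous argument handles $\cX_{2t+1}$ (where the flipped bit is $2t+1$) and $\cX_{m-1}$. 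This bookkeeping is the main computation, but it is routine.

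Given this characterization, the pairing condition is immediate. For distinct indices $i, j$, pick any edge $\{u, v\}$ of $\cX_i$; by the above, $u$ and $v$ agree in every coordinate except $b(i)$, and since $b$ is injective they agree in coordinate $b(j)$, so both vectors lie in the same color class of $\cX_j$ (namely, in $X_j$ if the shared value at bit $b(j)$ is $0$, and in $X_j'$ otherwise). The reverse direction follows by swapping $i$ and $j$. The only obstacle is purely notational: correctly translating the concatenation-of-cubes definition into the simple bit-flip picture; once that translation is done, the pairing condition is a one-line corollary of the injectivity of $b$.
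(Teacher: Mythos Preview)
Your proposal is correct and follows essentially the same approach as the paper's proof. Note that your function $b$ is simply the identity ($b(i)=i$), so the extra notational layer is unnecessary; the paper proceeds directly by observing that in every edge $(x_{i,t},x_{i,t}')$ of $\cX_i$ the two endpoints differ only in bit~$i$, and hence agree in bit~$j$ for $j\ne i$, placing them in the same color class of $\cX_j$.
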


\begin{proof}
Denote the elements of the matchings $\cX_i,\cX_j$ as
\begin{eqnarray*}
X_i&=&(x_{i,0},\ldots,x_{i,\ell/2-1})\\
X'_i&=&(x_{i,0}',\ldots,x_{i,\ell/2-1}')\\
X_j&=&(x_{j,0},\ldots,x_{j,\ell/2-1})\\
X_j'&=&(x_{j,0}',\ldots,x_{j,\ell/2-1}').
\end{eqnarray*}
By Definition~\ref{definition:matchings}, it is evident that in every edge $(x_{i,t},x_{i,t}')\in\cX_i$, the $i$-th entry of $x_{i,t}$ is 0, the $i$-th entry of $x_{i,t}'$ is 1, and the rest of the entries are identical. Similarly, in every edge $(x_{j,t},x_{j,t}')\in\cX_j$, the $j$-th entry of $x_{j,t}$ is 0, the $j$-th entry of $x_{j,t}'$ is 1, and the rest of the entries are identical. Therefore, for every edge $(x_{i,t},x_{i,t}')\in\cX_i$, if the $j$-th entry of both $x_{i,t}$ and $x_{i,t}'$ is 0, then $x_{i,t},x_{i,t}'\in X_j$, and if it is 1, then $x_{i,t},x_{i,t}'\in X_j'$. Therefore, $X_j$ is a union of edges from $\cX_i$. The proof that $X_i$ is a union of edges from $\cX_j$ is similar.
\end{proof}

We now turn to choose a proper nonzero field element for every matching from Definition~\ref{definition:matchings}. This choice must comply with requirements B1 and B3 of Lemma~\ref{lemma:twoMatchings}. Note that if $q$ is even, then B3 follows from B1 (vacuously). Hence, if the field characteristics is $2$, the choice of field elements is straightforward.

\begin{lemma}
If $q\ge m+1$ is a power of two, then by any arbitrary choice of pairwise distinct elements from $\bF_q^*$ for the $m$ matchings from Definition~\ref{definition:matchings}, the resulting $\AS$-set satisfies \emph{B1-B3} from Lemma~\ref{lemma:twoMatchings}.
\end{lemma}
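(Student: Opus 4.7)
The plan is to verify each of the three conditions B1, B2, B3 of Lemma~\ref{lemma:twoMatchings} separately, and observe that in characteristic $2$ the third condition becomes trivial. The arithmetic hypothesis $q \geq m+1$ enters only to guarantee that $\bF_q^*$ contains at least $m$ distinct nonzero elements, which is necessary merely to \emph{make} an assignment of pairwise distinct $\lambda_i$'s to the $m$ matchings.

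First I would dispatch B1 and B2. Condition B1 asks that $\lambda_i \neq \lambda_j$ whenever $i \neq j$, and this is exactly the assumption of the lemma. Condition B2 asserts the pairing condition between every two of the matchings from Definition~\ref{definition:matchings}, and this was already established independently of the field choice in Lemma~\ref{lemma:pairingCondition}. So neither of these two conditions constrains the $\lambda_i$'s beyond distinctness.

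The key observation is what happens to B3. Recall that B3 is phrased as an implication whose premise is $\lambda_x = -\lambda_y$, and whose conclusion is a combinatorial statement about the positions of the edges of $\cX$ and $\cY$. Because $\ch \bF_q = 2$, every element of $\bF_q$ is its own additive inverse, so $-\lambda_y = \lambda_y$. Thus the premise $\lambda_x = -\lambda_y$ collapses to $\lambda_x = \lambda_y$, which is ruled out by B1 for any two distinct matchings. Hence the hypothesis of B3 is never satisfied, and B3 holds vacuously for every pair of matchings.

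I do not expect any obstacles here. The only part of the statement that is not automatic is verifying that the arithmetic size constraint $q \geq m+1$ suffices to select $m$ distinct elements of $\bF_q^*$, which is immediate since $|\bF_q^*| = q-1 \geq m$. The core of the argument is simply the characteristic-$2$ coincidence $\lambda = -\lambda$, which eliminates the subtler case analysis (Case~2 in the proof of Lemma~\ref{lemma:twoMatchings}) that will need to be treated separately when $q$ is odd.
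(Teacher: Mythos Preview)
Your proof is correct and matches the paper's own argument essentially line for line: B1 from distinctness, B2 from Lemma~\ref{lemma:pairingCondition}, and B3 vacuously because in characteristic~$2$ the premise $\lambda_x=-\lambda_y$ reduces to $\lambda_x=\lambda_y$, which B1 forbids. The paper compresses the B3 step into the single sentence ``since $q$ is even, property B3 is implied by property B1,'' but the content is identical.
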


\begin{proof}
Since the assigned elements are distinct, every two matchings satisfy property B1 of Lemma~\ref{lemma:twoMatchings}. According to Lemma~\ref{lemma:pairingCondition}, every two matchings satisfy the pairing condition (B2) as well. Since $q$ is even, property B3 is implied by property B1.
\end{proof}

If $q$ is odd, more care is needed for the mapping of nonzero field elements to the matchings. We do this by choosing field elements $\lambda$ and $-\lambda$ for two adjacent matchings $\cX_{2t},\cX_{2t+1}$.

\begin{lemma}\label{lemma:oddChar}
Let $q\ge m+1$ be a power of an odd prime. Assume we are given an arbitrary choice of pairwise distinct elements from $\bF_q^*$ to the $m$ matchings from Definition~\ref{definition:matchings}, such that for $\cX_{2t},\cX_{2t+1}$, some elements $\lambda,-\lambda$ are chosen for every $t\in\{0,\ldots,\floor{\frac{m}{2}}-1\}$. Then, the resulting $\AS$-set satisfies \emph{B1-B3} from Lemma~\ref{lemma:twoMatchings}.
\end{lemma}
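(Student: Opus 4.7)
The plan is to verify properties B1, B2, B3 of Lemma~\ref{lemma:twoMatchings} for every pair of distinct matchings from our family, then invoke Lemma~\ref{lemma:twoMatchings} for each such pair.

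Property B1 is immediate from the hypothesis that the chosen field elements are pairwise distinct, and property B2 is the content of Lemma~\ref{lemma:pairingCondition}. The main work is B3. The first observation is that B3 imposes a nontrivial requirement only when $\lambda_x = -\lambda_y$. By our assignment, for each index $s \in \{0,\ldots,\floor{m/2}-1\}$ the elements $\lambda_s$ and $-\lambda_s$ are attached to $\cX_{2s}$ and $\cX_{2s+1}$; by pairwise distinctness, no other matching carries $\pm\lambda_s$. Hence we only need to check B3 for the $\floor{m/2}$ pairs $(\cX_{2s}, \cX_{2s+1})$. In particular, when $m$ is odd, the element attached to $\cX_{m-1}$ is, by distinctness, neither $\lambda_s$ nor $-\lambda_s$ for any $s$, so every pair involving $\cX_{m-1}$ satisfies B3 vacuously.

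Fix a pair $(\cX_{2s},\cX_{2s+1})$ and write $\cX = \cX_{2s}$, $\cY = \cX_{2s+1}$. Reading off Definition~\ref{definition:matchings}, the $i$-th edge of $\cX$ differs only at coordinate $2s$: for $i \le \ell/4 - 1$ the endpoints $x_i, x_i'$ have bits $(0,0)$ and $(1,0)$ at positions $(2s,2s+1)$ respectively, and for $i > \ell/4 - 1$ they have bits $(0,1)$ and $(1,1)$ respectively. On the other hand, the matching $\cY$ is structured so that membership in $Y$ versus $Y'$ is controlled by the bit at position $2s+1$, while position in the first versus second half of $Y$ (and of $Y'$) is controlled by the bit at position $2s$. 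A direct case check then yields exactly the two alternatives of B3: for $i \le \ell/4 - 1$ one has $(x_i,x_i') = (y_j, y_k)$ with $j \le \ell/4 - 1$ and $k > \ell/4 - 1$; for $i > \ell/4 - 1$ one has $(x_i,x_i') = (y_j',y_k')$ with $j \le \ell/4 - 1$ and $k > \ell/4 - 1$.

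The only mild obstacle is confirming the index correspondence between endpoints of an edge in $\cX$ and their positions in the lists $Y,Y'$. This reduces to the observation that every cube in Definition~\ref{definition:matchings} is ordered lexicographically, and that the cubes appearing in $\cX_{2s}$ and $\cX_{2s+1}$ differ only in which of the two coordinates $2s,2s+1$ is held fixed. Once the lexicographic order is unwound, the two cases above fall out of the definitions, completing the proof.
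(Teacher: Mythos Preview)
Your proposal is correct and follows essentially the same approach as the paper: B1 is immediate from distinctness, B2 is Lemma~\ref{lemma:pairingCondition}, and for B3 one reduces to the pairs $(\cX_{2t},\cX_{2t+1})$ and reads off the relevant bit patterns from Definition~\ref{definition:matchings}. The paper carries out the bit-level case analysis slightly more explicitly (naming the boolean cubes in which each vertex lies), but your sketch captures exactly the same argument.
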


\begin{proof}
For every two distinct matchings, requirement B1 of Lemma~\ref{lemma:twoMatchings} is trivially satisfied, and requirement B2 is satisfied by Lemma~\ref{lemma:pairingCondition}. To prove B3, let $\lambda_i=-\lambda_j$ be two field elements which are chosen for two matchings $\cX_i,\cX_j$. Without loss of generality (w.l.o.g.) assume that $i=2t$ and $j=2t+1$ for some $t\in\{0,\ldots,\floor{\frac{m}{2}}-1\}$.

Let $(x_{2t,s},x_{2t,s}')$ be an edge in $\cX_{2t}$, which implies that the $(2t)$-th bit of $x_{2t,s}$ is~0 and the $(2t)$-th bit of $x_{2t,s}'$ is~1. To prove B3, we must show that
\begin{eqnarray*}
\mbox{if }(x_{2t,s},x_{2t,s}')=(x_{2t+1,u},x_{2t+1,r})   &\mbox{ then } & s\le \ell/4-1,~ u\le \ell/4-1,\mbox{ and } r>\ell/4-1\mbox{, and}\\
\mbox{if }(x_{2t,s},x_{2t,s}')=(x_{2t+1,u}',x_{2t+1,r}') &\mbox{ then } & s> \ell/4-1,~ u\le \ell/4-1, \mbox{ and } r>\ell/4-1.
\end{eqnarray*}
If $(x_{2t,s},x_{2t,s}')=(x_{2t+1,u},x_{2t+1,r})$ for some $u,r\in\{0,\ldots,\ell/2-1\}$, it follows that the $(2t)$-th bit of $x_{2t+1,s}$ and $x_{2t+1,s}'$ is~0. Therefore
\begin{eqnarray*}
x_{2t,s}=x_{2t+1,u}&\in & C(\{(2t,0),(2t+1,0)\})\\
x_{2t,s}'=x_{2t+1,r}&\in & C(\{(2t,1),(2t+1,0)\}),
\end{eqnarray*}
and hence, by the definition of $\cX_{2t+1}$ (Definition~\ref{definition:matchings}), it follows that $u\le \ell/4-1$ and $r>\ell/4-1$. In addition, by the definition of $\cX_{2t}$ it follows that $s\le \ell/4-1$.

If $(x_{2t,s},x_{2t,s}')=(x_{2t+1,u}',x_{2t+1,r}')$ for some $u,r\in\{0,\ldots,\ell/2-1\}$, it follows that the $(2t)$-th bit of $x_{2t+1,s}$ and $x_{2t+1,s}'$ is~1. Therefore
\begin{eqnarray*}
x_{2t,s}=x_{2t+1,u}'&\in & C(\{(2t,0),(2t+1,1)\})\\
x_{2t,s}'=x_{2t+1,r}'&\in & C(\{(2t,1),(2t+1,1)\}),
\end{eqnarray*}
and hence, by the definition of $\cX_{2t+1}$, it follows that $u\le \ell/4-1$ and $r>\ell/4-1$. In addition, by the definition of $\cX_{2t}$ it follows that $s> \ell/4-1$.
\end{proof}

The main construction of this section is summarized in the following theorem.

\begin{theorem}\label{theorem:main}
If $m$ is a positive integer and $q\ge m+1$ is a prime power, then there exists an explicitly defined $\AS$-set $\bC$ of size $2m$ and $2^m\times 2^m$ matrices over $\bF_q$, which satisfies the subspace condition.
\end{theorem}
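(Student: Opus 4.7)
My plan is to build $\bC$ from the $m$ matchings of Definition~\ref{definition:matchings} paired with carefully chosen scalars, and then reduce the global subspace condition to the pairwise lemmas already established. Explicitly, I would take
\[
\bC = \bigcup_{i=0}^{m-1} \left\{\bigl(A_{X_i}(\lambda_i), S_{X_i}\bigr),\,\bigl(A_{X_i'}(\lambda_i), S_{X_i'}\bigr)\right\}
\]
for suitable nonzero scalars $\lambda_0,\ldots,\lambda_{m-1}\in\bF_q^*$. By construction $|\bC|=2m$ and every matrix is $2^m\times 2^m$ over $\bF_q$. For $r=2$ parities, the independence property is a per-element statement, while the invariance and nonsingular properties reduce to pairwise checks between elements of $\bC$: the nonsingular property only requires that $1\times 1$ and $2\times 2$ block submatrices of~\eqref{equation:nonSystematicPart} be invertible, and any $2\times 2$ submatrix involves at most two distinct matrices. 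Hence it suffices to verify Lemma~\ref{lemma:onePair} for elements coming from a common matching and Lemma~\ref{lemma:twoMatchings} for elements coming from two distinct matchings.

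The crucial step is the assignment of the scalars, which must satisfy conditions B1--B3 of Lemma~\ref{lemma:twoMatchings} for every pair of matchings. Condition B2 is automatic by Lemma~\ref{lemma:pairingCondition}. If $q$ is even, I would pick any $m$ pairwise distinct elements of $\bF_q^*$, which is feasible since $q-1\ge m$; condition B3 is vacuous in characteristic two, so B1 alone suffices. If $q$ is odd, I would assign the scalars $\lambda_t,-\lambda_t$ to each adjacent pair of matchings $\cX_{2t},\cX_{2t+1}$, choosing the $\lambda_t$ so that all of the $\pm\lambda_t$ together are pairwise distinct, and if $m$ is odd I would additionally assign to $\cX_{m-1}$ an element $\mu$ distinct from all $\pm\lambda_t$. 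In total this consumes at most $m$ elements of $\bF_q^*$, which is possible since $q-1\ge m$; Lemmas~\ref{lemma:pairingCondition} and~\ref{lemma:oddChar} then deliver B1--B3 for every pair.

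Combining the pieces, I would conclude by a case analysis on pairs of elements in $\bC$: two elements arising from a common matching are handled directly by Lemma~\ref{lemma:onePair}, while any two elements from distinct matchings $\cX_i,\cX_j$ sit inside the four-element $\AS$-set covered by Lemma~\ref{lemma:twoMatchings}, whose hypotheses hold by the even- or odd-characteristic assignment above. The main obstacle is the odd-characteristic case: an unrestricted distinct assignment of scalars could create a coincidence $\lambda_i=-\lambda_j$ for non-adjacent matchings $\cX_i,\cX_j$, and B3 would then generally fail because the vertex-ordering asymmetry it demands need not hold across non-adjacent matchings. The structured $\pm$-pairing of scalars above precisely confines this coincidence to adjacent matchings $\cX_{2t},\cX_{2t+1}$, which is exactly the configuration for which Lemma~\ref{lemma:oddChar} verifies B3, completing the verification.
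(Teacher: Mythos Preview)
Your proposal is correct and follows essentially the same approach as the paper: build $\bC$ from the matchings of Definition~\ref{definition:matchings}, assign distinct scalars (with the $\pm$ pairing on adjacent matchings in odd characteristic), and reduce the global subspace condition to the pairwise Lemmas~\ref{lemma:onePair} and~\ref{lemma:twoMatchings} via Lemmas~\ref{lemma:pairingCondition} and~\ref{lemma:oddChar}. Your explicit observation that for $r=2$ the nonsingular property only involves $1\times 1$ and $2\times 2$ block submatrices, so the check is genuinely pairwise, makes the reduction more transparent than in the paper's terser proof.
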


\begin{proof}
Let $\{\cX_i=(X_i,X_i')\}_{i=0}^{m-1}$ be the set of matchings from Definition~\ref{definition:matchings}, which  by Lemma~\ref{lemma:pairingCondition}, satisfies the pairing condition (Definition~\ref{definition:pairingCondition}). If $q$ is even, then let $\lambda_0,\ldots,\lambda_{m-1}$ be distinct elements in $\bF_q^*$, and let
\begin{eqnarray}\label{equation:code}
\bC\triangleq \bigcup_{i=0}^{m-1} \left\{ (A_{X_i}(\lambda_i),S_{X_i}),(A_{X_i'}(\lambda_i),S_{X_i'})\right\},
\end{eqnarray}
where $(A_{X_i}(\lambda_i),S_{X_i}),(A_{X_i'}(\lambda_i),S_{X_i'})$ were defined in Lemma~\ref{lemma:onePair}. Since conditions B1-B3 of Lemma~\ref{lemma:twoMatchings} are met with respect to every two matchings and their respective field elements, it follows that~$\bC$ satisfies the subspace condition.

If $q$ is odd, let $\lambda_0,\ldots,\lambda_{m-1}$ be distinct elements in $\bF_q^*$ such that $\lambda_{2t}=-\lambda_{2t+1}$ for every $t\in\{0,\ldots,\floor{\frac{m}{2}}-1\}$. Define~$\bC$ in a similar way to the one defined in (\ref{equation:code}). Conditions B1 and B2 are satisfied as in the case of an even $q$. Condition B3 is satisfied by Lemma~\ref{lemma:oddChar}, and therefore $\bC$ satisfies the subspace condition in this case as well.
\end{proof}

Notice that since each repair subspace in Theorem~\ref{theorem:main} contains a basis of unit vectors, it follows that the resulting code has the access-optimal property (see Section~\ref{section:subspaceCondition}). Moreover, it is readily verified that the resulting code attains the sub-packetization bound for access-optimal codes~\eqref{eq:access-optimal-k}.

\section{Construction of an MSR Code with Three Parities}\label{section:construction3}

In this section we construct MSR codes with three parities using the framework mentioned in Subsection~\ref{section:ourTechniques}.
The size of the matrices is $\ell\times\ell$, where $\ell=3^m$ for some integer $m$. This construction requires that all three roots of unity of order three lie in the base field (which implies the necessary condition $3|q-1$). If $q$ is odd we require that $q\ge 6m+1$ and if $q$ is even we require that $q\ge 3m+1$. As the roots of unity of order three play an important role in this section, recall the following properties of these roots, some of which can be generalized for every set of roots of unity of any order.

\begin{lemma}\label{lemma:rootsOfUnity}
If $q$ is a prime power such that $3|q-1$, then $\bF_q$ contains three distinct roots of unity of order three $1,\gamma_1,\gamma_2$, which satisfy $1+\gamma_1+\gamma_2=0$, $\gamma_1^2=\gamma_2$, and $\gamma_2^{-1}=\gamma_1$.
\end{lemma}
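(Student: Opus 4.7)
The plan is to use basic properties of the multiplicative group $\bF_q^*$ together with the factorization of $x^3-1$. Since $\bF_q^*$ is a cyclic group of order $q-1$, and $3\mid q-1$, there exists a unique cyclic subgroup of order $3$ inside $\bF_q^*$. This subgroup consists of exactly the three solutions of $x^3=1$ in $\bF_q$, giving three distinct roots of unity of order three.

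Next, I would factor
\begin{equation*}
x^3-1=(x-1)(x^2+x+1)
\end{equation*}
over $\bF_q$. The root $x=1$ is obvious, and the remaining two roots $\gamma_1,\gamma_2\in\bF_q$ of $x^2+x+1$ exist in $\bF_q$ by the previous paragraph. Applying Vieta's formulas to $x^2+x+1$ yields $\gamma_1+\gamma_2=-1$ and $\gamma_1\gamma_2=1$, which immediately gives the identities $1+\gamma_1+\gamma_2=0$ and $\gamma_2^{-1}=\gamma_1$.

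Finally, to obtain $\gamma_1^2=\gamma_2$, I would use that $\gamma_1$ has order $3$, so $\gamma_1^3=1$, hence $\gamma_1\cdot\gamma_1^2=1$, which combined with $\gamma_1\gamma_2=1$ and the uniqueness of inverses in a field yields $\gamma_1^2=\gamma_2$. No step here looks difficult; the only subtlety is verifying that the three cube roots of unity are actually distinct, which follows from the fact that $\gcd(x^3-1,3x^2)=1$ over $\bF_q$ whenever $\operatorname{char}(\bF_q)\ne 3$, a condition which is automatic since $3\mid q-1$ forces $q\not\equiv 0\pmod 3$.
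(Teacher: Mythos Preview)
Your proof is correct and follows essentially the same approach as the paper: both arguments use the subgroup structure of $\bF_q^*$ to obtain the three distinct cube roots of unity and then derive the algebraic identities from standard properties. Your version is a bit more self-contained---you invoke the cyclicity of $\bF_q^*$ and Vieta's formulas on $x^2+x+1$ explicitly, whereas the paper simply cites the Sylow theorems for existence and the well-known fact that roots of unity sum to zero---but the underlying reasoning is the same.
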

\iffull
\begin{proof}
The existence of all roots of unity in $\bF_q$ is a consequence of the Sylow Theorems~\cite[Section XII.5]{Algebra}. In addition, it is widely known that the sum of all roots of unity of any order is~0~\cite[Chapter~2, Ex.~2.49]{LN97-FiniteFields}. The other properties follow from the fact that $\{1,\gamma_1,\gamma_2\}$ is a multiplicative subgroup of~$\bF_q^*$.
\end{proof}
\fi

From now on we assume that $3|q-1$, and $1,\gamma_1,\gamma_2$ are the three roots of unity of order three. Notice that this necessary condition rules out the possibility of using fields with characteristic~3.

Proving the nonsingular property for three parities becomes more involved, since we must show that any $1\times 1$, $2\times 2$, and $3\times 3$ block submatrix of~\eqref{equation:nonSystematicPart} is invertible. Fortunately, showing that any $1\times 1$ block submatrix (that is, an entry in~\eqref{equation:nonSystematicPart}) is invertible is trivial in our construction. Moreover, assuming that any entry of~\eqref{equation:nonSystematicPart} is invertible, and that submatrices of the form
\begin{equation*}
\begin{pmatrix}
I & A_i \\
I & A_j
\end{pmatrix}
\end{equation*}
are invertible, by using block-row operations\footnote{The three standard block operations are interchanging two block rows (columns), multiplying a block row (column) from the left (right) by a non-singular matrix, and multiplying a block row (column) by a matrix from the left (right) and adding it to another row.} we have that matrices of the form 
\begin{equation*}
\begin{pmatrix}
A_i & A_i^2 \\
A_j & A_j^2
\end{pmatrix}
\end{equation*}
are invertible as well. That is, by multiplying the top row of the latter matrix by $A_i^{-1}$ and multiplying the bottom row by $A_j^{-1}$, we get the former matrix.
Hence, for three parities, to show that an $\AS$-set satisfies the nonsingular it suffices to show the following three conditions.
\vspace{0.5cm}

\large\noindent\textbf{Conditions for the nonsingular property:} (three parities)
\normalsize
\begin{enumerate}
\item For all $i,j\in[k],i\ne j$, the matrix $\begin{pmatrix}I & A_i\\ I& A_j\end{pmatrix}$ is invertible.\label{condition:rankMetric}
\item For all $i,j\in[k], i\ne j$, the matrix $\begin{pmatrix}I & A_i^2\\ I& A_j^2\end{pmatrix}$ is invertible.\label{condition:squareRankMetric}
\item For all distinct $i,j,t\in[k]$, the matrix $\begin{pmatrix}I & A_i & A_i^2\\ I& A_j&A_j^2 \\ I & A_t &A_t^2\end{pmatrix}$ is invertible.\label{condition:3x3}
\end{enumerate}



\subsection{Three Parities from One Matching}
Recall that in Section~\ref{section:twoParities}, every matching $\cZ$ (Definition~\ref{definition:matching}) provided an $\AS$-set $(A_Z,S_Z),(A_{Z'},S_{Z'})$, where $S_Z$ is an eigenspace of $A_{Z'}$ and $S_{Z'}$ is an eigenspace of $A_Z$. Later on, we added together $\AS$-sets which were defined by different matchings satisfying the pairing condition (Definition~\ref{definition:pairingCondition}). For three parities, we consider the natural generalization of matchings in the complete \textit{3-unifrom hypergraph}.

Similarly to Section~\ref{section:twoParities}, this construction will rely on $\ell\times\ell$ matrices whose minimal polynomial is $x^3-\lambda^3$ for some $\lambda\in\bF_q^*$. All the matrices in the $\AS$-set will be similar to the matrix $A$~\eqref{equation:A}, which for $r=3$ takes the form of

\begin{eqnarray}
A\triangleq
\begin{pmatrix}\label{equation:A3}
0       & 0		  & 1 & 0 	  & 0		& 0 	  & \cdots & 0 	  & 0 	    & 0\\
1 & 0       & 0 	    & 0 	  & 0		& 0 	  & \cdots & 0 	  & 0 	    & 0\\
0       & 1 & 0 	    & 0	  	  & 0		& 0 	  & \cdots & 0 	  & 0 	    & 0\\
0       & 0       & 0		& 0 	  & 0 	    & 1 & \cdots & 0 	  & 0 	    & 0\\
0	    & 0		  & 0		& 1 & 0	    & 0		  & \cdots & 0 	  & 0 	    & 0\\
0       & 0       & 0 	    & 0 	  & 1 & 0 	  & \cdots & 0 	  & 0 	    & 0\\
\vdots  & \vdots  & \vdots  & \vdots  & \vdots  & \vdots  & \ddots & \vdots  & \vdots  & \vdots\\
0       & 0       & 0 	    & 0 	  & 0	    & 0		  & \cdots & 0 	  & 0	    & 1\\
0       & 0       & 0 	    & 0 	  & 0	    & 0		  & \cdots & 1 & 0 	    & 0\\
0       & 0       & 0 	    & 0 	  & 0	    & 0		  & \cdots & 0	      & 1 & 0\\
\end{pmatrix}.
\end{eqnarray}

According to Corollary~\ref{corollary:matrixA} we have the following lemma.


\begin{lemma}\label{lemma:matrixA}
The matrix $A$ (\ref{equation:A3}) is diagonalizable, with the following eigenspaces,
\begin{enumerate}
\item  For the eigenvalue $1$, a basis of the eigenspace is
\[
\begin{array}{llcll}
\{&(1,1,1,0,0,0,&\ldots&,0,0,0),&\\
 ~&(0,0,0,1,1,1,&\ldots&,0,0,0),&\\
 ~&     ~       &\ddots&~       &\\
 ~&(0,0,0,0,0,0,&\ldots&,1,1,1) &\}.
\end{array}
\]
\item For the eigenvalue $\gamma_1$, a basis of the eigenspace is

\[
\begin{array}{lllcll}
\{&(1,\gamma_1,\gamma_2,&0,0,0,&\ldots&,0,0,0),&\\
~&(0,0,0,&1,\gamma_1,\gamma_2,&\ldots&,0,0,0),&\\
~&     ~  &~     &\ddots&~       &\\
~&(0,0,0,&0,0,0,&\ldots&,1,\gamma_1,\gamma_2) &\}.
\end{array}
\]
\item For the eigenvalue $\gamma_2$, a basis of the eigenspace is
\[
\begin{array}{lllcll}
\{&(1,\gamma_2,\gamma_1,&0,0,0,&\ldots&,0,0,0),&\\
~&(0,0,0,&1,\gamma_2,\gamma_1,&\ldots&,0,0,0),&\\
~&     ~  &~     &\ddots&~       &\\
~&(0,0,0,&0,0,0,&\ldots&,1,\gamma_2,\gamma_1) &\}.
\end{array}
\]
\end{enumerate}
In addition, the subspace $S\triangleq\Span{e_0,e_3,e_6,\ldots}$ is an independent subspace of $A$.
\end{lemma}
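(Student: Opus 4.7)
The plan is to recognize Lemma~\ref{lemma:matrixA} as nothing more than the $r=3$ instantiation of Corollary~\ref{corollary:matrixA}. The matrix $A$ displayed in~\eqref{equation:A3} is precisely the $\ell\times\ell$ block diagonal matrix from~\eqref{equation:A} whose $\ell/3$ diagonal blocks are all equal to the $3\times 3$ companion matrix $C$ of $x^3-1$, so after a direct comparison of entries the entire statement should drop out of the general theory already developed.

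Concretely, I would first invoke Lemma~\ref{lemma:matrixC} with $r=3$ to obtain the three linearly independent eigenvectors of a single block $C$, namely $(1,1,1)$, $(1,\gamma_1,\gamma_1^2)$, and $(1,\gamma_2,\gamma_2^2)$, corresponding to the eigenvalues $1$, $\gamma_1$, and $\gamma_2$. Using Lemma~\ref{lemma:rootsOfUnity}, the identities $\gamma_1^2=\gamma_2$ and $\gamma_2^2=\gamma_1$ allow these vectors to be written in the form $(1,\gamma_1,\gamma_2)$ and $(1,\gamma_2,\gamma_1)$ listed in items~2 and~3. Since $A$ acts blockwise, any eigenvector $v$ of $C$ for an eigenvalue $\lambda$ lifts to $\ell/3$ linearly independent eigenvectors of $A$ for the same eigenvalue by placing $v$ in one of the $\ell/3$ block positions and zeroing out the remaining coordinates. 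Collecting the three groups of $\ell/3$ eigenvectors gives $\ell$ linearly independent eigenvectors in total, which simultaneously confirms that $A$ is diagonalizable and produces the bases written out in items 1--3 verbatim.

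The final assertion, that $S=\Span{e_0,e_3,e_6,\ldots,e_{\ell-3}}$ is an independent subspace of $A$, is the last sentence of Corollary~\ref{corollary:matrixA} specialized to $r=3$: since $e_iC=e_{i-1\bmod 3}$, we have $SA=\Span{e_2,e_5,\ldots,e_{\ell-1}}$ and $SA^2=\Span{e_1,e_4,\ldots,e_{\ell-2}}$, so $S+SA+SA^2=\bF_q^\ell$ follows at once from the fact that the three sets of unit vectors partition $\{e_0,\ldots,e_{\ell-1}\}$.

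Since the proof is a pure specialization, there is no genuine obstacle; the only thing requiring care is the bookkeeping that matches the abstract $3\times 3$ companion block $C$ to the explicit entries of~\eqref{equation:A3} and that uses $\gamma_1^2=\gamma_2$, $\gamma_1\gamma_2=1$ to rewrite the eigenvector coordinates into the exact form stated in the lemma.
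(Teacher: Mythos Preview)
Your proposal is correct and matches the paper's approach exactly: the paper simply states that Lemma~\ref{lemma:matrixA} follows ``According to Corollary~\ref{corollary:matrixA},'' i.e., it is precisely the $r=3$ specialization you describe. Your additional remark that Lemma~\ref{lemma:rootsOfUnity} is needed to rewrite $\gamma_1^2,\gamma_2^2$ as $\gamma_2,\gamma_1$ is a helpful detail the paper leaves implicit.
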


The matrices in our $\AS$-set are similar to a constant multiple of the matrix $A$, and thus they are also diagonalizable. The structure of their eigenspaces, which follows from Lemma~\ref{lemma:matrixA}, is as follows.
%

\begin{lemma}\label{lemma:eigenspaces3}
If $P\in\bF_q^{\ell\times\ell}$ is an invertible matrix whose rows are $p_0,\ldots,p_{\ell-1}$, and $M\triangleq \lambda P^{-1}A P$ for some $\lambda\in\bF_q^*$, then $M$ has the following eigenspaces,
\begin{enumerate}
\item For the eigenvalue $\lambda$, a basis of the eigenspace is $\{p_{3i}+p_{3i+1}+p_{3i+2}~\vert~i\in\{0,\ldots,\ell/3-1\}\}$.
\item For the eigenvalue $\gamma_1\lambda$, a basis of the eigenspace is $\{p_{3i}+\gamma_1p_{3i+1}+\gamma_2p_{3i+2}~\vert~i\in\{0,\ldots,\ell/3-1\}\}$.
\item For the eigenvalue $\gamma_2\lambda$, a basis of the eigenspace is $\{p_{3i}+\gamma_2p_{3i+1}+\gamma_1p_{3i+2}~\vert~i\in\{0,\ldots,\ell/3-1\}\}$.
\end{enumerate}
In addition, the subspace $S\triangleq \Span{p_0,p_3,p_6,\ldots}$ is an independent subspace of $M$.
\end{lemma}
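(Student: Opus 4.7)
The plan is to derive this lemma as a direct specialization of the general eigenspace framework already established, in particular Corollary "matrixA" (which gives the eigenvectors of $A$ for general $r$, here specialized to $r=3$) and Lemma "eigenspacesSimilar" (which describes how eigenstructure transforms under similarity). The only new ingredient beyond those results is the scalar factor $\lambda$, which simply multiplies every eigenvalue by $\lambda$ without altering the eigenvectors.

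First, I would observe that Lemma "matrixA" (which is the $r=3$ specialization of Corollary "matrixA" together with Lemma "matrixC") explicitly exhibits three sets of $\ell/3$ linearly independent eigenvectors of $A$, associated with the eigenvalues $1, \gamma_1, \gamma_2$ respectively. Since $\lambda\ne 0$, the matrix $\lambda A$ has the same eigenvectors as $A$ but with each eigenvalue scaled by $\lambda$, giving eigenvalues $\lambda, \gamma_1\lambda, \gamma_2\lambda$. Next, for any eigenvector $v$ of $A$ with eigenvalue $\mu$, a short computation along the lines of~\eqref{eqn:algebraicFact} gives
\begin{eqnarray*}
(vP)M = (vP)\,\lambda P^{-1}AP = \lambda (vA) P = \lambda\mu\, (vP),
\end{eqnarray*}
so $vP$ is an eigenvector of $M$ with eigenvalue $\lambda\mu$. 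Multiplying each of the three bases of Lemma "matrixA" on the right by $P$ then produces the three bases claimed in the statement, since the $i$-th basis vector of each set has $1,1,1$ (respectively $1,\gamma_1,\gamma_2$ and $1,\gamma_2,\gamma_1$) in coordinates $3i,3i{+}1,3i{+}2$ and zeros elsewhere; multiplying by $P$ selects exactly $p_{3i},p_{3i+1},p_{3i+2}$ with the corresponding weights. Linear independence of each set transfers from $A$ to $M$ via the invertibility of $P$.

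For the independent subspace claim, I would follow the same recipe used at the end of Lemma "eigenspacesSimilar". Let $S_0 \triangleq \Span{e_0, e_3, e_6, \ldots, e_{\ell-3}}$, which is an independent subspace of $A$ by Lemma "matrixA". Then $S = S_0 P$, and
\begin{eqnarray*}
S + SM + SM^2 = S_0P + \lambda S_0 A P + \lambda^2 S_0 A^2 P = (S_0 + \lambda S_0 A + \lambda^2 S_0 A^2)P.
\end{eqnarray*}
Because $\lambda, \lambda^2 \in \bF_q^*$, scalar multiplication by these nonzero elements preserves the subspaces $S_0 A$ and $S_0 A^2$, so the right-hand side equals $(S_0 + S_0 A + S_0 A^2)P = \bF_q^\ell P = \bF_q^\ell$, proving independence for $M$.

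I anticipate no real obstacle, since the lemma is a clean corollary of the general framework already proved. The only mild care needed is to verify that the $\lambda$ factor interacts correctly with both the similarity transformation (yielding a uniform scaling of eigenvalues) and with the independence property (where $\lambda\ne 0$ allows us to absorb the scalars into the subspaces). Both points are transparent once written out as above.
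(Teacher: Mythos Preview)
Your proposal is correct and takes essentially the same approach as the paper, which simply states that the lemma ``follows from Lemma~\ref{lemma:matrixA}'' without giving further details. Your write-up fills in exactly those details, mirroring the argument of Lemma~\ref{lemma:eigenspacesSimilar} and correctly handling the extra scalar factor~$\lambda$ in both the eigenvalue computation and the independence claim.
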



We are now in a position to describe the $\AS$-set, of size three, that is given by a single-matching. $\AS$-sets that are given by a union of single-matching $\AS$-sets will be discussed in the sequel. As mentioned earlier, all three matrices of this $\AS$-set are similar to $A$. The $\ell\times\ell$ change matrices are defined using $3\times \ell$ constituent blocks (see~\eqref{equation:frameworkP}) as follows. For $\alpha,\beta\in\bF_q^*$ and $u,v,w\in\bF_q^\ell$,  let
\begin{eqnarray}\label{equation:AlphaBetaN}
N(\alpha,\beta,u,v,w)=\begin{pmatrix}
1 & 0 & 0\\
1 & -\frac{\alpha \gamma_1}{\gamma_1-1} & \frac{\beta }{\gamma_1-1}\\
1 & \frac{\alpha }{\gamma_1-1} & -\frac{\beta \gamma_1}{\gamma_1-1}\\
\end{pmatrix}\cdot\begin{pmatrix}
u\\
v\\
w
\end{pmatrix}.
\end{eqnarray}
The determinant of the $3\times 3$ matrix in (\ref{equation:AlphaBetaN}) equals $\alpha\beta\cdot\frac{\gamma_1+1}{\gamma_1-1}$, which is nonzero, and thus $N(\alpha,\beta,u,v,w)$ is row-equivalent to a matrix whose rows are $u,v,w$, for any choice of $\alpha,\beta\in\bF_q^*$. This fact gives rise to the following necessary lemma, which can be easily proved.

\begin{lemma}\label{lemma:changeOfBasisMatrices}
If $\cZ=(Z,Z',Z'')$ is a matching, then for any choice of $\alpha,\alpha',\alpha''$ and $\beta,\beta',\beta'',$ in $\bF_q^*$, the following matrices are invertible.
\begin{eqnarray*}
P_Z&\triangleq&
\begin{pmatrix}
N(\alpha,\beta,z_0,z_0',z_0'')\\
N(\alpha,\beta,z_1,z_1',z_1'')\\
\vdots \\
N(\alpha,\beta,z_{\ell/3-1},z_{\ell/3-1}',z_{\ell/3-1}'')
\end{pmatrix},
P_{Z'}\triangleq
\begin{pmatrix}
N(\alpha',\beta',z_0',z_0'',z_0)\\
N(\alpha',\beta', z_1',z_1'',z_1)\\
\vdots \\
N(\alpha',\beta' ,z_{\ell/3-1}',z_{\ell/3-1}'',z_{\ell/3-1})
\end{pmatrix},\\
P_{Z''}&\triangleq&
\begin{pmatrix}
N(\alpha'',\beta'',z_0'',z_0,z_0')\\
N(\alpha'',\beta'',z_1'',z_1,z_1')\\
\vdots \\
N(\alpha'',\beta'',z_{\ell/3-1}'',z_{\ell/3-1},z_{\ell/3-1}')
\end{pmatrix}
\end{eqnarray*}
\end{lemma}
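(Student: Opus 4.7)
The plan is to exploit the block structure of each of $P_Z$, $P_{Z'}$, $P_{Z''}$ and factor each of them as a product of an obviously invertible block-diagonal matrix and an obviously invertible permutation matrix. I will treat $P_Z$ in detail; the arguments for $P_{Z'}$ and $P_{Z''}$ are identical up to relabelling the color classes.

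First I would record that the $3\times 3$ coefficient matrix appearing in~\eqref{equation:AlphaBetaN} has determinant equal to $\alpha\beta\cdot\frac{\gamma_1+1}{\gamma_1-1}$, which is already observed in the paragraph preceding the lemma. Since $\gamma_1$ is a primitive cube root of unity, $\gamma_1\ne 1$ and $\gamma_1\ne -1$ (otherwise $1+\gamma_1+\gamma_1^2$ would not vanish), and by assumption $\alpha,\beta\in\bF_q^*$. Hence this determinant is nonzero. In other words, the $\ell\times\ell$ block-diagonal matrix
\[
D_Z\triangleq
\begin{pmatrix}
M(\alpha,\beta) & & \\
 & \ddots & \\
 & & M(\alpha,\beta)
\end{pmatrix},\qquad
M(\alpha,\beta)\triangleq\begin{pmatrix}
1 & 0 & 0\\
1 & -\frac{\alpha\gamma_1}{\gamma_1-1} & \frac{\beta}{\gamma_1-1}\\
1 & \frac{\alpha}{\gamma_1-1} & -\frac{\beta\gamma_1}{\gamma_1-1}
\end{pmatrix},
\]
with $\ell/3$ copies of $M(\alpha,\beta)$ on the diagonal, is invertible.

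Next I would observe that by the definition of $N$ in~\eqref{equation:AlphaBetaN}, we may factor
\[
P_Z=D_Z\cdot Q_Z,
\]
where $Q_Z$ is the $\ell\times\ell$ matrix whose rows (top to bottom) are the unit vectors $z_0,z_0',z_0'',z_1,z_1',z_1'',\ldots,z_{\ell/3-1},z_{\ell/3-1}',z_{\ell/3-1}''$. Because $\cZ=(Z,Z',Z'')$ is a perfect matching in $K_\ell^3$, each of the $\ell$ unit vectors $e_0,\ldots,e_{\ell-1}$ appears as exactly one of the $z_i^{(c)}$, so $Q_Z$ is a permutation matrix and hence invertible. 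Consequently $P_Z$ is invertible.

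The main obstacle is essentially bookkeeping rather than anything conceptual: one has to be careful that the rows of $P_Z$ group naturally into $\ell/3$ blocks of three, each of which is a linear image (under the same $3\times 3$ matrix $M(\alpha,\beta)$) of a disjoint triple of unit vectors. Once the factorization $P_Z=D_ZQ_Z$ is written down, invertibility is immediate. Repeating the same argument for $P_{Z'}$ with the triple $(z_i',z_i'',z_i)$ and parameters $(\alpha',\beta')$, and for $P_{Z''}$ with $(z_i'',z_i,z_i')$ and $(\alpha'',\beta'')$, yields the desired conclusion for all three matrices.
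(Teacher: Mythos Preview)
Your proof is correct and follows essentially the same approach the paper sketches: the paper observes just before the lemma that the $3\times 3$ coefficient matrix in~\eqref{equation:AlphaBetaN} has nonzero determinant, hence each block $N(\alpha,\beta,z_i,z_i',z_i'')$ is row-equivalent to the matrix with rows $z_i,z_i',z_i''$, and then states that the lemma ``can be easily proved.'' Your explicit factorization $P_Z=D_ZQ_Z$ into a block-diagonal invertible matrix times a permutation matrix is precisely the detailed version of that easy proof.
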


\begin{lemma}\label{lemma:oneMatching3}
If $\cZ=(Z,Z',Z'')$ is a matching, then for any $\lambda\in\bF_q^*$, the following $\AS$-set satisfies the subspace condition.

\begin{align*}
A_Z(\lambda)&\triangleq
\lambda\cdot P_Z^{-1}A P_Z,& S_Z&\triangleq\Span{Z}\\
A_{Z'}(\lambda)&\triangleq\lambda\cdot
P_{Z'}^{-1}A
P_{Z'},& S_{Z'}&\triangleq\Span{Z'}\\
A_{Z''}(\lambda)&\triangleq
\lambda\cdot P_{Z''}^{-1}A
P_{Z''},& S_{Z''}&\triangleq\Span{Z''}\\
\end{align*}
where $\alpha,\alpha',\alpha'',\beta,\beta',\beta''$ are nonzero field elements that will be chosen according to the field characteristic.
\end{lemma}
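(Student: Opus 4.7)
The plan is to verify the three properties of the subspace condition---invariance, independence, and nonsingularity---by first determining the action of each of $A_Z, A_{Z'}, A_{Z''}$ on the basis $\{z_i, z_i', z_i''\}$ of the edge-subspace $U_i \triangleq \Span{\{z_i, z_i', z_i''\}}$, and then exploiting the resulting block structure. The triangular form of $N(\alpha,\beta,u,v,w)$ is designed precisely so that $u$ occupies the first row of each $N$-block (i.e.\ $p_{3i}$ in $P_Z$), while, up to nonzero scaling, $v$ and $w$ can be identified with the Lemma~\ref{lemma:eigenspaces3} eigenvectors $p_{3i}+\gamma_2 p_{3i+1}+\gamma_1 p_{3i+2}$ and $p_{3i}+\gamma_1 p_{3i+1}+\gamma_2 p_{3i+2}$, respectively. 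A short calculation using the relations $\gamma_1\gamma_2=1$, $\gamma_1^2=\gamma_2$, and $1+\gamma_1+\gamma_2=0$ from Lemma~\ref{lemma:rootsOfUnity} confirms this: for $A_Z$, the vector $z_i'$ is a $\gamma_2\lambda$-eigenvector, $z_i''$ is a $\gamma_1\lambda$-eigenvector, and $z_i A_Z = \lambda p_{3i+2} \in U_i$. Cyclically permuting the roles of $Z, Z', Z''$ gives the analogous statements for $A_{Z'}$ and $A_{Z''}$.

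The invariance and independence properties are then immediate. For invariance, each of $S_Z, S_{Z'}, S_{Z''}$ is spanned by eigenvectors of the two matrices indexed by the other two color classes, so $S_Z A_{Z'} = S_Z$, $S_Z A_{Z''} = S_Z$, and their cyclic permutations hold. For independence, since $z_i = p_{3i}$, a direct application of Lemma~\ref{lemma:eigenspaces3} yields $S_Z + S_Z A_Z + S_Z A_Z^2 = \bF_q^\ell$, and similarly for $S_{Z'}$ and $S_{Z''}$.

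The main effort will go into the nonsingular property. The crucial reduction is that each $U_i$ is \emph{simultaneously} invariant under all three of $A_Z, A_{Z'}, A_{Z''}$, as is visible from the eigenvector description above. Consequently the three matrices block-diagonalize along $\bF_q^\ell = \bigoplus_i U_i$, and each of the three conditions in the ``Conditions for the nonsingular property'' list reduces to an invertibility check on a $6\times 6$ or $9\times 9$ matrix obtained by restriction to a representative $U_i$. In the basis $(z_i, z_i', z_i'')$, each restricted matrix is diagonal on two of the basis vectors with eigenvalues among $\lambda, \gamma_1\lambda, \gamma_2\lambda$, while the remaining row has off-diagonal entries linear in the corresponding pair $(\alpha,\beta)$, $(\alpha',\beta')$, or $(\alpha'',\beta'')$.

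The hard part will be showing that the resulting determinant polynomials---particularly the one arising from the $3\times 3$ block condition, which yields a $9\times 9$ determinant---can all be made simultaneously nonzero by a single choice of $(\alpha,\beta,\alpha',\beta',\alpha'',\beta'') \in (\bF_q^*)^6$. Since each obstructing locus is of codimension one in the six-dimensional parameter space, such a choice exists whenever $|\bF_q|$ is sufficiently large. Tracking exactly which degenerate cases must be avoided---and in particular noticing that some vanishing conditions carry an extra factor of the field characteristic---will produce the characteristic-dependent choices of $\alpha,\beta,\ldots$ alluded to in the statement, and will ultimately feed into the field-size bounds $q \ge 6m+1$ and $q \ge 3m+1$ once multiple matchings are combined in the subsequent lemmas.
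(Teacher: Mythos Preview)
Your proposal is correct and follows essentially the same approach as the paper: identify the eigenspaces via Lemma~\ref{lemma:eigenspaces3}, deduce invariance and independence immediately, and then exploit the block-diagonal structure along the edge subspaces $U_i$ to reduce nonsingularity to small determinant computations in the parameters $\alpha,\ldots,\beta''$. The paper sharpens your outline in two places: for Condition~\ref{condition:3x3} it first performs block-row elimination to reduce the $3\times 3$ block-Vandermonde to invertibility of a single matrix $L$ (whose $3\times 3$ restriction $\Upsilon$ it then analyzes), and rather than a ``sufficiently large field'' existence argument it exhibits explicit choices of $\alpha,\ldots,\beta''\in\{1,\gamma_1,\gamma_2\}$ in each of the characteristic cases $2$, $7$, and ``other'', so that this lemma imposes no field-size constraint beyond $3\mid q-1$.
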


\iffull
\begin{proof}
For convenience of notation, denote $A_Z(\lambda),A_{Z'}(\lambda)$, and $A_{Z''}(\lambda)$ by $A_Z,A_{Z'}$, and $A_{Z''}$, respectively. To prove the independence property, it follows from Lemma~\ref{lemma:eigenspaces3} that for any matrix of the form $P^{-1}AP$, where the rows of $P$ are $\{p_0,\ldots,p_{\ell-1}\}$, the subspace $\Span{p_0,p_3,\ldots}$ is an independent subspace of $P^{-1}AP$. Notice that the vectors in $P_Z$ that correspond to rows $p_i$ with $i\equiv 0 \bmod 3$ are $\{z_0,\ldots,z_{\ell/3-1}\}$. Hence, $S_Z$ is an independent subspace of~$A_Z$. Similarly, we have that $S_{Z'},S_{Z''}$ are independent subspaces of $A_{Z'},A_{Z''}$, respectively. Therefore, the independence property is satisfied.

To show the invariance property, the eigenspaces of $A_Z,A_{Z'}$, and $A_{Z''}$, are computed according to Lemma~\ref{lemma:eigenspaces3}. The eigenspace of $A_Z$ that corresponds to the eigenvalue $\gamma_1\lambda$ is the span of vectors of the form $p_{3i}+\gamma_1p_{3i+1}+\gamma_2p_{3i+2}$, for $i\in\{0,\ldots,\ell/3-1\}$, where $p_j$ is the $j$-th row of $P_Z$. Therefore, by the definition of $N$ (\ref{equation:AlphaBetaN}), we have that the eigenspace of $A_Z$ that corresponds to the eigenvlaue $\gamma_1\lambda$ is the span of
\begin{eqnarray*}
\left\{z_i+\gamma_1\cdot\left(z_i-\frac{\alpha \gamma_1}{\gamma_1-1}z_i'+\frac{\beta}{\gamma_1-1}z_i''\right)+\gamma_2\cdot\left(z_i+\frac{\alpha }{\gamma_1-1}z_i'-\frac{\beta \gamma_1}{\gamma_1-1}z_i''\right)~\Bigg\vert~0\le i\le\ell/3-1\right\}&=&\\
\left\{(1+\gamma_2+\gamma_1)z_i+\beta z_i''~\big\vert ~ 0\le i\le\ell/3-1\right\}&=&\\
\left\{\beta z_i''~\vert~0\le i\le\ell/3-1\right\}
\end{eqnarray*}
and clearly, this is $S_{Z''}$. Similarly, we have that the eigenspace of $A_Z$ which corresponds to the eigenvalue $\gamma_2\lambda$ is $S_{Z'}$. Hence, the subspaces $S_{Z'}$ and $S_{Z''}$ are invariant subspaces of $A_Z$. By identical arguments, it can be shown that for $A_{Z'}$, the eigenspace that corresponds to the eigenvalue $\gamma_2\lambda$ is $S_{Z''}$ and the eigenspace that corresponds to the eigenvalue $\gamma_1\lambda$ is $S_Z$. Furthermore, the eigenspace of $A_{Z''}$ that corresponds to the eigenvalue $\gamma_2\lambda$ is $S_Z$, and the eigenspace that corresponds to the eigenvalue $\gamma_1\lambda$ is $S_{Z'}$. Therefore, the invariance property holds.

To show the nonsingular property, we show that Conditions \ref{condition:rankMetric}-\ref{condition:3x3} are met. To prove Condition~\ref{condition:rankMetric} it should be shown that $\rank(A_Z-A_{Z''})=\ell$, $\rank(A_Z-A_{Z'})=\ell$, and $\rank(A_{Z'}-A_{Z''})=\ell$. To show that $\rank(A_Z-A_{Z''})=\ell$, it will be proved that for all $0\le i\le \ell/3-1$, $\{z_i,z_i',z_i''\}\subseteq\image(A_Z-A_{Z''})$. Since $Z\cup Z' \cup Z''$ is a basis of $\bF_q^\ell$, the claim will follow.


By the structure of the matrix $A$, it can be easily verified that if $P$ is an invertible matrix whose rows are $\{p_0,\ldots,p_{\ell-1}\}$, then
\begin{eqnarray*}
p_{3i}\lambda P^{-1}AP&=&\lambda p_{3i+2}\\
p_{3i+1}\lambda P^{-1}AP&=&\lambda p_{3i}\\
p_{3i+2}\lambda P^{-1}AP&=&\lambda p_{3i+1},
\end{eqnarray*}
and therefore,
\begin{eqnarray*}
z_iA_Z&=&\lambda\left(z_i+\frac{\alpha }{\gamma_1-1}z_i'-\frac{\beta \gamma_1}{\gamma_1-1}z_i''\right)
\\z_i''A_{Z''}&=&\lambda\left(z_i''+\frac{\alpha'' }{\gamma_1-1}z_i-\frac{\beta'' \gamma_1}{\gamma_1-1}z_i'\right).
\end{eqnarray*}
Hence, since $S_{Z'},S_{Z''}$ are eigenspaces of $A_Z$ and $S_{Z},S_{Z'}$ are eigenspaces of $A_{Z''}$, it follows that
\begin{eqnarray*}
\nonumber z_i(A_Z-A_{Z''})&=&z_iA_Z-z_iA_{Z''}\\\nonumber
&=&\lambda\left(z_i+\frac{\alpha
}{\gamma_1-1}z_i'-\frac{\beta \gamma_1}{\gamma_1-1}z_i'' \right)-\gamma_2\lambda z_i\\\nonumber
&=&\lambda\left((1-\gamma_2)z_i+\frac{\alpha
}{\gamma_1-1}z_i'-\frac{\beta \gamma_1}{\gamma_1-1}z_i''\right)\\
z_i'(A_Z-A_{Z''})&=&z_i'A_Z-z_i'A_{Z''}=\lambda\left(\gamma_2-\gamma_1\right)z_i'\\
\nonumber z_i''(A_Z-A_{Z''})&=&z_i''A_Z-z_i''A_{Z''}\\ \nonumber
&=&\gamma_1\lambda z_i''-\lambda\left(z_i''+\frac{\alpha''}{\gamma_1-1}z_i-\frac{\beta'' \gamma_1}{\gamma_1-1}z_i'\right)\\ \nonumber
&=&\lambda\left(-\frac{\alpha''}{\gamma_1-1}z_i+\frac{\beta'' \gamma_1}{\gamma_1-1}z_i'+(\gamma_1-1)z_i''\right),
\end{eqnarray*}
which in a more convenient matrix notation becomes
\begin{equation}\label{equation:rankMetric23}
\begin{pmatrix}
z_i\\z_i'\\z_i''
\end{pmatrix}\cdot(A_Z-A_{Z''})=\lambda
\begin{pmatrix}
1-\gamma_2 & \frac{\alpha}{\gamma_1-1} & -\frac{\beta\gamma_1}{\gamma_1-1}\\
0 & \gamma_2-\gamma_1 & 0 \\
-\frac{\alpha''}{\gamma_1-1} & \frac{\beta''\gamma_1}{\gamma_1-1}& \gamma_1-1
\end{pmatrix}
\cdot\begin{pmatrix}
z_i\\z_i'\\z_i''
\end{pmatrix}\triangleq \lambda\cdot \Phi \cdot\begin{pmatrix}
z_i\\z_i'\\z_i''
\end{pmatrix}.
\end{equation}
To show that $z_i,z_i',z_i''\in \image (A_Z-A_Z'')$, it suffices to show that $\Phi$ is invertible. Since $\lambda(\gamma_2-\gamma_1)\ne 0$, it is enough to prove that 
\begin{eqnarray}\label{equation:detOfRankDist}
\begin{pmatrix}
1-\gamma_2 & -\frac{\beta \gamma_1}{\gamma_1-1}\\
-\frac{\alpha''}{\gamma_1-1} & \gamma_1-1
\end{pmatrix}
\end{eqnarray}
is invertible. Simple calculations, which follow from the properties of $\gamma_1,\gamma_2$ (Lemma~\ref{lemma:rootsOfUnity}) show that this matrix has a nonzero determinant if and only if\footnote{Any $n\in \bN$ denotes the summation of $n$ copies of the unity of the field $\bF_q$.} $\alpha''\beta\ne 9$.
Similar arguments show that $\rank(A_Z-A_{Z'})=\ell$ and $\rank(A_{Z'}-A_{Z''})=\ell$ if and only if $\alpha'\beta''\ne 9$ and $\alpha\beta'\ne 9$, respectively.
Proper $\alpha,\alpha',\alpha'',\beta,\beta',\beta''$ will be chosen is the sequel.


To show Condition~\ref{condition:squareRankMetric}, it must proved that the difference between any two \textit{squares} of matrices in the $\AS$-set has full rank. Fortunately, the squares of the matrices in the $\AS$-set present a very similar behavior to the matrices themselves. That is, if we denote any matrix in the $\AS$-set by $\hat{A}=\lambda P^{-1}AP$, then $\hat{A}^2 = \lambda^2 P^{-1}A^2P$, and
\begin{align*}
\text{if } v\hat{A}&=\lambda \gamma_1 v~\text{, then }v\hat{A}^2=\lambda^2 \gamma_2 v\text{, }\\
\text{if } v\hat{A}_i&=\lambda \gamma_2 v~\text{, then }v\hat{A}^2=\lambda^2 \gamma_1v.\\
\end{align*}

Hence, if $S_1,S_2$ are the eigenspaces of $\hat{A}$ that correspond to the eigenvalues $\lambda\gamma_1,\lambda\gamma_2$, respectively, then the eigenspaces of $\hat{A}^2$ that correspond to the eigenvalues $\lambda^2\gamma_1,\lambda^2\gamma_2$ are $S_2,S_1$, respectively. Moreover, we have that
\begin{eqnarray*}
z_iA_Z^2&=&\lambda^2\left(z_i-\frac{\alpha \gamma_1}{\gamma_1-1}z_i'+\frac{\beta}{\gamma_1-1}z_i''\right)
\\z_i''A_{Z''}^2&=&\lambda^2\left(z_i''-\frac{\alpha'' \gamma_1}{\gamma_1-1}z_i+\frac{\beta''}{\gamma_1-1}z_i'\right),
\end{eqnarray*}
and hence,
\begin{eqnarray}\label{equation:squareRankMetric}
\nonumber z_i(A_Z^2-A_{Z''}^2)&=&\lambda^2\left((1-\gamma_1)z_i-\frac{\alpha \gamma_1}{\gamma_1-1}z_i'+\frac{\beta}{\gamma_1-1}z_i''\right)\\
z_i'(A_Z^2-A_{Z''}^2)&=&\lambda^2\left(\gamma_1-\gamma_2\right)z_i'\\
\nonumber z_i''(A_Z^2-A_{Z''}^2)&=&\lambda^2\left(\frac{\alpha'' \gamma_1}{\gamma_1-1}z_i-\frac{\beta'' }{\gamma_1-1}z_i'+(\gamma_2-1)z_i''\right),
\end{eqnarray}
which in matrix notation becomes
\begin{equation*}
\begin{pmatrix}
z_i\\z_i'\\z_i''
\end{pmatrix}\cdot (A_Z^2-A_{Z''}^2) = \lambda^2
\begin{pmatrix}
1-\gamma_1 & -\frac{\alpha\gamma_1}{\gamma_1-1} & \frac{\beta}{\gamma_1-1}\\
0 & \gamma_1-\gamma_2 & 0 \\
\frac{\alpha''\gamma_1}{\gamma_1-1}& -\frac{\beta''}{\gamma_1-1}&\gamma_2-1
\end{pmatrix}\cdot \begin{pmatrix}
z_i\\z_i'\\z_i''
\end{pmatrix}\triangleq\lambda^2\cdot\Psi\cdot\begin{pmatrix}
z_i\\z_i'\\z_i''
\end{pmatrix}.
\end{equation*}
As in Condition~\ref{condition:rankMetric}, we show that $\Psi$ is invertible. Surprisingly, we have that $\det \Phi=-\det \Psi$, and thus Condition~\ref{condition:rankMetric} and Condition~\ref{condition:squareRankMetric} are implied by the same requirements $\alpha''\beta\ne 9$, $\alpha'\beta''\ne 9$, and $\alpha \beta '\ne 9$.


To prove Condition~\ref{condition:3x3}, first notice that the following two matrices are row-equivalent, provided that $A_{Z'}-A_Z$ is invertible.
\begin{eqnarray}\label{equation:rowEquivalence3x3}
\begin{pmatrix}
I & I & I \\
A_Z & A_{Z'} & A_{Z''} \\
A_Z^2 & A_{Z'}^2 & A_{Z''}^2 \\
\end{pmatrix}
\vspace{2cm},
\begin{pmatrix}
I & I & I \\
0 & I & (A_{Z'}-A_Z)^{-1}(A_{Z''}-A_Z) \\
0 & 0 & \underline{(A_{Z}^2-A_{Z''}^2)-(A_{Z}^2-A_{Z'}^2)(A_{Z}-A_{Z'})^{-1}(A_{Z}-A_{Z''})} \\
\end{pmatrix}
\end{eqnarray}
Therefore, Condition~\ref{condition:3x3} is met if and only if  the underlined matrix in~(\ref{equation:rowEquivalence3x3}) is invertible. Since $A_{Z''}-A_Z$ is invertible (given a proper choice for $\alpha,\ldots,\beta''$), it follows that the underlined matrix is invertible if and only if
\begin{eqnarray*}
L\triangleq(A_{Z}^2-A_{Z''}^2)(A_{Z}-A_{Z''})^{-1}-(A_{Z}^2-A_{Z'}^2)(A_{Z}-A_{Z'})^{-1}
\end{eqnarray*}
is invertible. To show that $L$ is indeed invertible, we show that $\{z_i,z_i',z_i''\}\subseteq\image(L)$ for every $i\in\{0,\ldots,\ell/3-1\}$. By (\ref{equation:squareRankMetric}), and by the corresponding equations for $A_Z^2-A_{Z'}^2$, we have that
\begin{align}\label{equation:L}
\lambda^{-2}z_iL&=\phantom{-}(1-\gamma_1)z_i(A_Z-A_{Z''})^{-1}-\frac{\alpha \gamma_1}{\gamma_1-1}z_i'(A_Z-A_{Z''})^{-1}+\frac{\beta}{\gamma_1-1}z_i''(A_Z-A_{Z''})^{-1}\nonumber\\
~&\phantom{-}-(1-\gamma_2)z_i(A_Z-A_{Z'})^{-1}+\frac{\alpha \gamma_1}{\gamma_1-1}z_i'(A_Z-A_{Z'})^{-1}-\frac{\beta}{\gamma_1-1}z_i''(A_Z-A_{Z'})^{-1} \nonumber\\
\lambda^{-2} z_i'L&= (\gamma_1-\gamma_2)z_i'(A_Z-A_{Z''})^{-1}+\\
~&\phantom{==}\frac{\beta'}{\gamma_1-1}z_i(A_Z-A_{Z'})^{-1}-(\gamma_1-1)z_i'(A_Z-A_{Z'})^{-1}-\frac{\alpha' \gamma_1}{\gamma_1-1}z_i''(A_Z-A_{Z'})^{-1} \nonumber \\ 
\nonumber \lambda^{-2}z_i''L&=\phantom{-} \frac{\alpha'' \gamma_1}{\gamma_1-1}z_i(A_Z-A_{Z''})^{-1}-\frac{\beta''}{\gamma_1-1}z_i'(A_Z-A_{Z''})^{-1}+(\gamma_2-1)z_i''(A_Z-A_{Z''})^{-1}\\\nonumber
~ &\phantom{-}-(\gamma_2-\gamma_1)z_i''(A_Z-A_{Z'})^{-1}.\nonumber
\end{align}
In matrix notation, this turns to
\begin{align}\label{equation:Lshort}
\nonumber\lambda^{-2}\begin{pmatrix}
z_i\\z_i'\\z_i''
\end{pmatrix}L&=
\begin{pmatrix}
1-\gamma_1& -\frac{\alpha\gamma_1}{\gamma_1-1}&\frac{\beta}{\gamma_1-1}\\
0&\gamma_1-\gamma_2&0\\
\frac{\alpha''\gamma_1}{\gamma_1-1}&-\frac{\beta''}{\gamma_1-1}&\gamma_2-1
\end{pmatrix}\cdot \begin{pmatrix}
z_i\\z_i'\\z_i''
\end{pmatrix}(A_Z-A_{Z''})^{-1}
\\~&+\begin{pmatrix}
\gamma_2-1 & \frac{\alpha\gamma_1}{\gamma_1-1}&-\frac{\beta}{\gamma_1-1}\\
\frac{\beta'}{\gamma_1-1}&1-\gamma_1&-\frac{\alpha'\gamma_1}{\gamma_1-1}\\
0&0&\gamma_1-\gamma_2
\end{pmatrix}\cdot\begin{pmatrix}
z_i\\z_i'\\z_i''
\end{pmatrix}(A_Z-A_{Z'})^{-1}.
\end{align}

Now, using~\eqref{equation:rankMetric23}, and the similar equation for $A_{Z}-A_{Z'}$,
the expressions 
\begin{equation*}
\begin{pmatrix}
z_i\\z_i'\\z_i''
\end{pmatrix}(A_Z-A_{Z'})^{-1}, ~\begin{pmatrix}
z_i\\z_i'\\z_i''
\end{pmatrix}(A_Z-A_{Z''})^{-1}
\end{equation*}
can be given as functions of $z_i,z_i',z_i''$, $\lambda$, and the matrix $\Phi$, e.g., by multiplying~\eqref{equation:rankMetric23} from the right by $(A_Z-A_{Z''})^{-1}$, and from the left by $\lambda^{-1}\Phi^{-1}$. By performing this substitution, we have that~\eqref{equation:Lshort} may be written as 
\begin{equation*}
\lambda^{-2}\begin{pmatrix}
z_i\\z_i'\\z_i''
\end{pmatrix}L = \Upsilon \begin{pmatrix}
z_i\\z_i'\\z_i''
\end{pmatrix},
\end{equation*}
for some $3\times 3$ matrix $\Upsilon$ whose entries are functions of $1,\gamma_1,\gamma_2,\alpha,\ldots,\beta''$. After some tedious calculations, we have that

\begin{equation*}
\det \Upsilon = \frac{\gamma_2-1}{-3\lambda^3}\cdot\frac{\alpha\alpha'\alpha''\beta\beta'\beta'' + 27\alpha\alpha'\alpha'' + 27\beta\beta'\beta'' + 729}{(\gamma_2-1)(\alpha\alpha''\beta\beta') - 9(\gamma_2+1)\alpha''\beta -9(\gamma_2+1)\alpha\beta' +81(\gamma_2+1)}.
\end{equation*}

We show that every possible field has a simple corresponding choice of values from $\{1,\gamma_1,\gamma_2\}$ to $\alpha,\ldots,\beta''$ such that the conditions $\det\Delta \ne 0$, $a''\beta\ne 9$,$\alpha'\beta''\ne 9$, and $\alpha\beta'\ne 9$ are satisfied.
\begin{itemize}
\item[\textbf{Case 1.}] If the characteristic is 2, choose
$\alpha = 1,\alpha' = \gamma_1,\alpha'' = \gamma_1,\beta = 1,\beta' = \gamma_2,\beta'' = 1$, and then,
\begin{eqnarray*}
\alpha''\beta = \gamma_1,
\alpha'\beta'' = \gamma_1,
\alpha\beta' = \gamma_2,
\det\Upsilon = \frac{622\gamma_2 - 781}{\lambda^3(-273\gamma_2 - 273)} = \frac{1}{\lambda^3(\gamma_2+1)}
\end{eqnarray*}
and since $9=1\notin\{\gamma_1,\gamma_2\}$, it follows that all conditions are satisfied.
\item[\textbf{Case 2.}] If the characteristic is $7$, choose $\alpha=\gamma_2,\alpha'=1,\alpha''=1,\beta=1,\beta'=\gamma_1,\beta''=1$. Hence,
\begin{eqnarray*}
	\alpha''\beta = 1,
	\alpha'\beta'' = 1,
	\alpha\beta' = 1,
	\det \Upsilon = \frac{703(1-\gamma_2)}{192\lambda^3 (\gamma_2+1)}=\frac{1-\gamma_2}{\lambda^3 (\gamma_2+1)},
\end{eqnarray*}
and since $9\ne 1$, all conditions are met as well.
\item[\textbf{Case 3.}] If the characteristic neither 2 nor 7, choose $\alpha=\ldots=\beta''=1$, and then,
\begin{eqnarray*}
\alpha''\beta = 1,
\alpha'\beta'' = 1,
\alpha\beta' = 1,
\det\Upsilon = \frac{49(1-\gamma_2)}{12\lambda^3(\gamma_2 + 1)}.
\end{eqnarray*}
Notice that we may divide by $12=2^2\cdot 3$ since the characteristic is neither 2 nor 3. Since $9\ne 1$ and $7\ne 0$, all conditions are satisfied.
\end{itemize}
\end{proof}
\fi

\subsection{Three Parities from Two Matchings}
We are now in a position to describe a construction of an $\AS$-set for three parities from more than one matching.
\iffull
In what follows we show that $\AS$-sets which correspond to two matchings which satisfy the pairing condition, may be united to achieve a larger $\AS$-set, given a proper choice of $\lambda_x,\lambda_y$. In the following lemmas of this subsection,  $\cX=(X,X',X''),\cY=(Y,Y',Y'')$ are two matchings that satisfy the pairing condition (Definition~\ref{definition:pairingCondition}) and\footnote{We omit the notations of $\lambda_x,\lambda_y$ for convenience.}
\begin{eqnarray*}
	C_\cX&\triangleq& \{(A_X,S_X), (A_{X'},S_{X'}),(A_{X''},S_{X''})\},\\
	C_\cY&\triangleq& \{(A_Y,S_Y), (A_{Y'},S_{Y'}),(A_{Y''},S_{Y''})\}.
\end{eqnarray*}

\begin{lemma}\label{lemma:missingCase}
	If $C\in\{A_X,A_{X'},A_{X''}\}$ then the eigenspace of $C$ which corresponds to the eigenvalue $\lambda_x$ is of the form $\Span{\left\{c_0 x_i+c_1 x_i'+c_2x_i''\right\}_{i=0}^{\ell/r-1}}$ for some nonzero constants $c_i's$. A similar claim holds for $D\in\{A_Y,A_{Y'},A_{Y''}\}$.
\end{lemma}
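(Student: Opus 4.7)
The plan is to directly invoke Lemma~\ref{lemma:eigenspaces3} and chase through the definition of the change matrix $P_X$ in terms of the block $N(\alpha,\beta,u,v,w)$ from~\eqref{equation:AlphaBetaN}. Recall that $A_X = \lambda_x P_X^{-1} A P_X$, so Lemma~\ref{lemma:eigenspaces3} says that a basis for the eigenspace of $A_X$ corresponding to the eigenvalue $\lambda_x$ is obtained by summing rows $p_{3i}+p_{3i+1}+p_{3i+2}$ of $P_X$ for $i\in\{0,\ldots,\ell/3-1\}$. The essence of the proof is therefore just to compute this sum for each of $A_X, A_{X'}, A_{X''}$.

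First I would read off the three rows of the $i$-th block of $P_X$ from the definition of $N(\alpha,\beta,x_i,x_i',x_i'')$: they are $x_i$, $x_i-\tfrac{\alpha\gamma_1}{\gamma_1-1}x_i'+\tfrac{\beta}{\gamma_1-1}x_i''$, and $x_i+\tfrac{\alpha}{\gamma_1-1}x_i'-\tfrac{\beta\gamma_1}{\gamma_1-1}x_i''$. Summing gives coefficients $3$ for $x_i$, $\tfrac{\alpha(1-\gamma_1)}{\gamma_1-1}=-\alpha$ for $x_i'$, and $\tfrac{\beta(1-\gamma_1)}{\gamma_1-1}=-\beta$ for $x_i''$. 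So the eigenspace of $A_X$ for eigenvalue $\lambda_x$ is $\Span{\{3 x_i-\alpha x_i'-\beta x_i''\}_{i=0}^{\ell/3-1}}$, which has the required form with $c_0=3$, $c_1=-\alpha$, $c_2=-\beta$. The same calculation for $A_{X'}$, which uses the block $N(\alpha',\beta',x_i',x_i'',x_i)$, gives $\Span{\{-\beta' x_i+3x_i'-\alpha'x_i''\}_{i}}$, and for $A_{X''}$ it gives $\Span{\{-\alpha'' x_i-\beta''x_i'+3x_i''\}_{i}}$, each of the claimed form.

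It remains to verify that all resulting constants $c_0,c_1,c_2$ are nonzero. The coefficients $\alpha,\alpha',\alpha'',\beta,\beta',\beta''$ lie in $\bF_q^*$ by Lemma~\ref{lemma:oneMatching3}, and the coefficient $3$ is nonzero in $\bF_q$ because the standing hypothesis $3\mid q-1$ already excludes characteristic $3$ (if $q=3^k$ then $3\nmid 3^k-1$). The claim about $D\in\{A_Y,A_{Y'},A_{Y''}\}$ follows by an identical computation using the block matrices $N(\cdot,\cdot,y_i,y_i',y_i'')$, $N(\cdot,\cdot,y_i',y_i'',y_i)$, $N(\cdot,\cdot,y_i'',y_i,y_i')$ of $P_Y$. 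There is no real obstacle in this proof beyond the bookkeeping of the three scalar coefficients emerging from $N$; the only mildly subtle point is noticing that $3\ne 0$ in $\bF_q$ is a free consequence of the root-of-unity hypothesis rather than an extra assumption.
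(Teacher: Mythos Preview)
Your proof is correct and follows essentially the same approach as the paper: both invoke Lemma~\ref{lemma:eigenspaces3} to identify the eigenspace basis as the row sums $p_{3i}+p_{3i+1}+p_{3i+2}$ of the change matrix, and both arrive at exactly the spans $\Span{\{3x_i-\alpha x_i'-\beta x_i''\}_i}$, $\Span{\{3x_i'-\alpha' x_i''-\beta' x_i\}_i}$, $\Span{\{3x_i''-\alpha'' x_i-\beta'' x_i'\}_i}$. Your explicit justification that $3\ne 0$ follows from $3\mid q-1$ is a welcome detail that the paper leaves implicit.
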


\begin{proof}
	According to Lemma~\ref{lemma:eigenspaces3} and Lemma~\ref{lemma:oneMatching3}, the eigenspace of $A_X$ which corresponds to the eigenvalue $\lambda_x$ is $\Span{\{3x_i-\alpha x_i'-\beta x_i''\}_{i=0}^{\ell/3-1}}$, the eigenspace of $A_X'$ which corresponds to the eigenvalue $\lambda_x$ is $\Span{\{3x_i'-\alpha' x_i''-\beta' x_i\}_{i=0}^{\ell/3-1}}$, and the eigenspace of $A_X''$ which corresponds to the eigenvalue $\lambda_x$ is $\Span{\{3x_i''-\alpha'' x_i-\beta'' x_i'\}_{i=0}^{\ell/3-1}}$. The claim for $D$ is similar.
\end{proof}

Lemma~\ref{lemma:missingCase}, together with Remark~\ref{remark:missingCase} and Corollary~\ref{corollary:eigenspaceIntesections}, implies that any eigenspace of $C\in\{A_X,A_{X'},A_{X''}\}$ and an eigenspace of $D\in\{A_Y,A_{Y'},A_{Y''}\}$ intersect at a subspace of dimension $\ell/9$. This gives rise to the following lemma.

\begin{lemma}\label{lemma:simDig}
Every pair of a matrix $C$ from $C_\cX$ and a matrix $D$ from $C_\cY$ are simultaneously diagonalizable. Furthermore, $C^2$ and $D^2$ are simultaneously diagonalizable as well.
\end{lemma}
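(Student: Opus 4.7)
The plan is to exploit the fact that two diagonalizable matrices are simultaneously diagonalizable precisely when their respective eigenspace decompositions admit a common refinement. Let $E_1, E_2, E_3$ denote the three eigenspaces of $C$ (corresponding to eigenvalues $\lambda_x, \gamma_1\lambda_x, \gamma_2\lambda_x$) and $F_1, F_2, F_3$ the three eigenspaces of $D$ (corresponding to eigenvalues $\lambda_y, \gamma_1\lambda_y, \gamma_2\lambda_y$). Since $C$ and $D$ are diagonalizable by Lemma~\ref{lemma:oneMatching3} (via Lemma~\ref{lemma:eigenspaces3}), each has $\dim E_i = \dim F_j = \ell/3$ and $\bF_q^\ell = E_1\oplus E_2 \oplus E_3 = F_1 \oplus F_2 \oplus F_3$.

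The heart of the argument is to show that $\dim(E_i \cap F_j) = \ell/9$ for every pair $(i,j)$. For $C\in\{A_X,A_{X'},A_{X''}\}$, the proof of Lemma~\ref{lemma:oneMatching3} identifies two of its eigenspaces as subspaces from the list $\{S_X,S_{X'},S_{X''}\}$, while the third eigenspace (for eigenvalue $\lambda_x$) has the form described in Lemma~\ref{lemma:missingCase}. An analogous statement holds for $D$ with respect to $\cY$. Combining Corollary~\ref{corollary:eigenspaceIntesections} with the extension recorded in Remark~\ref{remark:missingCase}, every pair $(E_i,F_j)$ falls under one of the four cases covered there, so indeed $\dim(E_i\cap F_j)=\ell/9$.

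From here the argument is a clean dimension count. Fix $i$. The subspaces $E_i\cap F_1,\, E_i\cap F_2,\, E_i\cap F_3$ sit inside the internally direct sum $F_1\oplus F_2\oplus F_3$, so their own sum is direct as well, of dimension $3\cdot \ell/9 = \ell/3 = \dim E_i$. Hence $E_i=\bigoplus_{j=1}^3(E_i\cap F_j)$, and summing over $i$ yields
\begin{equation*}
\bF_q^\ell \;=\; \bigoplus_{i,j=1}^{3}(E_i\cap F_j).
\end{equation*}
Choosing any basis for each intersection $E_i\cap F_j$ and taking their union produces a basis of $\bF_q^\ell$ consisting of common eigenvectors of $C$ and $D$, which is exactly the condition for simultaneous diagonalizability.

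The statement for $C^2$ and $D^2$ then comes for free: if $v\in E_i\cap F_j$ satisfies $vC=\mu v$ and $vD=\nu v$, then $vC^2=\mu^2 v$ and $vD^2=\nu^2 v$, so the same basis diagonalizes $C^2$ and $D^2$ simultaneously. No additional work is needed, even in the (harmless) event that the squaring collapses distinct eigenvalues, since simultaneous diagonalizability is preserved under polynomial substitution in each variable. I do not anticipate a serious obstacle here; the only delicate point is verifying that the $\lambda_x$-eigenspace of $C$ really does fit the hypothesis of Remark~\ref{remark:missingCase} (all three coefficients $c_0,c_1,c_2$ nonzero), which is exactly the content of Lemma~\ref{lemma:missingCase}.
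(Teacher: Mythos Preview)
Your proof is correct and follows essentially the same approach as the paper: both establish that every pair of eigenspaces $(E_i,F_j)$ intersects in dimension $\ell/9$ (via Corollary~\ref{corollary:eigenspaceIntesections}, Remark~\ref{remark:missingCase}, and Lemma~\ref{lemma:missingCase}), then use a dimension count to produce a common eigenbasis, and finally observe that the same basis diagonalizes the squares. Your write-up is slightly more explicit about the direct-sum structure in the dimension count, but the underlying argument is identical.
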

\begin{proof}
Let $S_1,S_2,$ and $S_3$ be the eigenspaces of $C$, let $S_4,S_5,$ and $S_6$ be the eigenspaces of $D$, and note since $C$ and $D$ are diagonalizable, we have that $S_1+S_2+S_3=S_4+S_5+S_6=\bF_q^\ell$. According to Corollary~\ref{corollary:eigenspaceIntesections}, for any $i\in[3]$, we have that $\dim(S_i\cap S_4)=\dim(S_i\cap S_5)=\dim(S_i\cap S_6)=\ell/9$. Therefore, since $S_4+S_5+S_6=\bF_q^\ell$, it follows that $S_i$ contributes exactly $\ell/3$ mutual linearly independent eigenvectors of $C$ and $D$. Since $S_1+S_2+S_3=\bF_q^\ell$, it follows that there exists $\ell$ mutual linearly independent eigenvectors of $C$ and $D$, and hence they are mutually diagonalizable.

Now, since $C$ and $D$ are simultaneously diagonalizable, it follows that there exists an invertible matrix $P$, and diagonal matrices $E$ and $F$, such that $C=P^{-1}EP$ and $D=P^{-1}FP$. This implies that $C^2=P^{-1}E^2P$ and $D=P^{-1}F^2P$, and therefore, since the square of a diagonal matrix is a diagonal matrix, it follows that $C^2$ and $D^2$ are simultaneously diagonalizable as well.
\end{proof}

\begin{lemma}\label{lemma:simDiagAreInv}
	If $C$ and $D$ are $\ell\times\ell$ simultaneously diagonalizable matrices with no mutual eigenvalues, then $C-D$ is invertible.
\end{lemma}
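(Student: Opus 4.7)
The plan is to use the definition of simultaneous diagonalizability directly. Since $C$ and $D$ are simultaneously diagonalizable, I would start by invoking the existence of an invertible matrix $P\in\bF_q^{\ell\times\ell}$ and diagonal matrices $E,F\in\bF_q^{\ell\times\ell}$ such that $C=P^{-1}EP$ and $D=P^{-1}FP$. This is exactly the setup that was used at the end of the proof of Lemma~\ref{lemma:simDig}, so it is safe to rely on it.

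Next I would write $C-D = P^{-1}(E-F)P$, and observe that since $P$ (and hence $P^{-1}$) is invertible, $C-D$ is invertible if and only if $E-F$ is invertible. Because $E$ and $F$ are both diagonal, so is $E-F$, and its diagonal entries are precisely $E_{ii}-F_{ii}$ for $i\in\{0,\ldots,\ell-1\}$. A diagonal matrix is invertible iff all its diagonal entries are nonzero, so it suffices to argue that $E_{ii}\ne F_{ii}$ for every $i$.

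The key observation is that $E_{ii}$ is an eigenvalue of $C$ (with eigenvector the $i$-th row of $P$), and likewise $F_{ii}$ is an eigenvalue of $D$. Under the hypothesis that $C$ and $D$ have no eigenvalue in common, we obtain $E_{ii}\ne F_{ii}$ for all $i$, so $E-F$ is a diagonal matrix with nonzero entries on the diagonal, hence invertible, and therefore so is $C-D$. I do not anticipate any genuine obstacle here: the whole lemma is essentially the translation of a well-known linear-algebra fact through the conjugation by $P$, and every step reduces to reading off diagonal entries of $E-F$.
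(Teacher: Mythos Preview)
Your proposal is correct and essentially matches the paper's proof: the paper takes a common eigenbasis $p_0,\ldots,p_{\ell-1}$ and computes $p_i(C-D)=(\lambda_{i,1}-\lambda_{i,2})p_i\ne 0$, which is exactly your diagonal-entry argument rewritten in terms of the rows of $P$ rather than the conjugated matrix $E-F$. The two arguments are the same idea viewed from slightly different angles.
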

\iffull
\begin{proof}
	Let $p_0,\ldots,p_{\ell-1}$ be a basis of mutual eigenvectors of $C$ and $D$. Clearly, for all $i\in\{0,\ldots,\ell-1\}$ we have that $p_i(C-D)=p_iC-p_iD=\lambda_{i,1}p_i-\lambda_{i,2}p_i=(\lambda_{i,1}-\lambda_{i,2})p_i$, where $\lambda_{i,1},\lambda_{i,2}$ are the eigenvalues which correspond to $p_i$. Since $\lambda_{i,1}\ne\lambda_{i,2}$, we have that $p_i\in\image(C-D)$. Therefore, since $p_0,\ldots,p_{\ell-1}$ is a basis, we have that $C-D$ is invertible.
\end{proof}
\fi

\fi
The following lemma shows that it is possible to unite the $\AS$-sets $C_\cX, C_\cY$ which were constructed using different matchings that satisfy the pairing condition (Definition~\ref{definition:pairingCondition}), as long as a simple condition regarding the chosen constants $\lambda_x,\lambda_y$ is met. \iffull
\fi

\begin{lemma}\label{lemma:twoMatchings3}
If $\lambda_x^6\ne \lambda_y^6$, then $C_\cX\cup C_\cY$ satisfies the subspace condition.
\end{lemma}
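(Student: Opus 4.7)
The plan is to verify each of the three components of the subspace condition for the union $C_\cX \cup C_\cY$ by combining the single-matching statement in Lemma~\ref{lemma:oneMatching3} with the commutativity structure that Lemma~\ref{lemma:simDig} provides across the two matchings. The independence property is purely local to each individual pair $(A,S)$, so it follows immediately from Lemma~\ref{lemma:oneMatching3} applied separately to $\cX$ and $\cY$.

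For the invariance property, the within-set instances follow from Lemma~\ref{lemma:oneMatching3}, so it suffices to treat the cross terms, with $S_X A_Y = S_X$ being representative. I would fix an $x \in X$ and use the pairing condition to determine which color class of $\cY$ contains the entire $\cX$-edge through $x$. If this edge lies in $Y'$ or $Y''$, then $x$ belongs to $S_{Y'}$ or $S_{Y''}$; since these are eigenspaces of $A_Y$ (with eigenvalues $\gamma_2\lambda_y$ and $\gamma_1\lambda_y$ by the proof of Lemma~\ref{lemma:oneMatching3}), $xA_Y$ is a scalar multiple of $x$ and hence lies in $S_X$. If instead the edge lies in $Y$, then $x = y_j$ for some $j$, and by the proof of Lemma~\ref{lemma:oneMatching3}, $y_j A_Y$ is a fixed linear combination of $y_j, y_j', y_j''$; applying the pairing condition to the $\cY$-edge $\{y_j, y_j', y_j''\}$ now forces $y_j', y_j'' \in X$ as well, so $xA_Y \in \Span{X} = S_X$. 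The remaining cross terms follow by the same argument.

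For the nonsingular property, within-set instances of Conditions~\ref{condition:rankMetric}--\ref{condition:3x3} are already handled by Lemma~\ref{lemma:oneMatching3}, so I focus on cross pairs and mixed triples. For Conditions~\ref{condition:rankMetric} and~\ref{condition:squareRankMetric} with $C \in C_\cX$ and $D \in C_\cY$, Lemma~\ref{lemma:simDig} gives that $(C,D)$ and $(C^2,D^2)$ are each simultaneously diagonalizable. The eigenvalues of $C$ and $D$ are $\lambda_x\cdot\{1,\gamma_1,\gamma_2\}$ and $\lambda_y\cdot\{1,\gamma_1,\gamma_2\}$, which are disjoint iff $(\lambda_x/\lambda_y)^3 \ne 1$; the eigenvalues of $C^2$ and $D^2$ are the corresponding squared sets, disjoint iff $(\lambda_x/\lambda_y)^6 \ne 1$. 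Both of these follow from the hypothesis $\lambda_x^6 \ne \lambda_y^6$, so Lemma~\ref{lemma:simDiagAreInv} yields the invertibility of $C - D$ and $C^2 - D^2$.

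The main obstacle is Condition~\ref{condition:3x3} for a mixed triple $(A_i, A_j, A_t)$, since matrices within the same $\AS$-set need not commute. After a row permutation of the $3\times 3$ block (which preserves invertibility), I arrange for $A_i$ to be the matrix coming from the minority $\AS$-set; block-column elimination against the first block column then reduces the invertibility of the full $3 \times 3$ block to that of $\begin{pmatrix} A_j - A_i & A_j^2 - A_i^2 \\ A_t - A_i & A_t^2 - A_i^2 \end{pmatrix}$. Because $A_i$ now commutes with both $A_j$ and $A_t$ by Lemma~\ref{lemma:simDig}, the factorization $A_s^2 - A_i^2 = (A_s - A_i)(A_s + A_i)$ is valid for $s \in \{j,t\}$, yielding the decomposition $\mathrm{diag}(A_j - A_i,\ A_t - A_i) \cdot \begin{pmatrix} I & A_j + A_i \\ I & A_t + A_i \end{pmatrix}$. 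The diagonal factor is invertible by the cross-pair instance of Condition~\ref{condition:rankMetric} just established, and the second factor reduces (by subtracting the first block row from the second) to the invertibility of $A_t - A_j$, which holds by Condition~\ref{condition:rankMetric} applied within the majority $\AS$-set via Lemma~\ref{lemma:oneMatching3}. This completes the verification.
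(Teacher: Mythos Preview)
Your proof is correct. For the independence property and for the nonsingular property your argument coincides with the paper's: both of you invoke Lemma~\ref{lemma:simDig} for simultaneous diagonalizability, Lemma~\ref{lemma:simDiagAreInv} to dispose of Conditions~\ref{condition:rankMetric}--\ref{condition:squareRankMetric} across matchings, and for Condition~\ref{condition:3x3} both of you place the minority matrix in the position that commutes with the other two, factor $A_s^2-A_i^2=(A_s-A_i)(A_s+A_i)$, and reduce to the invertibility of $A_t-A_j$. The paper phrases this last step via the Schur-complement expression $(A_t^2-A_i^2)-(A_j^2-A_i^2)(A_j-A_i)^{-1}(A_t-A_i)$ while you use a block $\mathrm{diag}(A_j-A_i,A_t-A_i)$ factorization, but these are the same computation.

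The genuine difference is in the invariance property. The paper proves $S\,D=S$ for $S\in\{S_X,S_{X'},S_{X''}\}$ and $D\in\{A_Y,A_{Y'},A_{Y''}\}$ abstractly, by showing (via Corollary~\ref{corollary:eigenspaceIntesections}, Remark~\ref{remark:missingCase}, Lemma~\ref{lemma:missingCase}) that every eigenspace of $D$ meets $S$ in dimension $\ell/9$, so $S$ admits a basis of $D$-eigenvectors. Your argument is more elementary and combinatorial: you track where each generator $x\in X$ sits in the coloring of $\cY$, use that $S_{Y'},S_{Y''}$ are eigenspaces of $A_Y$ for the first two cases, and in the remaining case invoke the explicit formula for $y_jA_Y$ together with the pairing condition applied to the $\cY$-edge through $y_j$. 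This is exactly the style of argument the paper uses for $r=2$ (Lemma~\ref{lemma:twoMatchings}); it avoids the eigenspace-intersection machinery at the cost of being tied to the concrete description of $A_Y$. The paper's route, by contrast, is insensitive to the particular form of the change matrices and would transfer more directly to other choices of $N(\alpha,\beta,\cdot)$.
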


\iffull
\begin{proof}
Note that the invariance, independence, and nonsingular properties which involve matrices and subspaces from one matching follow immediately from Lemma~\ref{lemma:oneMatching3}. It remains to prove the cases of the invariance property and the nonsingular property which involve matrices from different matchings.

To prove the invariance property, let $S\in\{S_X,S_{X'},S_{X''}\}$ and $D\in\{A_Y,A_{Y'},A_{Y''}\}$. According to Corollary~\ref{corollary:eigenspaceIntesections}, Remark~\ref{remark:missingCase}, and Lemma~\ref{lemma:missingCase}, all eigenspaces of $D$ intersect with $S$ at a subspace of dimension $\ell/9$. Since the eigenspaces of $D$ are disjoint and span the entire space, it follows that $S$ contains a basis of eigenvectors of $D$. Therefore, $SD=S$. A similar proof holds if $S\in\{S_Y,S_{Y'},S_{Y''}\}$ and $D\in\{A_X,A_{X'},A_{X''}\}$.

	It remains to prove Conditions~\ref{condition:rankMetric},\ref{condition:squareRankMetric} and~\ref{condition:3x3} of the nonsingular property. To prove Condition~\ref{condition:rankMetric}, let $C\in\{A_X,A_{X'},A_{X''}\}$ and $D\in\{A_Y,A_{Y'},A_{Y''}\}$. By Lemma~\ref{lemma:simDig}, we have that $C$ and $D$ are simultaneously diagonalizable. Moreover, according to Lemma~\ref{lemma:eigenspaces3}, the eigenvalues of $C$ are $\lambda_x,\gamma_1\lambda_x,\gamma_2\lambda_x$, and the eigenvalues of $D$ are $\lambda_y,\gamma_1\lambda_y,\gamma_2\lambda_y$. Since $\lambda_x^6\ne \lambda_y^6$, we have that  $\{\lambda_x,\gamma_1\lambda_x,\gamma_2\lambda_x\}\cap\{\lambda_y,\gamma_1\lambda_y,\gamma_2\lambda_y\}=\varnothing$. Hence, Lemma~\ref{lemma:simDiagAreInv} implies that $\rank (C-D)=\ell$, which implies Condition~\ref{condition:rankMetric} of the nonsingular property. Condition~\ref{condition:squareRankMetric} also follows similarly - by Lemma~\ref{lemma:eigenspaces3} we have that the eigenvalues of $C^2$ are $\lambda_x^2,\gamma_1\lambda_x^2,\gamma_2\lambda_x^2$, the eigenvalues of $D^2$ are $\lambda_y^2,\gamma_1\lambda_y^2,\gamma_2\lambda_y^2$, and since $\lambda_x^6\ne \lambda_y^6$, it follows that $\{\lambda_x^2,\gamma_1\lambda_x^2,\gamma_2\lambda_x^2\}\cap\{\lambda_y^2,\gamma_1\lambda_y^2,\gamma_2\lambda_y^2\}=\varnothing$ (see Lemma~\ref{lemma:toThe6} in~\nameref{appendixC}). Hence, Lemma~\ref{lemma:simDiagAreInv} implies that $\rank(C^2-D^2)=\ell$, which implies Condition~\ref{condition:squareRankMetric} of the nonsingular property.

As for Condition~\ref{condition:3x3}, let $A_i,A_j,A_t$ be matrices in the $\AS$-set, which correspond to at least two distinct matchings. Recall that
\begin{eqnarray*}
\begin{pmatrix}
I & I & I \\
A_i & A_j & A_t \\
A_i^2 & A_j^2 & A_t^2 \\
\end{pmatrix}
\end{eqnarray*}
is invertible if and only if
\begin{eqnarray*}
(A_{t}^2-A_{i}^2)-(A_{j}^2-A_{i}^2)(A_{j}-A_i)^{-1}(A_{t}-A_i)
\end{eqnarray*}
is invertible (see the proof of Lemma~\ref{lemma:oneMatching3}). W.l.o.g assume that $A_i$ and $A_j$ correspond to different matchings, and so do $A_i$ and $A_t$. According to Lemma~\ref{lemma:simDig}, we have that $A_i$ commutes with $A_j$ and $A_t$. Hence,
\begin{eqnarray*}
(A_{t}^2-A_{i}^2)-(A_{j}^2-A_{i}^2)(A_{j}-A_i)^{-1}(A_{t}-A_i)&=\\(A_t+A_i)(A_t-A_i)-(A_j+A_i)(A_j-A_i)(A_j-A_i)^{-1}(A_t-A_i)&=\\
(A_t+A_i)(A_t-A_i)-(A_j+A_i)(A_t-A_i)&
\end{eqnarray*}
Multiplying from the right by $(A_t-A_i)^{-1}$, which exists by Condition~\ref{condition:rankMetric}, yields,
\begin{eqnarray*}
(A_t+A_i)-(A_j+A_i)&=A_t-A_j
\end{eqnarray*}
which is invertible by Condition~\ref{condition:rankMetric}.
\end{proof}
\fi

\subsection{Construction of Matchings for Three Parities}
In this subsection we present a set of matchings $\{\cX_i\}_{i\in[m]}$ such that any two satisfy the pairing condition, and construct the resulting $\AS$-set. Recall that each vertex in the complete 3-unifrom hypergraph $K_\ell^3$ is represented by a unique unit vector of length $\ell$. For convenience, we describe this set of matchings by considering vertex $e_i$ as the integer $i$ in its \textit{ternary} representation. The construction of a proper set of matchings relies on the following definition, which is the three parity equivalent of Definition~\ref{definition:booleanCube}.

\begin{definition}
Given an integer $i\in[m]$ and a value $b\in\{0,1,2\}$, the \emph{ternary cube} $C(i,b)$ is the set of all length $m$ vectors over $\{0,1,2\}$ that have $b$ in entry $i$. That is,
\[
C(i,b)\triangleq\left\{x\in\{0,1,2\}^m~\vert~x_i=b\right\}.
\]
For convenience, we consider the elements in a ternary cube as \emph{ordered} according to the lexicographic order (see Example~\ref{example:ternaryCube} below).
\end{definition}

\begin{example}\label{example:ternaryCube}
If $m=3$ then the ternary cube $C(2,2)$ is the set $\{v_1,\ldots,v_9\}$ such that
\[
(v_1,\ldots,v_9)=(020,021,022,120,121,122,220,221,222).
\]
\end{example}

\begin{definition}\label{definition:matchings3}
For any $m\in\bN$, define $m$ matchings $\{\cX=(X_i,X_i',X_i'')\}_{i\in[m]}$ as follows
\begin{eqnarray*}
\cX_{i}:~~
\begin{cases}
X_i = C(i,0)\\
X_i' = C(i,1)\\
X_i'' = C(i,2).
\end{cases}
\end{eqnarray*}
\end{definition}

\begin{lemma}\label{lemma:pairingCondition3}
The matchings from Definition~\ref{definition:matchings3} satisfy the pairing condition.
\end{lemma}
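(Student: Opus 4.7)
The plan is to verify directly from the definitions that for any two distinct indices $i, j \in [m]$, every edge of $\cX_i$ is monochromatic in $\cX_j$ (and by symmetry every edge of $\cX_j$ is monochromatic in $\cX_i$). The key structural observation is that, in Definition~\ref{definition:matchings3}, the color classes of $\cX_i$ are exactly the ``slices'' of the ternary cube $\{0,1,2\}^m$ obtained by fixing the value of coordinate $i$, so an edge of $\cX_i$ is a triple of vertices that differ \emph{only} in coordinate $i$.

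First I would describe the edges of $\cX_i$ explicitly. Because each $X_i = C(i,0)$, $X_i' = C(i,1)$, $X_i'' = C(i,2)$ is ordered lexicographically, the map that deletes coordinate $i$ gives the same order on each of the three color classes. Consequently the $s$-th edge of $\cX_i$ consists of three vertices that agree in every coordinate $t \ne i$ and carry the values $0, 1, 2$ in coordinate $i$. In particular, for any $j \ne i$ all three vertices of such an edge have a common value at coordinate $j$.

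Next I would conclude the monochromaticity. Since the color classes of $\cX_j$ are $C(j,0), C(j,1), C(j,2)$, and the three vertices of the edge share a common $j$-th coordinate $b \in \{0,1,2\}$, all three vertices lie in $C(j,b)$, which is a single color class of $\cX_j$. Hence every edge of $\cX_i$ is monochromatic in $\cX_j$. By the symmetric roles of $i$ and $j$, every edge of $\cX_j$ is also monochromatic in $\cX_i$, so the pairing condition of Definition~\ref{definition:pairingCondition} is satisfied.

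There is essentially no obstacle; the statement is a direct structural consequence of the cube slicing used in Definition~\ref{definition:matchings3}. The only mild point to be explicit about is the compatibility of the lexicographic orders across $C(i,0)$, $C(i,1)$, $C(i,2)$, which ensures the $s$-th triple really is an edge whose members differ only in coordinate $i$; this is immediate since fixing a single coordinate and then sorting the rest yields the same ordering on the remaining coordinates in all three color classes.
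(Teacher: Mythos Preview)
Your proposal is correct and follows essentially the same approach as the paper: you observe that an edge of $\cX_i$ consists of three vertices agreeing on all coordinates except $i$, hence sharing a common $j$-th coordinate and therefore lying in a single color class $C(j,b)$ of $\cX_j$, with the other direction by symmetry. The only difference is cosmetic---you are a bit more explicit about why the lexicographic orderings on $C(i,0),C(i,1),C(i,2)$ align so that the $s$-th elements form an edge, whereas the paper simply asserts this as a consequence of the definition.
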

\iffull
\begin{proof}
Let $\cX_i=(X_i,X_i',X_i''),\cX_j=(X_j,X_j',X_j'')$ be two distinct matchings. To show that these matchings satisfy the pairing condition, we must show that any edge from $\cX_i$ is contained in either of $X_j,X_j',X_j''$. Let $(x_t,x_t',x_t'')$ be an edge in $\cX_i$ for some $t\in[\ell/3]$. By Definition~\ref{definition:matchings3}, we have that $x_t,x_t'$, and $x_t''$ have the same value in every entry other than entry $i$, and $(x_t)_i=0,(x_t')_i=1$, and $(x_t'')_i=2$. Hence, the $j$-th entry of $x_t,x_t'$, and $x_t''$ is equal, and hence $\{x_t,x_t',x_t''\}$ is contained in either of $X_j,X_j',X_j''$ as required. The other direction is symmetric.
\end{proof}
\fi

We conclude with the following theorem.

\begin{theorem}\label{theorem:3ParitiesAccessOptimal}
If $m$ is a positive integer, and $q$ is a prime power such that \begin{enumerate}
\item if $q$ is odd, then $3|q-1$ and $q\ge 6m+1$,
\item if $q$ is even, then $3|q-1$ and $q\ge 3m+1$,
\end{enumerate}
then there exists an explicitly defined $\AS$-set $\bC_1$ of size $3m$ and $3^m\times 3^m$ matrices over $\bF_q$, which satisfies the subspace condition.
\end{theorem}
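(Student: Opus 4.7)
The plan is to combine the $m$ matchings from Definition~\ref{definition:matchings3} into a single $\AS$-set by selecting a suitable constant $\lambda_i\in\bF_q^*$ for each matching $\cX_i$, and then applying the single-matching construction of Lemma~\ref{lemma:oneMatching3} followed by the two-matching composition result of Lemma~\ref{lemma:twoMatchings3}.

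First, I will define
\[
\bC_1 \triangleq \bigcup_{i\in[m]} \left\{\, (A_{X_i}(\lambda_i), S_{X_i}),\ (A_{X_i'}(\lambda_i), S_{X_i'}),\ (A_{X_i''}(\lambda_i), S_{X_i''}) \,\right\},
\]
where each triple comes from Lemma~\ref{lemma:oneMatching3} applied to the matching $\cX_i=(X_i,X_i',X_i'')$ of Definition~\ref{definition:matchings3}. Clearly $|\bC_1|=3m$. The single-matching portions of the subspace condition hold by Lemma~\ref{lemma:oneMatching3}, and every pair $\cX_i,\cX_j$ satisfies the pairing condition by Lemma~\ref{lemma:pairingCondition3}. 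Thus the only remaining task is to choose $\lambda_1,\ldots,\lambda_m\in\bF_q^*$ such that $\lambda_i^6\ne\lambda_j^6$ for all $i\ne j$, which by Lemma~\ref{lemma:twoMatchings3} then secures the cross-matching invariance and nonsingular properties.

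Hence the heart of the proof is the existence of such $m$ constants, which is a direct counting argument on the image of the sixth-power map $\phi\colon\bF_q^*\to\bF_q^*$, $\phi(x)=x^6$. The standing assumption $3\mid q-1$ is used here. If $q$ is odd then $q-1$ is even, so $6\mid q-1$; the kernel of $\phi$ has size $\gcd(6,q-1)=6$, hence $|\image\phi|=(q-1)/6$. The hypothesis $q\ge 6m+1$ therefore yields $|\image\phi|\ge m$, and I can pick $m$ elements $\mu_1,\ldots,\mu_m$ in $\image\phi$, then set $\lambda_i$ to be any preimage of $\mu_i$. If $q$ is even then $\ch\bF_q=2$ and $x\mapsto x^2$ is a bijection on $\bF_q^*$, so $\phi(x)=(x^3)^2$ has the same image as $x\mapsto x^3$; the kernel of $x\mapsto x^3$ has size $\gcd(3,q-1)=3$, giving $|\image\phi|=(q-1)/3\ge m$ under the hypothesis $q\ge 3m+1$. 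In either case, $m$ constants with pairwise distinct sixth powers exist and can be selected explicitly.

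With these $\lambda_i$'s fixed, the invariance, independence and nonsingular properties within each matching follow from Lemma~\ref{lemma:oneMatching3}, while the corresponding properties across any two matchings $\cX_i$ and $\cX_j$ ($i\ne j$) follow from Lemma~\ref{lemma:twoMatchings3}, since $\lambda_i^6\ne \lambda_j^6$ by construction and $\cX_i,\cX_j$ satisfy the pairing condition. No step here is truly combinatorial; the only conceptual obstacle is verifying that the field size bounds in the hypothesis are the tight bounds needed for the sixth-power image to have at least $m$ elements, which is handled by the parity split above. This yields the desired $\AS$-set $\bC_1$ of size $3m$ of $3^m\times 3^m$ matrices over $\bF_q$ satisfying the subspace condition.
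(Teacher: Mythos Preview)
Your approach mirrors the paper's almost exactly: assemble $\bC_1$ from the matchings of Definition~\ref{definition:matchings3}, invoke Lemma~\ref{lemma:oneMatching3} within each matching, Lemma~\ref{lemma:pairingCondition3} for the pairing condition, and Lemma~\ref{lemma:twoMatchings3} across matchings, with the choice of the $\lambda_i$'s reduced to counting sixth powers in $\bF_q^*$. Your counting via the kernel--image structure of $x\mapsto x^6$ is a clean way to phrase what the paper does more informally.

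There is, however, one point you gloss over that the paper addresses explicitly. Lemma~\ref{lemma:twoMatchings3} as stated concerns the union $C_{\cX}\cup C_{\cY}$ of \emph{two} matching-codes, so citing it pairwise directly covers the invariance property and the $2\times 2$ block cases of the nonsingular property for $\bC_1$. But Condition~\ref{condition:3x3} (the $3\times 3$ block case) can involve matrices $A_i,A_j,A_t$ drawn from \emph{three} different matchings, a situation not literally in the scope of the statement of Lemma~\ref{lemma:twoMatchings3}. The paper notes this and observes that the \emph{proof} of Condition~\ref{condition:3x3} in Lemma~\ref{lemma:twoMatchings3} only needs that among $\{A_i,A_j,A_t\}$ two of the pairs come from distinct matchings (so that the required commutations hold), which is trivially the case when all three matchings are distinct. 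You should add a sentence to this effect; otherwise your reduction ``across any two matchings'' does not quite finish off the nonsingular property.
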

\iffull
\begin{proof}
Let $\{\cX_i=(X_i,X_i',X_i'')\}_{i\in[m]}$ be the set of matchings from Definition~\ref{definition:matchings3}, which by Lemma~\ref{lemma:pairingCondition3} satisfies the pairing condition (Definition~\ref{definition:pairingCondition}). Let $\{\lambda_i\}_{i\in[m]}$ be any set of distinct nonzero elements of $\bF_q$ such that $\lambda_i^6\ne \lambda_j^6$ for any $i\ne j$. Notice that the existence of such set is guaranteed in fields of either odd or even characteristic. The former is due to $q\ge 6m+1$, where the latter is due to the fact that $\lambda_i^6=\lambda_j^6$ if and only if $\lambda_i^3=\pm\lambda_j^3$, which implies $\lambda_i^3=\lambda_j^3$ in fields with even characteristic, and thus $q\ge 3m+1$ suffices.
For any $\cX_i,i\in[m]$, define the code
\[
C_{\cX_i}=\{(A_{X}(\lambda_i),S_X),(A_{X'}(\lambda_i),S_{X'}),(A_{X''}(\lambda_i),S_{X''})\}
\]
as defined in Lemma~\ref{lemma:oneMatching3}, and let $\bC_1\triangleq\cup_{i\in[m]}C_{\cX_i}$. Notice that in $\bC_1$, Condition~\ref{condition:3x3} of the nonsingular property might involve matrices from \textit{three} different matchings, rather than two, as was considered in the proof of Lemma~\ref{lemma:twoMatchings3}. However, the proof of Condition~\ref{condition:3x3} in Lemma~\ref{lemma:twoMatchings3} requires two pairs of matrices among $\{A_i,A_j,A_t\}$ to belong to distinct matchings, in order for the resulting $3\times 3$ matrix to be invertible. This requirement is trivially satisfied also when considering matrices from three different matchings. Hence, since the pairing condition is satisfied, and since $\lambda_i^6\ne\lambda_j^6$ for all $i\ne j$, it follows by Lemma~\ref{lemma:twoMatchings3} that $\bC$ satisfies the subspace condition.
\end{proof}
\fi

\section{An Improved Construction over a Larger Field}\label{section:rPlus1}
In this section, a construction with $r=3$ and $k=(r+1)m=4m$ is presented. This construction requires a field larger than the one in Section~\ref{section:construction3}, yet still linear in $m$. This construction is comparable to~\cite{LongMDS} in terms of the parameter $k$, but outperforms it in terms of explicitness and field size. As in~\cite{LongMDS}, the construction considered in this section is not access-optimal, and is not known to achieve the sub-packetization bound. The ideas behind the construction follow the outline described in Section~\ref{section:ourTechniques}.

\subsection{A Code from One Matching}
A matching $\cZ=(Z,Z',Z'')$ will provide an $\AS$-set of size $r+1=4$, denoted by
\[
(A_{Z},S_{Z}),(A_{Z'},S_{Z'}),(A_{Z''},S_{Z''}),(A_{Z^*},S_{Z^*}).
\]
As in Section~\ref{section:construction3}, we assume that $3|q-1$ in order to have three roots of unity of order 3, denoted by $1,\gamma_1,\gamma_2$. The matrices in this construction are of the form $P^{-1}AP$, where $A$ was defined in~\eqref{equation:A3}.
The matrices $P$, as in~\eqref{equation:frameworkP}, consists of constituent $3\times\ell$ matrices which are defined using the following operator $N$. For $u,v,w\in\bF_q^\ell$, let
\begin{eqnarray}
N(u,v,w)&\triangleq&
\begin{pmatrix}
1 & 1 & 1\\
1 & \gamma_2 & \gamma_1 \\
1 & \gamma_1 & \gamma_2
\end{pmatrix}
\begin{pmatrix}
u\\
v\\
w
\end{pmatrix}=
\begin{pmatrix}
u+v+w\\
u+\gamma_2 v+\gamma_1 w\\
u+\gamma_1 v+\gamma_2 w\\
\end{pmatrix}.
\label{equation:N}
\end{eqnarray}
Notice that the $3\times \ell$ matrix $N(u,v,w)$ is row equivalent to a matrix whose rows are $u,v,w$, since $N(u,v,w)$ is defined as the multiplication of a matrix whose rows are $u,v,w$ by a Vandermonde matrix (since $\gamma_1^2=\gamma_2$ and $\gamma_1^2=\gamma_2$).
\begin{lemma}\label{lemma:SpacesMatrixA}
If $\{u_i\}_{i=0}^{\ell/3-1}\cup\{v_i\}_{i=0}^{\ell/3-1}\cup\{w_i\}_{i=0}^{\ell/3-1}$ is a basis of $\bF_q^\ell$, then the matrix $P^{-1}AP$, where
\begin{eqnarray*}
P &\triangleq &
\begin{pmatrix}
N( u_0, &v_0, &w_0)\\
 \vdots &\vdots &\vdots\\
N(u_{\ell/3-1}, &v_{\ell/3-1}, &w_{\ell/3-1})
\end{pmatrix},
\end{eqnarray*}
has $\Span{\{u_i\}_{i=0}^{\ell/3-1}}$ as an eigenspace for the eigenvalue $1$, $\Span{\{v_i\}_{i=0}^{\ell/3-1}}$ as an eigenspace for the eigenvalue $\gamma_1$, and $\Span{\{w_i\}_{i=0}^{\ell/3-1}}$ as an eigenspace for the eigenvalue $\gamma_2$. In addition, the subspace $\Span{\{u_i+v_i+w_i\}_{i=0}^{\ell/3-1}}$ is an independent subspace.
\end{lemma}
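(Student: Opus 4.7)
The plan is to apply Lemma~\ref{lemma:eigenspaces3} directly with $\lambda=1$ and read off each eigenspace by substituting the explicit rows of $P$. First I would note that $P$ is invertible: by construction, each $3\times\ell$ block $N(u_i,v_i,w_i)$ is the product of a $3\times 3$ Vandermonde matrix in $1,\gamma_1,\gamma_2$ (which is invertible since $\gamma_1\ne\gamma_2$) with the matrix whose rows are $u_i,v_i,w_i$. Hence $P$ is row-equivalent to the matrix whose rows are $u_0,v_0,w_0,\ldots,u_{\ell/3-1},v_{\ell/3-1},w_{\ell/3-1}$, which is invertible by the basis assumption, and so $P^{-1}AP$ is well defined.

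Next I would label the rows of $P$ as $p_0,\ldots,p_{\ell-1}$, so that by~\eqref{equation:N},
\begin{equation*}
p_{3i}=u_i+v_i+w_i,\quad p_{3i+1}=u_i+\gamma_2 v_i+\gamma_1 w_i,\quad p_{3i+2}=u_i+\gamma_1 v_i+\gamma_2 w_i.
\end{equation*}
Applying Lemma~\ref{lemma:eigenspaces3} with $\lambda=1$, the three eigenspaces of $P^{-1}AP$ are spanned respectively by the vectors $p_{3i}+p_{3i+1}+p_{3i+2}$, $p_{3i}+\gamma_1 p_{3i+1}+\gamma_2 p_{3i+2}$, and $p_{3i}+\gamma_2 p_{3i+1}+\gamma_1 p_{3i+2}$ for $i\in\{0,\ldots,\ell/3-1\}$, corresponding to eigenvalues $1,\gamma_1,\gamma_2$.

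The computational core is to simplify each of these three sums using the identities $1+\gamma_1+\gamma_2=0$, $\gamma_1\gamma_2=1$, $\gamma_1^2=\gamma_2$, and $\gamma_2^2=\gamma_1$ from Lemma~\ref{lemma:rootsOfUnity}. For the eigenvalue $1$, the coefficients of $v_i$ and $w_i$ collapse to $1+\gamma_1+\gamma_2=0$, leaving $3u_i$. For the eigenvalue $\gamma_1$, the coefficient of $u_i$ is $1+\gamma_1+\gamma_2=0$, the coefficient of $v_i$ is $1+\gamma_1\gamma_2+\gamma_1\gamma_2=3$, and the coefficient of $w_i$ is $1+\gamma_1^2+\gamma_2^2=1+\gamma_2+\gamma_1=0$, yielding $3v_i$. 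The computation for $\gamma_2$ is symmetric and yields $3w_i$. Since the assumption $3\mid q-1$ forces $\operatorname{char}\bF_q\ne 3$, the factor $3$ is invertible and the resulting eigenspaces are exactly $\Span{\{u_i\}}$, $\Span{\{v_i\}}$, $\Span{\{w_i\}}$, respectively.

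Finally, for the independent subspace claim, I would invoke the second part of Lemma~\ref{lemma:eigenspaces3}, which says that $\Span{p_0,p_3,p_6,\ldots,p_{\ell-3}}$ is an independent subspace of $P^{-1}AP$. Since $p_{3i}=u_i+v_i+w_i$, this span equals $\Span{\{u_i+v_i+w_i\}_{i=0}^{\ell/3-1}}$, concluding the proof. The main obstacle is purely bookkeeping in the three symbolic sums above; no conceptual difficulty arises once Lemma~\ref{lemma:eigenspaces3} is in hand, and the characteristic restriction $\operatorname{char}\bF_q\ne 3$ is exactly what is needed for the factor of $3$ to be harmless.
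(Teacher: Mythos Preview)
Your proposal is correct and follows essentially the same approach as the paper: both apply Lemma~\ref{lemma:eigenspaces3} (with $\lambda=1$) to the explicit rows of $P$, simplify the three linear combinations using $1+\gamma_1+\gamma_2=0$ and $\gamma_1\gamma_2=1$ to obtain $3u_i,\,3v_i,\,3w_i$, and read off the independent subspace from $p_{3i}=u_i+v_i+w_i$. Your added remarks on the invertibility of $P$ and on $\operatorname{char}\bF_q\ne 3$ (so that the factor $3$ is harmless) are details the paper leaves implicit, but the argument is otherwise identical.
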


\begin{proof}
According to Lemma~\ref{lemma:eigenspaces3}, the matrix $P^{-1}AP$, where the rows of $P$ are $p_0,\ldots,p_{\ell-1}$, has the following eigenspaces.
\begin{enumerate}
\item For the eigenvalue $1$, a basis of the eigenspace is
\begin{eqnarray*}
\left\{p_{3i}+p_{3i+1}+p_{3i+2}~\vert~i\in\{0,\ldots,\ell/3-1\}\right\}&=&\\
\left\{ (u_i+v_i+w_i)+(u_i+\gamma_2v_i+\gamma_1w_i)+(u_i+\gamma_1v_i+\gamma_2w_i)~\vert ~i\in\{0,\ldots,\ell/3-1\right\}&=&\\
\left\{ 3u_i+(1+\gamma_1+\gamma_2)v_i+(1+\gamma_1+\gamma_2)w_i~\vert ~i\in\{0,\ldots,\ell/3-1\right\}&=&\{3u_i\}_{i=0}^{\ell/3-1}
\end{eqnarray*}
\item For the eigenvalue $\gamma_1$, a basis of the eigenspace is
\begin{eqnarray*}
\{p_{3i}+\gamma_1p_{3i+1}+\gamma_2p_{3i+2}~\vert~i\in\{0,\ldots,\ell/3-1\}\}&=&\\
\left\{ (u_i+v_i+w_i)+\gamma_1(u_i+\gamma_2v_i+\gamma_1w_i)+\gamma_2(u_i+\gamma_1v_i+\gamma_2w_i)~\vert ~i\in\{0,\ldots,\ell/3-1\right\}&=&\\
\left\{ 3v_i+(1+\gamma_1+\gamma_2)u_i+(1+\gamma_1+\gamma_2)w_i~\vert ~i\in\{0,\ldots,\ell/3-1\right\}&=&\{3v_i\}_{i=0}^{\ell/3-1}
\end{eqnarray*}
\item For the eigenvalue $\gamma_2$, a basis of the eigenspace is
\begin{eqnarray*}
\{p_{3i}+\gamma_2p_{3i+1}+\gamma_1p_{3i+2}~\vert~i\in\{0,\ldots,\ell/3-1\}\}&=&\\
\left\{ (u_i+v_i+w_i)+\gamma_2(u_i+\gamma_2v_i+\gamma_1w_i)+\gamma_1(u_i+\gamma_1v_i+\gamma_2w_i)~\vert ~i\in\{0,\ldots,\ell/3-1\right\}&=&\\
\left\{ 3w_i+(1+\gamma_1+\gamma_2)u_i+(1+\gamma_1+\gamma_2)v_i~\vert ~i\in\{0,\ldots,\ell/3-1\right\}&=&\{3w_i\}_{i=0}^{\ell/3-1}
\end{eqnarray*}
\end{enumerate}
In addition, by Lemma~\ref{lemma:eigenspaces3} we have that $\Span{\{u_i+v_i+w_i\}_{i=0}^{\ell/3-1}}$ is an independent subspace.
\end{proof}

Similarly, we have the following claim about matrices of the form $P^{-1}A^2P$.
\begin{lemma}\label{lemma:matrixA^2}
If $\{u_i\}_{i=0}^{\ell/3-1}\cup\{v_i\}_{i=0}^{\ell/3-1}\cup\{w_i\}_{i=0}^{\ell/3-1}$ is a basis of $\bF_q^\ell$, then the matrix $P^{-1}A^2P$ (where $P$ was defined in Lemma~\ref{lemma:SpacesMatrixA}) has $\Span{\{u_i\}_{i=0}^{\ell/3-1}}$ as an eigenspace for the eigenvalue $1$, $\Span{\{w_i\}_{i=0}^{\ell/3-1}}$ as an eigenspace for the eigenvalue $\gamma_1$, and $\Span{\{v_i\}_{i=0}^{\ell/3-1}}$ as an eigenspace for the eigenvalue $\gamma_2$. In addition, the subspace $\Span{\{u_i+v_i+w_i\}_{i=0}^{\ell/3-1}}$ is an independent subspace.
\end{lemma}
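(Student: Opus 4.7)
The plan is to exploit the simple observation that $P^{-1}A^2 P = (P^{-1}AP)^2$, so that the spectral data of $P^{-1}A^2P$ is determined by the spectral data of $P^{-1}AP$ already computed in Lemma~\ref{lemma:SpacesMatrixA}. I would denote $B \triangleq P^{-1}AP$ for clarity, so that $P^{-1}A^2P = B^2$, and then proceed in two steps.

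First, for the eigenspaces: if $v$ satisfies $vB = \lambda v$, then $vB^2 = \lambda^2 v$, so every eigenspace of $B$ is also an eigenspace of $B^2$, with the eigenvalue squared. From Lemma~\ref{lemma:SpacesMatrixA} we have that $\Span{\{u_i\}}$, $\Span{\{v_i\}}$, and $\Span{\{w_i\}}$ are the eigenspaces of $B$ for the eigenvalues $1$, $\gamma_1$, and $\gamma_2$, respectively. Applying Lemma~\ref{lemma:rootsOfUnity}, which yields $\gamma_1^2 = \gamma_2$ and $\gamma_2^2 = \gamma_1$ (since $\gamma_2 = \gamma_1^2$ and $\gamma_1\gamma_2 = 1$), these three subspaces are eigenspaces of $B^2$ for eigenvalues $1$, $\gamma_2$, and $\gamma_1$, respectively, which is exactly the claim.

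Second, for the independent subspace claim about $T \triangleq \Span{\{u_i+v_i+w_i\}_{i=0}^{\ell/3-1}}$: I would observe that the minimal polynomial of $A$ is $x^3 - 1$ (its companion-matrix blocks satisfy $C^3 = I$), so $A^3 = I$ and consequently $B^3 = I$. Therefore $(B^2)^2 = B^4 = B$, which means the independence condition for $B^2$, namely $T + T B^2 + T(B^2)^2 = \bF_q^\ell$, is literally the same as the independence condition $T + TB + TB^2 = \bF_q^\ell$ for $B$ itself. The latter is exactly the independent-subspace statement in Lemma~\ref{lemma:SpacesMatrixA}, so the claim follows immediately.

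I do not anticipate any real obstacle here; the whole statement reduces to the two algebraic facts that squaring permutes $\{1,\gamma_1,\gamma_2\}$ via $1 \mapsto 1$, $\gamma_1 \mapsto \gamma_2$, $\gamma_2 \mapsto \gamma_1$, and that $B$ has order dividing $3$, so squaring is a bijection on the group generated by $B$. The lemma is essentially a transcription of Lemma~\ref{lemma:SpacesMatrixA} under these two observations.
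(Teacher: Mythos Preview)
Your proposal is correct and follows essentially the same approach as the paper's proof: both use the factorization $P^{-1}A^2P=(P^{-1}AP)^2$ to square the eigenvalues from Lemma~\ref{lemma:SpacesMatrixA} (invoking $\gamma_1^2=\gamma_2$, $\gamma_2^2=\gamma_1$), and both handle the independent-subspace claim by noting $A^3=I$ so that $(P^{-1}A^2P)^2=P^{-1}AP$, reducing the condition to the one already established. Your introduction of the shorthand $B$ makes the write-up slightly cleaner, but the argument is otherwise identical.
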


\begin{proof}
According to Lemma~\ref{lemma:SpacesMatrixA}, for each $i,~0\le i\le \ell/3-1$, we have that
\begin{eqnarray*}
u_i\cdot P^{-1}A^2P&=&u_i\cdot(P^{-1}AP)\cdot(P^{-1}AP)\\
&=&u_i\cdot(P^{-1}AP)=u_i\\
w_i\cdot P^{-1}A^2P&=&w_i\cdot(P^{-1}AP)\cdot(P^{-1}AP)\\
&=&\gamma_2 w_i\cdot(P^{-1}AP)=\gamma_2^2w_i=\gamma_1w_i\\
v_i\cdot P^{-1}A^2P&=&v_i\cdot(P^{-1}AP)\cdot(P^{-1}AP)\\
&=&\gamma_1 v_i\cdot(P^{-1}AP)=\gamma_1^2v_i=\gamma_2v_i.
\end{eqnarray*}
To see that $S\triangleq\Span{\{u_i+v_i+w_i\}_{i=0}^{\ell/3-1}}$ is an independent subspace of $P^{-1}A^2P$, recall that by Lemma~\ref{lemma:SpacesMatrixA}, we have that $S$ is an independent subspace of $P^{-1}AP$, namely, $S+S(P^{-1}AP)+S(P^{-1}A^2P)=\bF_{q^\ell}$. Since the minimal polynomial of $A$ is $x^3-1$, we have that $P^{-1}A^4P=P^{-1}AP$. Hence,
\[S+S(P^{-1}A^2P)+S(P^{-1}A^4P)=S+S(P^{-1}A^2P)+S(P^{-1}AP)=\bF_q^\ell,\]
 and therefore $S$ is an independent subspace of $P^{-1}A^2P$ as well.
\end{proof}

Recall that the matching $\cZ$ consists of the edges $\left\{\{z_i,z_i',z_i''\}\right\}_{i=0}^{\ell/3-1}$. The following invertible matrices are used in the construction.
\begin{eqnarray*}
P_{Z} &\triangleq &
\begin{pmatrix}
N(z_0+z_0'+z_0'', &-z_0', &-z_0'')\\
 \vdots & \vdots &\vdots\\
N( z_{\ell/3-1}+z_{\ell/3-1}'+z_{\ell/3-1}'', &-z_{\ell/3-1}', &-z_{\ell/3-1}'')
\end{pmatrix}\\
~\\
P_{Z'} &\triangleq &
\begin{pmatrix}
N( -z_0'', &-z_0, &z_0+z_0'+z_0'')\\
 \vdots & \vdots &\vdots\\
N( -z_{\ell/3-1}'', &-z_{\ell/3-1}, &z_{\ell/3-1}+z_{\ell/3-1}'+z_{\ell/3-1}'')
\end{pmatrix}\\
~\\
P_{Z''} &\triangleq &
\begin{pmatrix}
N( -z_0', &z_0+z_0'+z_0'', &-z_0)\\
 \vdots & \vdots &\vdots\\
N( -z_{\ell/3-1}', &z_{\ell/3-1}+z_{\ell/3-1}'+z_{\ell/3-1}'', &-z_{\ell/3-1})
\end{pmatrix}\\
~\\
P_{Z^*} &\triangleq &
\begin{pmatrix}
N( z_0, &z_0'', &z_0')\\
 \vdots & \vdots &\vdots\\
N( z_{\ell/3-1}, &z_{\ell/3-1}'', &z_{\ell/3-1}')
\end{pmatrix}
\end{eqnarray*}

\begin{definition}\label{definition:oneMatching}
For a matching $\cZ=(Z,Z',Z'')$ and any distinct nonzero field elements $\lambda_{Z},\lambda_{Z'},\lambda_{Z''},$ and $\lambda_{Z^*}$, let
\begin{align*}
A_Z&\triangleq \lambda_{Z}\cdot
P_Z^{-1}A
P_Z,& S_Z&\triangleq\Span{Z}=\Span{\{z_i\}_{i=0}^{\ell/3-1}}\\
A_{Z'}&\triangleq\lambda_{Z'}\cdot
P_{Z'}^{-1}A
P_{Z'},& S_{Z'}&\triangleq\Span{Z'}=\Span{\{z_i'\}_{i=0}^{\ell/3-1}}\\
A_{Z''}&\triangleq\lambda_{Z''}\cdot
P_{Z''}^{-1}A
P_{Z''},& S_{Z''}&\triangleq\Span{Z''}=\Span{\{z_i''\}_{i=0}^{\ell/3-1}}\\
A_{Z^*}&\triangleq\lambda_{Z^*}\cdot
P_{Z^*}^{-1}A
P_{Z^*},& S_{Z^*}&\triangleq\Span{\{z_i+z_i'+z_i''\}_{i=0}^{\ell/3-1}}.
\end{align*}
\end{definition}

In the following we show that in a large enough field, there exists a choice of field elements $\lambda_{Z},\lambda_{Z'},\lambda_{Z''},\lambda_{Z^*}$ such that the $\AS$-set in  Definition~\ref{definition:oneMatching} satisfies the subspace property, and this choice can be done efficiently. Our choice will satisfy that $\lambda_{Z'}=\lambda_Z\cdot h, \lambda_{Z''}=\lambda_Z\cdot h^2$, and $\lambda_{Z^*}=\lambda_Z\cdot h^3$ for some $h\in\bF_q^*$. A suitable value of $h$ and $\lambda_Z$ will be chosen at the end of the proof of the following theorem.

\begin{theorem}\label{theorem:subspaceConditionOneMatching}
If $q$ is large enough (yet independent of $m$), then there exists a choice of values $\lambda_{Z},\lambda_{Z'},\lambda_{Z''},\lambda_{Z^*}$ such that the $\AS$-set from Definition~\ref{definition:oneMatching} satisfies the subspace condition.
\end{theorem}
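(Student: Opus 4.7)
The plan is to split the verification of the subspace condition into the three properties, handling the invariance and independence properties by structural means and pushing all the quantitative work onto the nonsingular property, where a finite list of polynomial nonvanishing conditions in $h$ (with the parameterization $\lambda_{Z'}=h\lambda_Z$, $\lambda_{Z''}=h^2\lambda_Z$, $\lambda_{Z^*}=h^3\lambda_Z$) is what must be arranged.

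First, I would apply Lemma~\ref{lemma:SpacesMatrixA} to each of $P_Z,P_{Z'},P_{Z''},P_{Z^*}$. This identifies, for every $X\in\{Z,Z',Z'',Z^*\}$, the independent subspace and the three eigenspaces of $A_X$; in each case $S_X$ itself is the independent subspace of $A_X$ (so the independence property holds), and for every $Y\ne X$ the subspace $S_X$ appears as one of the three eigenspaces of $A_Y$ (so the invariance property holds). Neither assertion places any restriction on $\lambda_Z,\lambda_{Z'},\lambda_{Z''},\lambda_{Z^*}$.

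Next I would exploit a key structural observation: each $A_X$ preserves the three–dimensional ``edge subspace'' $V_i\triangleq\Span{z_i,z_i',z_i''}$ for every $i\in\{0,\ldots,\ell/3-1\}$, because all three eigenspaces of $A_X$ break up as a direct sum of one–dimensional contributions indexed by the edges of the matching. Consequently, writing the matrices in the basis $\bigcup_i\{z_i,z_i',z_i''\}$ makes each $A_X$ block diagonal with $\ell/3$ identical $3\times 3$ blocks $M_X$. A short computation, expressing each of $z_i,z_i',z_i''$ as a sum of eigenvectors of $A_X$ and multiplying by the respective eigenvalues, yields explicit forms: $M_{Z^*}=\lambda_{Z^*}\cdot\mathrm{diag}(1,\gamma_2,\gamma_1)$ is diagonal, while $M_Z,M_{Z'},M_{Z''}$ are triangular up to a permutation of rows, with the off-diagonal entries controlled by $1-\gamma_1,1-\gamma_2,\gamma_2-\gamma_1,\gamma_1-1$ and the corresponding $\lambda_X$. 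After the substitution $\lambda_{Z'}=h\lambda_Z$ etc., each $M_X$ has entries in $\lambda_Z\cdot\bF_q[h]$.

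With this edge-local picture the three requirements of the nonsingular property reduce to finitely many polynomial nonvanishings: conditions~\ref{condition:rankMetric} and~\ref{condition:squareRankMetric} become $\det(M_i-M_j)\ne 0$ and $\det(M_i^2-M_j^2)\ne 0$ for each of the six pairs, while condition~\ref{condition:3x3} becomes $\det$ of a $9\times 9$ block matrix for each of the four triples. One verifies directly that, after extracting the appropriate power of $\lambda_Z$, each of these determinants is a polynomial in $h$ of degree bounded by an absolute constant (independent of $m$), and that each polynomial is not identically zero---for instance, the simplest case $\det(M_Z-M_{Z^*})$ is upper triangular and equals (up to a nonzero scalar) $(1-h^3)(\gamma_1-\gamma_2 h^3)(\gamma_2-\gamma_1 h^3)$, whose nonvanishing amounts to $h^9\ne 1$. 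Taking the union of the finitely many forbidden values of $h$ coming from all sixteen polynomial conditions, and similarly excluding $h=0$ and requiring $\lambda_Z\ne 0$, one obtains a bounded set of bad parameters; hence any sufficiently large $q$ admits a valid choice of $(\lambda_Z,h)$.

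The main obstacle is condition~\ref{condition:3x3}. In the multi-matching setting of Lemma~\ref{lemma:twoMatchings3}, the key simplification was the simultaneous diagonalizability (and hence commutativity) of matrices coming from different matchings, which allowed the $9\times 9$ block determinant to be manipulated into a simpler form. Here, within a single matching, $A_Z$ and $A_{Z'}$ share only two of their three eigenspaces and are genuinely not simultaneously diagonalizable---their third eigenspaces $S_{Z'}$ and $S_Z$ are transverse---so one cannot commute them past each other. Each of the four $9\times 9$ determinants must therefore be expanded directly (or reduced via a block row reduction analogous to~\eqref{equation:rowEquivalence3x3} but without relying on commutativity) and its nonvanishing as a polynomial in $h$ established by exhibiting, say, a specific $h\in\{1,\gamma_1,\gamma_2,\ldots\}$ at which the polynomial is nonzero, or by extracting a nontrivial coefficient. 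This is the tedious but mechanical step; everything else in the proof is structural.
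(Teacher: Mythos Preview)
Your proposal is correct and follows essentially the same route as the paper: Lemma~\ref{lemma:SpacesMatrixA} handles invariance and independence via Table~\ref{table:1}, and the nonsingular property is reduced to a finite list of polynomial nonvanishing conditions in $h$ by working edge-by-edge (the paper's matrices $\hat\Phi(h),\hat\Psi(h),\Upsilon_i(h)$ are exactly your $M_i-M_j$, $M_i^2-M_j^2$, and the row-reduced $9\times 9$ blocks), culminating in a single polynomial $Q(h)$ of bounded degree whose nonroots give admissible $h$. Your explicit statement that each $A_X$ is block diagonal with $\ell/3$ identical $3\times3$ blocks $M_X$ in the basis $\bigcup_i\{z_i,z_i',z_i''\}$ is a clean way to package what the paper does implicitly via the repeated computations $(z_i,z_i',z_i'')^\top\cdot(\,\cdot\,)=(\text{$3\times3$ matrix})\cdot(z_i,z_i',z_i'')^\top$, but the two arguments are mathematically identical.
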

\begin{proof}
By Lemma~\ref{lemma:SpacesMatrixA}, we have the following table.
\begin{center}
\begin{table}[h]
\caption{Eigenspaces of $A_Z,A_{Z'},A_{Z''},$ and $A_{Z^*}$.}\label{table:1}
\begin{tabular}{|c|c|c|c|c|}
\hline & Eigenspace for $\lambda$ & Eigenspace for $\lambda \gamma_1$ & Eigenspace for $\lambda \gamma_2$ & Independent subspace\\\hline
$A_Z, \lambda=\lambda_Z$     & $S_{Z^*}$ & $S_{Z'}$ & $S_{Z''}$   & $S_Z$\\\hline
$A_{Z'} , \lambda=\lambda_{Z'}$  & $S_{Z''}$ & $S_{Z}$  & $S_{Z^*}$   & $S_{Z'}$\\\hline
$A_{Z''}, \lambda=\lambda_{Z''}$ & $S_{Z'}$  & $S_{Z^*}$    & $S_{Z}$ & $S_{Z''}$\\\hline
$A_{Z^*}, \lambda=\lambda_{Z^*}$ & $S_{Z}$   & $S_{Z''}$  & $S_{Z'}$  & $S_{Z^*}$\\\hline
\end{tabular}
\end{table}
\end{center}
and hence, the independence and the invariance properties hold.

To show the nonsingular property, assume for now that $h$ is chosen such that every distinct $\lambda_1,\lambda_2$ in $\{\lambda_Z,\lambda_{Z'},\lambda_{Z''},\lambda_{Z^*}\}=\{\lambda_Z,\lambda_{Z}\cdot h,\lambda_{Z}\cdot h^2,\lambda_{Z}\cdot h^3\}$ satisfy $\lambda_1^6\ne \lambda_2^6$. Notice that this requirement implies that $h^6,h^{12},h^{18}\ne 1$. A specific choice of $h$ which satisfies this condition, as well as additional conditions that will emerge in the sequel, will be shown at the end of this proof.

We first show that the difference between any two matrices is of full rank. We show that $A_Z-A_{Z'}$ is of full rank, and the rest of the cases, which are similar, are given in~\nameref{appendix:rankDistance}. Notice that
\begin{eqnarray*}
\nonumber z_i''(A_Z-A_{Z'})&=&(\lambda_Z\gamma_2-\lambda_{Z'})z_i''\\
\nonumber (z_i+z_i'+z_i'')(A_Z-A_{Z'})&=&(\lambda_Z-\lambda_{Z'}\gamma_2)(z_i+z_i'+z_i'')\\
\nonumber z_i(A_Z-A_{Z'})&=&\lambda_Z\left(z_i+(1-\gamma_1)z_i'+(1-\gamma_2)z_i''\right)-\lambda_{Z'}\gamma_1z_i\\
 &=&(\lambda_Z-\lambda_{Z'}\gamma_1)z_i+\lambda_Z(1-\gamma_1)z_i'+\lambda_Z(1-\gamma_2)z_i'',
\end{eqnarray*}
which is equivalent to
\begin{align*}
\begin{pmatrix}
z_i''\\z_i+z_i'+z_i''\\z_i
\end{pmatrix}\cdot (A_Z-A_{Z'})&=\lambda_Z\cdot
\begin{pmatrix}
0 & 0&\gamma_2-h  \\
1-h\gamma_2 & 1-h\gamma_2&1-h\gamma_2\\
1-h\gamma_1 & 1-\gamma_1&1-\gamma_2
\end{pmatrix}\cdot \begin{pmatrix}
z_i\\z_i'\\z_i''
\end{pmatrix},
\end{align*}
and since 
\begin{align*}
\begin{pmatrix}
z_i''\\z_i+z_i'+z_i''\\z_i
\end{pmatrix}=\begin{pmatrix}
0&0&1\\1&1&1\\1&0&0
\end{pmatrix}\cdot\begin{pmatrix}
z_i\\z_i'\\z_i''
\end{pmatrix}
\end{align*}
we may write

\begin{align}\label{equation:AZAZ'matrix}
\begin{pmatrix}
z_i\\z_i'\\z_i''
\end{pmatrix}\cdot (A_Z-A_{Z'})&\triangleq\lambda_Z\cdot\begin{pmatrix}
0&0&1\\1&1&1\\1&0&0
\end{pmatrix}^{-1}
\cdot\hat{\Phi}(h)\cdot\begin{pmatrix}
z_i\\z_i'\\z_i''
\end{pmatrix}.
\end{align}

 To show that $z_i,z_i',z_i''\in\image(A_Z-A_{Z'})$ it suffices to show that $\det\hat{\Phi}(h)\ne 0$. Since $\gamma_2\ne h$ (otherwise, $h^6=1$), it follows that $\det\hat{\Phi}(h)$ is nonzero if and only if 
 \[
(1-h\gamma_2)\cdot (1-\gamma_1)\ne(1-h\gamma_2)\cdot (1-h\gamma_1),
 \]
which also follows easily from the fact that $h^6\ne 1$.
%

To show that the difference between any two squares of matrices is of full rank, notice that by Lemma~\ref{lemma:matrixA^2}, we have the following table.
\begin{center}
\begin{table}[h]
\caption{Eigenspaces of $A_Z^2,A_{Z'}^2,A_{Z''}^2,$ and $A_{Z^*}^2$.}\label{table:2}
\begin{tabular}{|c|c|c|c|c|}
\hline & Eigenspace for $\lambda$ & Eigenspace for $\lambda \gamma_1$ & Eigenspace for $\lambda \gamma_2$ & Independent subspace\\\hline
$A_Z^2, \lambda=\lambda_Z^2$     & $S_{Z^*}$ & $S_{Z''}$ & $S_{Z'}$   & $S_Z$\\\hline
$A_{Z'}^2 , \lambda=\lambda_{Z'}^2$  & $S_{Z''}$ & $S_{Z^*}$  & $S_{Z}$   & $S_{Z'}$\\\hline
$A_{Z''}^2, \lambda=\lambda_{Z''}^2$ & $S_{Z'}$  & $S_{Z}$    & $S_{Z^*}$ & $S_{Z''}$\\\hline
$A_{Z^*}^2, \lambda=\lambda_{Z^*}^2$ & $S_{Z}$   & $S_{Z'}$  & $S_{Z''}$  & $S_{Z^*}$\\\hline
\end{tabular}
\end{table}
\end{center}

We show that $A_Z^2-A_{Z'}^2$ is of full rank. The rest of the cases are similar, and are given in~\nameref{appendix:squareRankDistance}. Notice that
\begin{eqnarray*}
\nonumber z_i''(A_Z^2-A_{Z'}^2)&=&(\lambda_Z^2 \gamma_1-\lambda_{Z'}^2)z_i''\\
\nonumber (z_i+z_i'+z_i'')(A_Z^2-A_{Z'}^2)&=&(\lambda_Z^2-\lambda_{Z'}^2\gamma_1)(z_i+z_i'+z_i'')\\
\nonumber z_i(A_Z^2-A_{Z'}^2)&=&\lambda_Z^2(z_i+(1-\gamma_2)z_i'+(1-\gamma_1)z_i'')-\lambda_{Z'}^2\gamma_2 z_i\\
\label{equation:SquareRankDistance1}&=&\lambda_Z^2(z_i+z_i'+z_i'')-\lambda_Z^2 \gamma_2 z_i'-\lambda_Z^2 \gamma_1 z_i''-\lambda_{Z'}^2\gamma_2z_i,
\end{eqnarray*}
which may be written as

\begin{equation*}
\begin{pmatrix}
z_i''\\z_i+z_i'+z_i''\\z_i
\end{pmatrix}\cdot (A_Z^2-A_{Z'}^2)=\lambda_Z^2
\begin{pmatrix}
0 & 0&\gamma_1-h^2 \\
1-h^2\gamma_1 & 1-h^2\gamma_1&1-h^2\gamma_1\\
1-h^2\gamma_2 & 1-\gamma_2&1-\gamma_1
\end{pmatrix}\cdot \begin{pmatrix}
z_i\\z_i'\\z_i''
\end{pmatrix}\triangleq \lambda_Z^2\cdot\hat{\Psi}(h)\cdot\begin{pmatrix}
z_i\\z_i'\\z_i''
\end{pmatrix}.
\end{equation*}
To show that $z_i,z_i',z_i''\in\image(A_Z^2-A_{Z'}^2)$ it suffices to show that $\det \hat{\Psi}(h)\ne 0$. Since $\gamma_1\ne h^2$ (otherwise, $h^6=1$) it follows that $\det\hat{\Psi}(h)$ is nonzero if and only if
\[
(1-h^2\gamma_1)\cdot(1-h^2\gamma_2)\ne (1-h^2\gamma_1)\cdot(1-\gamma_2),
\]
which also follows easily from the fact that $h^6\ne 1$.

To show that Condition~\ref{condition:3x3} of the nonsingular property (i.e. that any $3\times 3$ block submatrix of the non systematic part of the generator matrix is invertible, as mentioned at the beginning of Section~\ref{section:construction3}), we must show that the following matrices are invertible
\begin{eqnarray*}\label{equation:rowEquivalence3x3r+1}
V(Z,Z',Z'')\triangleq\begin{pmatrix}
I & I & I \\
A_Z & A_{Z'} & A_{Z''} \\
A_Z^2 & A_{Z'}^2 & A_{Z''}^2
\end{pmatrix}
,&
V(Z,Z',Z^*)\triangleq\begin{pmatrix}
I & I & I \\
A_Z & A_{Z'} & A_{Z^*} \\
A_Z^2 & A_{Z'}^2 & A_{Z^*}^2
\end{pmatrix}\\
V(Z,Z'',Z^*)\triangleq\begin{pmatrix}
I & I & I \\
A_Z & A_{Z''} & A_{Z^*} \\
A_Z^2 & A_{Z''}^2 & A_{Z^*}^2
\end{pmatrix}
,&
V(Z',Z'',Z^*)\triangleq\begin{pmatrix}
I & I & I \\
A_{Z'} & A_{Z''} & A_{Z^*} \\
A_{Z'}^2 & A_{Z''}^2 & A_{Z^*}^2
\end{pmatrix}.
\end{eqnarray*}

Using elementary block row operations, we have that $V(Z,Z',Z'')$ is invertible if and only if
\begin{eqnarray*}
\begin{pmatrix}
I & I & I \\
0 & I & (A_{Z'}-A_Z)^{-1}(A_{Z''}-A_Z) \\
0 & 0 & (A_{Z}^2-A_{Z''}^2)-(A_{Z}^2-A_{Z'}^2)(A_{Z}-A_{Z'})^{-1}(A_{Z}-A_{Z''})
\end{pmatrix}
\end{eqnarray*}
is invertible. Clearly, this matrix is invertible if and only if \begin{eqnarray*}
L_1\triangleq(A_{Z}^2-A_{Z''}^2)(A_{Z}-A_{Z''})^{-1}-(A_{Z}^2-A_{Z'}^2)(A_{Z}-A_{Z'})^{-1}
\end{eqnarray*}
is invertible. Similarly, $V(Z,Z',Z^*)$, $V(Z,Z'',Z^*)$, and $V(Z',Z'',Z^*)$ are invertible if and only~if
\begin{align*}
L_2&\triangleq(A_{Z}^2-A_{Z^*}^2)(A_{Z}-A_{Z^*})^{-1}-(A_{Z}^2-A_{Z'}^2)(A_{Z}-A_{Z'})^{-1}\\
L_3&\triangleq(A_{Z}^2-A_{Z^*}^2)(A_{Z}-A_{Z^*})^{-1}-(A_{Z}^2-A_{Z''}^2)(A_{Z}-A_{Z''})^{-1}\\
L_4&\triangleq(A_{Z'}^2-A_{Z^*}^2)(A_{Z'}-A_{Z^*})^{-1}-(A_{Z'}^2-A_{Z''}^2)(A_{Z'}-A_{Z''})^{-1}
\end{align*}
are invertible. To show that $L_1,L_2,L_3$, and $L_4$ are invertible, we show that the image of each of them contains $z_i,z_i',z_i''$ for all $i\in\{0,\ldots,\ell/3-1\}$. Notice that by Table~\ref{table:2},
\begin{eqnarray}
\nonumber (z_i+z_i'+z_i'')L_1&=&(z_i+z_i'+z_i'')(A_{Z}^2-A_{Z''}^2)(A_{Z}-A_{Z''})^{-1}-\\\nonumber &~&(z_i+z_i'+z_i'')(A_{Z}^2-A_{Z'}^2)(A_{Z}-A_{Z'})^{-1}\\\nonumber
&=&(\lambda_{Z}^2-\lambda_{Z''}^2\gamma_2)(z_i+z_i'+z_i'')(A_{Z}-A_{Z''})^{-1}- \\\nonumber
&~& (\lambda_Z^2-\lambda_{Z'}^2\gamma_1)(z_i+z_i'+z_i'')(A_{Z}-A_{Z'})^{-1}\\\label{equation:L1set}
z_i L_1&=& z_i(A_{Z}^2-A_{Z''}^2)(A_{Z}-A_{Z''})^{-1}-z_i(A_{Z}^2-A_{Z'}^2)(A_{Z}-A_{Z'})^{-1}\\\nonumber
&=&\left(\lambda_Z^2(z_i+z_i'+z_i'')-\lambda_Z^2 \gamma_2 z_i'-\lambda_Z^2 \gamma_1 z_i''-\lambda_{Z''}^2 \gamma_1 z_i\right)(A_{Z}-A_{Z''})^{-1}-\\\nonumber
&~&\left(\lambda_Z^2(z_i+z_i'+z_i'')-\lambda_Z^2 \gamma_2 z_i'-\lambda_Z^2 \gamma_1 z_i''-\lambda_{Z'}^2\gamma_2z_i\right)(A_{Z}-A_{Z'})^{-1}\\\nonumber
z_i'L_1&=&  z_i'(A_{Z}^2-A_{Z''}^2)(A_{Z}-A_{Z''})^{-1}-z_i'(A_{Z}^2-A_{Z'}^2)(A_{Z}-A_{Z'})^{-1}=\\\nonumber
&=&(\lambda_{Z}^2\gamma_2-\lambda_{Z''}^2)z_i'(A_{Z}-A_{Z''})^{-1}-\\\nonumber
&~&\left(\lambda_{Z'}^2(\gamma_2-\gamma_1)z_i+(\lambda_Z^2 \gamma_2-\lambda_{Z'}^2 \gamma_1)z_i'+\lambda_{Z'}^2(1-\gamma_1)z_i''\right)(A_{Z}-A_{Z'})^{-1},
\end{eqnarray}
which can be written as
\begin{align}\label{equation:L1Matrix}
\nonumber\begin{pmatrix}
z_i+z_i'+z_i''\\z_i\\z_i'
\end{pmatrix}L_1&=\lambda_Z^2
\begin{pmatrix}
1-h^4\gamma_2&1-h^4\gamma_2&1-h^4\gamma_2\\
1-h^4\gamma_1&1-\gamma_2&1-\gamma_1\\
0 & \gamma_2-h^4&0
\end{pmatrix}\cdot \begin{pmatrix}
z_i\\z_i'\\z_i''
\end{pmatrix}(A_Z-A_{Z''})^{-1}\\
&+\lambda_Z^2\begin{pmatrix}
1-h^2\gamma_1&1-h^2\gamma_2&1-h^2\gamma_1\\
1-h^2\gamma_2&1-\gamma_2&1-h^2\gamma_2\\
h^2(\gamma_2-\gamma_1)&\gamma_2-h^2\gamma_1&h^2(1-\gamma_1)
\end{pmatrix}\cdot \begin{pmatrix}
z_i\\z_i'\\z_i''
\end{pmatrix}(A_Z-A_{Z'})^{-1}\\
&\triangleq\lambda_Z^2\left(U_{1,1}(h)\cdot \begin{pmatrix}
z_i\\z_i'\\z_i''
\end{pmatrix}(A_Z-A_{Z''})^{-1}+U_{1,2}(h)\cdot \begin{pmatrix}
z_i\\z_i'\\z_i''
\end{pmatrix}(A_Z-A_{Z'})^{-1}\right)\nonumber.
\end{align}

Notice that the values of
\begin{align*}\label{equation:inverseValues1}
\begin{pmatrix}
z_i\\z_i'\\z_i''
\end{pmatrix}(A_Z-A_{Z'})^{-1},~\begin{pmatrix}
z_i\\z_i'\\z_i''
\end{pmatrix}(A_Z-A_{Z''})^{-1}
\end{align*}
may easily be computed from~\eqref{equation:AZAZ'matrix} and the equivalent equation for $(A_Z-A_{Z''})$. Inspecting~\eqref{equation:AZAZ'matrix}, we observe that 

\begin{align*}
\lambda_Z^{-1}\cdot\hat{\Phi}(h)^{-1}\cdot\begin{pmatrix}
	0&0&1\\1&1&1\\1&0&0
	\end{pmatrix}\cdot\begin{pmatrix}
	z_i\\z_i'\\z_i''	\end{pmatrix}&=
	\begin{pmatrix}
	z_i\\z_i'\\z_i''
	\end{pmatrix}\cdot(A_Z-A_{Z'})^{-1},
\end{align*}
and hence we may write
\begin{align*}
\begin{pmatrix}
z_i\\z_i'\\z_i''
\end{pmatrix}(A_Z-A_{Z'})^{-1}&=\lambda_Z^{-1} C(h)^{-1}\cdot \begin{pmatrix}
z_i\\z_i'\\z_i''
\end{pmatrix}.
\end{align*}
for some matrix $C(h)$ which depends only on $h$. Similarly,
\begin{align*}
\begin{pmatrix}
z_i\\z_i'\\z_i''
\end{pmatrix}(A_Z-A_{Z''})^{-1}&=\lambda_Z^{-1} D(h)^{-1}\cdot \begin{pmatrix}
z_i\\z_i'\\z_i''
\end{pmatrix} 
\end{align*}
for some matrix $D(h)$ which depend only on $h$. Therefore,~\eqref{equation:L1Matrix} can be rewritten as
\begin{align*}
\begin{pmatrix}
z_i+z_i'+z_i''\\z_i\\z_i'
\end{pmatrix}L_1=\lambda_Z \left(U_{1,1}(h)D(h)^{-1}+U_{1,2}(h)C(h)^{-1}\right)\cdot \begin{pmatrix}
z_i\\z_i'\\z_i''
\end{pmatrix} \triangleq \lambda_Z \Upsilon_1(h)\cdot \begin{pmatrix}
z_i\\z_i'\\z_i''
\end{pmatrix},
\end{align*}
where $\Upsilon_1(h)$ is a $3\times 3$ matrix whose entries are functions of $h$. Note that since any entry in $U_{1,1}(h)$, $U_{1,2}(h)$, $C(h)$ and $D(h)$ is a polynomial of degree at most 6 in the variable $h$, it follows that
any entry in $\Upsilon_1(h)$ is a \textit{rational function} (that is, a division of polynomials) in $h$, in which the degree of the enumerator and denominator polynomials is some small constant\footnote{This may be easily seen as a result of the well-known formula $M^{-1}=\text{adj}{M}/\det M$, where $\text{adj}M$ is the adjoint (or adjucate) matrix of $M$.}.
Similarly, it can be shown that there exist $\Upsilon_2(h),\Upsilon_3(h),\Upsilon_4(h)$, which correspond to $L_2,L_3,L_4$, whose entries are divisions of polynomials in $h$ of small constant degree. To show that Condition~\ref{condition:3x3} is satisfied, it suffices to show that $\det \Upsilon_i(h)\ne 0$ for all $i,~1\le i\le 4$. A proper field constant $h$, for which non of these determinants vanish and $h^6,h^{12},h^{18}\ne 1$, can be found by denoting
\begin{align*}
\det\Upsilon_i(h)=\frac{Q_{i,1}(h)}{Q_{i,2}(h)},
\end{align*}
and considering the polynomial
\begin{equation*}
Q(h)\triangleq(h^6-1)(h^{12}-1)(h^{18}-1)\prod_{i=1}^{4}\left(Q_{i,1}(h)\cdot Q_{i,2}(h)\right).
\end{equation*}
Clearly, if $q> \deg Q+1$, a nonzero $h$ such that $Q(h)\ne 0$ can be found by polynomial factorization, and since $\deg Q$ is constant, the proof is complete.
\end{proof}
This theorem showed that a single matching provides an $\AS$-set of size four, satisfying the subspace property, over a field of constant size. In the next subsection it will be shown that by taking $q$ to be at least \textit{linear} in $m$, $\AS$-sets from different matchings, that satisfy the pairing condition in pairs, may be united without compromising on the subspace property.

\subsection{A Code from Two Matchings}\label{section:twoMatchings3}
In this subsection it is shown that $\AS$-sets that were constructed from different matchings may be united, as long as the pairing condition holds. 
In the remaining part of this subsection, let $\cX=(X,X',X''),\cY=(Y,Y',Y'')$ be two matchings which satisfy the pairing condition, and let the resulting $\AS$-sets be as in Definition~\ref{definition:oneMatching}:
\begin{eqnarray*}
C_\cX&\triangleq&\{(A_X,S_X),(A_{X'},S_{X'}),(A_{X''},S_{X''}),(A_{X^*},S_{X^*})\}\\
C_\cY&\triangleq&\{(A_Y,S_Y),(A_{Y'},S_{Y'}),(A_{Y''},S_{Y''}).(A_{Y^*},S_{Y^*})\}.
\end{eqnarray*}
The required values of $\lambda_X,\lambda_Y$ which are involved in the definition of these $\AS$-sets will be discussed in the sequel.

\begin{lemma}\label{lemma:simDiagMatrices}
If $C\in\{A_X,A_{X'},A_{X''},A_{X^*}\}$ and $D\in\{A_Y,A_{Y'},A_{Y''},A_{Y^*}\}$, then $C$ and $D$ are simultaneously diagonalizable. Furthermore, $C^2$ and $D^2$ are simultaneously diagonalizable.
\end{lemma}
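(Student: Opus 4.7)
The plan is to mirror the proof of Lemma~\ref{lemma:simDig} from Section~\ref{section:construction3}, using Corollary~\ref{corollary:eigenspaceIntesections} as the main tool.

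First, I would note that both $C$ and $D$ are diagonalizable: by Definition~\ref{definition:oneMatching} each is a nonzero scalar multiple of a matrix of the form $P^{-1}AP$, which by Lemma~\ref{lemma:SpacesMatrixA} has three distinct eigenvalues and three eigenspaces whose sum is $\bF_q^\ell$. Reading off Table~\ref{table:1}, the three eigenspaces $S_1,S_2,S_3$ of $C$ are exactly three members of the family $\{S_X,S_{X'},S_{X''},S_{X^*}\}$ (the missing one being the independent subspace of $C$), and similarly the three eigenspaces $T_1,T_2,T_3$ of $D$ are three members of $\{S_Y,S_{Y'},S_{Y''},S_{Y^*}\}$.

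Next I would invoke Corollary~\ref{corollary:eigenspaceIntesections} for the matchings $\cX,\cY$ (which satisfy the pairing condition by assumption): that corollary is stated precisely for the two four-element families above, and yields $\dim(S_i\cap T_j)=\ell/r^2=\ell/9$ for every $i,j\in\{1,2,3\}$. Since $T_1,T_2,T_3$ are distinct eigenspaces of $D$, the three subspaces $S_i\cap T_1,\,S_i\cap T_2,\,S_i\cap T_3$ are linearly independent, and their dimensions sum to $3\cdot\ell/9=\ell/3=\dim S_i$. Hence $S_i=(S_i\cap T_1)\oplus(S_i\cap T_2)\oplus(S_i\cap T_3)$, producing a basis of $S_i$ consisting of common eigenvectors of $C$ and $D$. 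Aggregating over $i$ (and using $S_1\oplus S_2\oplus S_3=\bF_q^\ell$) yields a basis of $\bF_q^\ell$ that simultaneously diagonalizes $C$ and $D$.

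For the statement about $C^2$ and $D^2$, I would simply note that if $P$ is the change-of-basis matrix from this common eigenbasis so that $C=P^{-1}EP$ and $D=P^{-1}FP$ with $E,F$ diagonal, then $C^2=P^{-1}E^2 P$ and $D^2=P^{-1}F^2 P$ remain diagonalized by the same $P$.

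I do not foresee a substantial obstacle; the only item that genuinely warrants checking is that Corollary~\ref{corollary:eigenspaceIntesections}, as written for the four-element families including $S_{X^*}$ and $S_{Y^*}$, covers every possible pair $(C,D)$ arising in this lemma. This is immediate from inspection of its statement, since the three eigenspaces of any $C\in\{A_X,A_{X'},A_{X''},A_{X^*}\}$ and any $D\in\{A_Y,A_{Y'},A_{Y''},A_{Y^*}\}$ all lie in the two four-element families covered by the corollary.
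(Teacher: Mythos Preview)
Your proposal is correct and follows essentially the same approach as the paper, whose proof simply reads ``Follow the exact outline of the proof of Lemma~\ref{lemma:simDig}.'' Your observation that, via Table~\ref{table:1}, the eigenspaces of $C$ and $D$ are literally members of the four-element families appearing in Corollary~\ref{corollary:eigenspaceIntesections} is apt: in this section the corollary applies directly, without the detour through Remark~\ref{remark:missingCase} and Lemma~\ref{lemma:missingCase} that was needed in Section~\ref{section:construction3}.
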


\iffull
\begin{proof}
Follow the exact outline of the proof of Lemma~\ref{lemma:simDig}.
%
\end{proof}
\fi

Recall that the definition of an $\AS$-set from a single matching involved the choice of two field constants $\lambda_Z$ and $h$. In what follows we use the same $h$ for all matchings, and choose proper distinct values for the constants which correspond to $\lambda_Z$. The next lemma is required for the construction.

\begin{lemma}\label{lemma:lambdaXY}
If $\lambda_X$ and $\lambda_Y$ are two nonzero field elements such that
$\lambda_Y^6\notin\{\lambda_X^6,\lambda_X^6h^{\pm 6},\lambda_X^6h^{\pm 12},\lambda_X^6h^{\pm 18}\}$ and
\begin{alignat*}{5}
\lambda_1 &\in&\{\lambda_X,\lambda_{X'},\lambda_{X''},\lambda_{X^*}\}&=&\{\lambda_X,\lambda_{X}h,\lambda_{X}h^2,\lambda_Xh^3\},\\
\lambda_2 &\in&\{\lambda_Y,\lambda_{Y'},\lambda_{Y''},\lambda_{Y^*}\}&=&\{\lambda_Y,\lambda_{Y}h,\lambda_{Y}h^2,\lambda_Yh^3\},
\end{alignat*}
then
\begin{alignat*}{5}
\{\lambda_1,\lambda_1\gamma_1,\lambda_1\gamma_2\}&\cap&\{\lambda_2,\lambda_2\gamma_1,\lambda_2\gamma_2\}&=&\varnothing,\\
\{\lambda_1^2,\lambda_1^2\gamma_2,\lambda_1^2\gamma_1\}&\cap&\{\lambda_2^2,\lambda_2^2\gamma_2,\lambda_2^2\gamma_1\}&=&\varnothing.
\end{alignat*}
\end{lemma}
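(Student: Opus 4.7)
The plan is to reduce both claimed emptiness statements to a single algebraic identity, namely $\lambda_1^6=\lambda_2^6$, and then to show that this identity contradicts the hypothesis on $\lambda_Y^6$. This reduction exploits the fact that $1,\gamma_1,\gamma_2$ form a multiplicative group of order $3$ (Lemma~\ref{lemma:rootsOfUnity}), so any ratio of two of them is again a cube root of unity, and therefore its third power equals $1$.

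First I would argue contrapositively: suppose an element lies in the intersection of the first two sets, i.e.\ $\lambda_1\gamma=\lambda_2\gamma'$ for some $\gamma,\gamma'\in\{1,\gamma_1,\gamma_2\}$. Then $\lambda_1/\lambda_2=\gamma'/\gamma$ is a cube root of unity, so raising to the sixth power yields $\lambda_1^6=\lambda_2^6$. Similarly, if the intersection of the second two sets is nonempty, then $\lambda_1^2\gamma=\lambda_2^2\gamma'$ for some $\gamma,\gamma'\in\{1,\gamma_1,\gamma_2\}$, hence $(\lambda_1/\lambda_2)^2=\gamma'/\gamma$ is a cube root of unity, and cubing both sides gives $\lambda_1^6=\lambda_2^6$ again.

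Next I would substitute the explicit forms $\lambda_1=\lambda_X h^a$ and $\lambda_2=\lambda_Y h^b$ for some $a,b\in\{0,1,2,3\}$. The identity $\lambda_1^6=\lambda_2^6$ becomes $\lambda_Y^6=\lambda_X^6\,h^{6(a-b)}$. Since $a-b\in\{-3,-2,-1,0,1,2,3\}$, the exponent $6(a-b)$ lies in $\{0,\pm 6,\pm 12,\pm 18\}$, so
\[
\lambda_Y^6\in\{\lambda_X^6,\;\lambda_X^6 h^{\pm 6},\;\lambda_X^6 h^{\pm 12},\;\lambda_X^6 h^{\pm 18}\},
\]
directly contradicting the assumed exclusion on $\lambda_Y^6$. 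Both intersections are therefore empty, completing the proof.

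There is no real obstacle here; the whole argument is a straightforward exponent chase. The only point to be careful about is recording all the seven possible values $6(a-b)$ can take when $a,b\in\{0,1,2,3\}$, which is exactly what the hypothesis was designed to forbid.
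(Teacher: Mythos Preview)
Your proposal is correct and follows essentially the same approach as the paper: reduce nonemptiness of either intersection to $\lambda_1^6=\lambda_2^6$ via the fact that $1,\gamma_1,\gamma_2$ are cube roots of unity, then substitute $\lambda_1=\lambda_X h^a$, $\lambda_2=\lambda_Y h^b$ with $a,b\in\{0,1,2,3\}$ to obtain $\lambda_Y^6=\lambda_X^6 h^{6(a-b)}$ with $6(a-b)\in\{0,\pm 6,\pm 12,\pm 18\}$, contradicting the hypothesis. The only cosmetic difference is that the paper first isolates the claim $\lambda_1^6\ne\lambda_2^6$ and then applies it, whereas you fold everything into a single contrapositive chain.
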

\iffull
\begin{proof}
We first show that $\lambda_1^6\ne \lambda_2^6$. Assume for contradiction that $\lambda_1^6= \lambda_2^6$. By the definition of $\lambda_1$ and $\lambda_2$, there exists $i,j\in\{0,1,2,3\}$ such that $\lambda_1=\lambda_X\cdot h^i$ and $\lambda_2=\lambda_Y^2\cdot h^{j}$. Since  $\lambda_1^6= \lambda_2^6$, it follows that $\lambda_X^6\cdot h^{6i}=\lambda_Y^6\cdot h^{6j}$, and hence $\lambda_Y^6=\lambda_X^6\cdot h^{6(i-j)}$. Since $i,j\in\{0,1,2,3\}$ it follows that $6(i-j)\in\{0,\pm 6, \pm 12, \pm 18\}$, and therefore $\lambda_Y^6\in\{\lambda_X^6,\lambda_X^6h^{\pm 6},\lambda_X^6h^{\pm 12},\lambda_X^6h^{\pm 18}\}$, a contradiction.

Now, if $\{\lambda_1,\lambda_1\gamma_1,\lambda_1\gamma_2\}\cap\{\lambda_2,\lambda_2\gamma_1,\lambda_2\gamma_2\}\ne\varnothing$, then there exists $i,j\in\{0,1,2\}$ such that $\lambda_1 \gamma_1^i=\lambda_2 \gamma_1^j$, and hence, $\lambda_1^6=\lambda_2 ^6$, a contradiction. Similarly, if $\{\lambda_1^2,\lambda_1^2\gamma_2,\lambda_1^2\gamma_1\}\cap\{\lambda_2^2,\lambda_2^2\gamma_1,\lambda_2^2\gamma_1\}\ne\varnothing$, there exists $i,j\in\{0,1,2\}$ such that $\lambda_1^2 \gamma_1^i=\lambda_2^2 \gamma_1^j$, and hence, $\lambda_1^6=\lambda_2 ^6$ as well, a contradiction.
\end{proof}
\fi

Lemma~\ref{lemma:simDiagMatrices} and Lemma~\ref{lemma:lambdaXY} enable an easy choice of field elements, which induce distinct eigenvalues for the simultaneously diagonalizable matrices that correspond to distinct matchings. These distinct eigenvalues, together with the simultaneous diagonalizable matrices,  will assist the proof of the following lemma, from which the construction will follow.

\begin{lemma}
If the field constants $\lambda_X,\lambda_Y$ satisfy
\[
\lambda_Y^6\notin\{\lambda_X^6,\lambda_X^6h^{\pm 6},\lambda_X^6h^{\pm 12},\lambda_X^6h^{\pm 18}\},
\]
then $C_\cX\cup C_\cY$ satisfies the subspace condition.
\end{lemma}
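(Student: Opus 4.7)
My plan is to transcribe the proof of Lemma~\ref{lemma:twoMatchings3} into the enlarged $(r+1)$-matrix framework of this section. All instances of the independence, invariance, and nonsingular properties that involve only matrices and subspaces coming from a single $C_\cX$ or $C_\cY$ are already delivered by Theorem~\ref{theorem:subspaceConditionOneMatching}, so I will only need to address the cross-matching cases.

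For the invariance property I would take $S\in\{S_X,S_{X'},S_{X''},S_{X^*}\}$ and $D\in\{A_Y,A_{Y'},A_{Y''},A_{Y^*}\}$ (the symmetric case being analogous). Table~\ref{table:1} identifies the three eigenspaces of $D$ as three of the four subspaces $\{S_Y,S_{Y'},S_{Y''},S_{Y^*}\}$. Corollary~\ref{corollary:eigenspaceIntesections}, supplemented by Remark~\ref{remark:missingCase} to license the starred subspaces $S_{X^*},S_{Y^*}$ on either side, then guarantees that each of the three eigenspaces of $D$ meets $S$ in an $\ell/9$-dimensional subspace. Because the three eigenspaces of $D$ are linearly independent, the sum of the three $\ell/9$-dimensional intersections is a direct sum, it lies inside $S$, and it has dimension $3\cdot\ell/9=\ell/3=\dim S$; consequently it equals $S$, exhibiting a basis of eigenvectors of $D$ inside $S$, so $SD=S$.

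For the nonsingular property in the cross-matching case, I would fix $C\in C_\cX$ and $D\in C_\cY$ and invoke Lemma~\ref{lemma:simDiagMatrices} to obtain simultaneous diagonalizability of $C,D$ and of $C^2,D^2$. The eigenvalues of $C$ are $\{\lambda_1,\gamma_1\lambda_1,\gamma_2\lambda_1\}$ with $\lambda_1\in\{\lambda_X,\lambda_Xh,\lambda_Xh^2,\lambda_Xh^3\}$, and similarly for $D$ with $\lambda_2\in\{\lambda_Y,\lambda_Yh,\lambda_Yh^2,\lambda_Yh^3\}$. Our standing hypothesis on $\lambda_Y^6$ is exactly the hypothesis of Lemma~\ref{lemma:lambdaXY}, which then forces the eigenvalue sets of $C,D$ and of $C^2,D^2$ to be disjoint; copying the argument of Lemma~\ref{lemma:simDiagAreInv} then shows that both $C-D$ and $C^2-D^2$ are invertible, which settles conditions~\ref{condition:rankMetric} and~\ref{condition:squareRankMetric}. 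For condition~\ref{condition:3x3}, a triple of matrices all from a single matching is handled by Theorem~\ref{theorem:subspaceConditionOneMatching}; otherwise exactly one of the two matchings contributes a single matrix, which I will relabel $A_i$, so that by Lemma~\ref{lemma:simDiagMatrices} $A_i$ commutes with both $A_j$ and $A_t$. The Schur-complement computation from the proof of Lemma~\ref{lemma:twoMatchings3} then collapses to
\[
(A_t^2-A_i^2)-(A_j^2-A_i^2)(A_j-A_i)^{-1}(A_t-A_i)=(A_t-A_j)(A_t-A_i),
\]
whose two factors are invertible by the previous step (in the cross-matching case) or by Theorem~\ref{theorem:subspaceConditionOneMatching} (when the pair lies in a common matching).

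The main obstacle I anticipate is the invariance paragraph: the careful verification that Corollary~\ref{corollary:eigenspaceIntesections} together with Remark~\ref{remark:missingCase} really does cover every pair $(S,D)$ that can arise here, particularly the pairs involving $A_{Y^*}$, whose three eigenspaces are the non-starred subspaces $S_Y,S_{Y''},S_{Y'}$, and the pairs where $S=S_{X^*}$ meets eigenspaces of $A_{Y^*}$. Once this bookkeeping is carried out, the remainder of the argument is a direct transcription of the Section~\ref{section:construction3} reasoning to the enlarged setting.
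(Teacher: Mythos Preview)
Your proposal is correct and mirrors the paper's own proof almost exactly: the invariance argument via Corollary~\ref{corollary:eigenspaceIntesections} and the eigenspace decomposition, the use of Lemmas~\ref{lemma:simDiagMatrices}, \ref{lemma:lambdaXY}, and~\ref{lemma:simDiagAreInv} for Conditions~\ref{condition:rankMetric}--\ref{condition:squareRankMetric}, and the commutativity-based collapse of the Schur complement for Condition~\ref{condition:3x3} are all the same steps the paper takes. Your worry about needing Remark~\ref{remark:missingCase} is in fact unnecessary here, since in this section the eigenspaces of each matrix are literally members of $\{S_Y,S_{Y'},S_{Y''},S_{Y^*}\}$ by Table~\ref{table:1}, so Corollary~\ref{corollary:eigenspaceIntesections} applies directly without any modification.
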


\begin{proof}
Since each of $C_\cX$ and $C_\cY$ satisfies the subspace condition separately, we are left to show the parts of the nonsingular property and the invariance property which involve matrices and subspaces from different matchings.

To prove the invariance property, for any $C\in\{A_X,A_{X'},A_{X''},A_{X^*}\}$ and any $T\in\{S_Y,S_{Y'},S_{Y''},S_{Y^*}\}$ we must show that $TC=T$. Let $S_1,S_2,S_3\in\{S_X,S_{X'},S_{X''},S_{X^*}\}$ be the eigenspaces of $C$. It follows from Corollary~\ref{corollary:eigenspaceIntesections} that $\dim (S_i\cap T)=\ell/9$ for all $i\in[3]$. Therefore, since $S_1+S_2+S_3=\bF_q^\ell$, it follows that there exists a basis $t_1,\ldots,t_{\ell/3}$ of $T$ in which all vectors are eigenvectors of $C$. Hence, for all $i\in[\ell/3]$ we have that $t_iC\in T$, and thus $TC=T$. The inverse case, where $C\in\{A_Y,A_{Y'},A_{Y''},A_{Y^*}\}$ and $T\in \{S_X,S_{X'},S_{X''},S_{X^*}\}$, is symmetric.

To prove the nonsingular property, let $C\in\{A_X,A_{X'},A_{X''},A_{X^*}\}$ and  $D\in\{A_Y,A_{Y'},A_{Y''},A_{Y^*}\}$. According to Definition~\ref{definition:oneMatching}, the eigenvalues of $C$ are $\lambda_C,\lambda_C \gamma_1,$ and $\lambda_C\gamma_2$ for some \[\lambda_C\in\{\lambda_X,\lambda_Xh,\lambda_Xh^2,\lambda_Xh^3\},\]and similarly, the eigenvalues of $D$ are $\lambda_D,\lambda_D \gamma_1,$ and $\lambda_D\gamma_2$ for some \[\lambda_D\in\{\lambda_Y,\lambda_Yh,\lambda_Yh^2,\lambda_Yh^3\}.\]By Lemma~\ref{lemma:lambdaXY}, and since $\lambda_Y^6\notin\{\lambda_X^6,\lambda_X^6h^{\pm 6},\lambda_X^6h^{\pm 12},\lambda_X^6h^{\pm 18}\}$, it follows that
\begin{alignat*}{5}
\{\lambda_C,\lambda_C\gamma_1,\lambda_C\gamma_2\}&\cap&\{\lambda_D,\lambda_D\gamma_1,\lambda_D\gamma_2\}&=&\varnothing\\
\{\lambda_C^2,\lambda_C^2\gamma_2,\lambda_C^2\gamma_1\}&\cap&\{\lambda_D^2,\lambda_D^2\gamma_1,\lambda_D^2\gamma_1\}&=&\varnothing&,
\end{alignat*}
that is, $C$ and $D$ have no eigenvalue in common, and $C^2$ and $D^2$ have no eigenvalue in common. Since Lemma~\ref{lemma:simDiagMatrices} implies that $C$ and $D$ are simultaneously diagonalizable, and so are $C^2$ and $D^2$, it follows by Lemma~\ref{lemma:simDiagAreInv} that $C-D$ and $C^2-D^2$ are invertible.

We are left to prove that any $3\times 3$ block submatrix is invertible. Let $A_i,A_j,A_k$ be three matrices from $C_\cX\cup C_\cY$ such that $A_i$ and $A_j$ are not from the same matching, and so are $A_i$ and $A_k$.  Recall that, as in the proof of Theorem~\ref{theorem:subspaceConditionOneMatching}, the matrix
\begin{eqnarray*}
\begin{pmatrix}
I & I & I \\
A_i & A_{j} & A_{k} \\
A_i^2 & A_{j}^2 & A_{k}^2
\end{pmatrix}
\end{eqnarray*}
is invertible if and only if
\begin{eqnarray*}
L\triangleq(A_{i}^2-A_{k}^2)(A_{i}-A_{k})^{-1}-(A_{i}^2-A_{j}^2)(A_{i}-A_{j})^{-1}
\end{eqnarray*}
is invertible. Notice that by Lemma~\ref{lemma:simDiagMatrices} $A_i$ and $A_j$ are simultaneously diagonalizable, and hence they commute. In addition, so are $A_i$ and $A_k$. Therefore,
\begin{eqnarray*}
L&=&(A_{i}^2-A_{k}^2)(A_{i}-A_{k})^{-1}-(A_{i}^2-A_{j}^2)(A_{i}-A_{j})^{-1}\\
&=&A_i+A_k-A_i+A_j=A_k-A_j,
\end{eqnarray*}
and hence $L$ is invertible.
\end{proof}

We conclude in the following theorem, in which $Q$ is the polynomial which was mentioned in the proof of Theorem~\ref{theorem:subspaceConditionOneMatching}.

\begin{theorem}\label{theorem:theorem:3ParitiesNotAccessOptimal}
If $q>\max\{42m,\deg Q\}+1$, and $\{\cX_i=(X_i,X_i',X_i'')\}_{i=1}^m$ is the set of matchings from Definition~\ref{definition:matchings3}, then the $\AS$-set $\bC_2\triangleq\cup_{i=1}^m C_{\cX_i}$ satisfies the subspace condition.
\end{theorem}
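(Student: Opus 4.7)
The plan is to reduce the $m$-matching situation to the two-matching case treated in the preceding lemma. Specifically, I would first select a single common field element $h$ that simultaneously validates all the per-matching constructions, and then choose $m$ base scalars $\lambda_{X_1},\ldots,\lambda_{X_m}\in\bF_q^*$ so that every two resulting $\AS$-sets satisfy the disjointness hypothesis of that lemma. The rest of the argument splits into a short counting step for the scalar choice and a small extension of the $3\times 3$-block part of the nonsingular property to triples whose three matrices come from three \emph{pairwise distinct} matchings.

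Concretely, since $q-1>\deg Q$, I would first pick $h\in\bF_q^*$ with $Q(h)\ne 0$. By the proof of Theorem~\ref{theorem:subspaceConditionOneMatching}, this choice guarantees $h^6,h^{12},h^{18}\ne 1$ as well as the nonvanishing of the four determinants appearing in that proof, so that for any nonzero $\lambda_{X_i}$ and the induced triple $\lambda_{X_i'}=\lambda_{X_i}h$, $\lambda_{X_i''}=\lambda_{X_i}h^2$, $\lambda_{X_i^*}=\lambda_{X_i}h^3$, the individual $\AS$-set $C_{\cX_i}$ already satisfies the subspace condition. The base scalars would then be chosen greedily: at step $j$, a candidate $\lambda_{X_j}$ is forbidden exactly when $\lambda_{X_j}^6=\lambda_{X_i}^6 h^{6k}$ for some $i<j$ and some $k\in\{0,\pm1,\pm2,\pm3\}$. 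Each such equation has at most six solutions in $\bF_q^*$, so at most $42(j-1)$ scalars are forbidden at step $j$; the bound $q>42m+1$ thus comfortably allows all $m$ choices to be made.

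With the parameters fixed, Lemma~\ref{lemma:pairingCondition3} gives the pairing condition for every two distinct matchings $\cX_i,\cX_j$, and the construction of the $\lambda_{X_i}$'s guarantees the disjointness hypothesis of the two-matching lemma for each such pair. The independence and invariance properties, as well as the $1\times 1$ and $2\times 2$ block parts of the nonsingular property and the $3\times 3$ case involving at most two matchings, then follow directly from a union of the pairwise conclusions. The step I expect to be the main obstacle is the $3\times 3$ block $(A_i,A_j,A_k)$ whose three constituents come from three pairwise distinct matchings, a case not explicitly handled by the preceding lemma. Here I would mimic the reduction used in its proof: Lemma~\ref{lemma:simDiagMatrices}, whose argument relies only on the pairing condition of the two matchings involved, extends verbatim to any pair of matrices from different matchings in our family, so $A_i$ commutes with both $A_j$ and $A_k$. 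Consequently the obstruction matrix $L=(A_i^2-A_k^2)(A_i-A_k)^{-1}-(A_i^2-A_j^2)(A_i-A_j)^{-1}$ collapses to $A_k-A_j$, whose invertibility is already known from the pairwise part of the argument.
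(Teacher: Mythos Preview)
Your proposal is correct and follows essentially the same approach as the paper: choose $h\in\bF_q^*$ with $Q(h)\ne 0$, then greedily pick the base scalars $\lambda_{X_i}$ so that each new choice avoids the at most $42$ roots of the seven degree-$6$ polynomials determined by the previous choices, and finally reduce everything to the single-matching and two-matching lemmas. Your explicit treatment of the $3\times 3$ block with matrices from three pairwise distinct matchings is actually more careful than the paper's own proof of this theorem, which omits that discussion here (the paper handles the identical issue in the proof of the earlier Theorem~\ref{theorem:3ParitiesAccessOptimal}, observing exactly as you do that the commutation argument only needs \emph{two} of the three pairs to come from distinct matchings).
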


\iffull
\begin{proof}
Since $q>\deg Q+1$, it follows that there exists $h\in\bF_q^*$ such that $Q(h)\ne 0$. Therefore, by Theorem~\ref{theorem:subspaceConditionOneMatching}, we have that the $\AS$-sets $C_{\cX_i}$ satisfy the subspace condition separately. Since $q>42m+1$, the field elements $\{\lambda_{X_i}\}_{i=1}^m$ may be chosen such that every $\lambda_{X_i},\lambda_{X_j}$ satisfy $\lambda_{X_j}^6\notin\{\lambda_{X_i}^6,\lambda_{X_i}^6h^{\pm 6},\lambda_{X_i}^6h^{\pm 12},\lambda_{X_i}^6h^{\pm 18}\}$, since the choice of any $\lambda_{X_i}$ excludes the choice of at most 42 other field elements, which constitute the roots of 7 polynomials of degree 6. Notice that a proper set $\{\lambda_{X_i}\}_{i=1}^m$ may be found explicitly using a simple iterative algorithm that maintains a feasible set of elements - in each iteration it arbitrarily chooses the next element $\lambda_{X_i}$ from it, and removes all elements $e$ which satisfy $e^6\in\{\lambda_{X_i}^6,\lambda_{X_i}^6h^{\pm 6},\lambda_{X_i}^6h^{\pm 12},\lambda_{X_i}^6h^{\pm 18}\}$.
\end{proof}
\fi

%

\section*{Acknowledgments}
The authors would like to thank Prof.~Joachim Rosenthal for many helpful discussions, which contributed to this paper.
\subsection*{Appendix A}\label{appendix:rankDistance}
To show that $A_Z-A_{Z^*}$ is of full rank, notice that
\begin{eqnarray}
\nonumber z_i' (A_Z-A_{Z^*})&=&(\lambda_{Z}\gamma_1-\lambda_{Z^*}\gamma_2)z_i'\\
\nonumber z_i''(A_Z-A_{Z^*})&=&(\lambda_{Z}\gamma_2-\lambda_{Z^*} \gamma_1)z_i''\\
\nonumber z_i(A_Z-A_{Z^*})&=&\lambda_Z\left(z_i+(1-\gamma_1)z_i'+(1-\gamma_2)z_i''\right)-\lambda_{Z^*} z_i\\
\label{equation:RankDistance2}&=&(\lambda_Z -\lambda_{Z^*})z_i+\lambda_Z(1-\gamma_1)z_i'+\lambda_Z(1-\gamma_2)z_i''
\end{eqnarray}
Since $\lambda_{Z}\gamma_1\ne \lambda_{Z^*} \gamma_2$ and $\lambda_{Z} \gamma_2 \ne \lambda_{Z^*} \gamma_1$, we have that $z_i',z_i''\in\image(A_Z-A_{Z^*})$. Therefore, it follows from~(\ref{equation:RankDistance2}) that $z_i\in\image(A_Z-A_{Z^*})$, since $\lambda_Z\ne \lambda_{Z^*}$.

To show that $A_Z-A_{Z''}$ is of full rank, notice that
\begin{eqnarray}
\nonumber z_i' (A_Z-A_{Z''})&=&(\lambda_{Z}\gamma_1-\lambda_{Z''})z_i'\\
\nonumber (z_i+z_i'+z_i'')(A_Z-A_{Z''})&=&(\lambda_{Z}-\lambda_{Z''}\gamma_1)(z_i+z_i'+z_i'')\\
\nonumber z_i(A_Z-A_{Z''})&=&\lambda_Z\left(z_i+(1-\gamma_1)z_i'+(1-\gamma_2)z_i''\right)-\lambda_{Z''}\gamma_2 z_i\\
\label{equation:RankDistance3}&=&\lambda_Z(z_i+z_i'+z_i'')-\lambda_Z \gamma_1 z_i'-\lambda_Z \gamma_2 z_i''-\lambda_{Z''} \gamma_2 z_i
\end{eqnarray}
Since $\lambda_{Z}\gamma_1\ne \lambda_{Z''}$ and $\lambda_{Z} \ne \lambda_{Z''} \gamma_1$, we have that $z_i',z_i+z_i'+z_i''\in\image(A_Z-A_{Z''})$. Therefore, it follows from~(\ref{equation:RankDistance3}) that $\lambda_Z \gamma_2 z_i''+\lambda_{Z''}\gamma_2 z_i\in\image(A_Z-A_{Z^*})$. To show that $z_i,z_i',z_i''\in\image(A_Z-A_{Z''})$ it suffices to show that $\lambda_Z\gamma_2\ne \lambda_{Z''}\gamma_2$, which also follows from $\lambda_Z^6\ne \lambda_{Z''}^6$.

To show that $A_{Z'}-A_{Z''}$ is of full rank notice that
\begin{eqnarray}
\nonumber z_i (A_{Z'}-A_{Z''})&=&(\lambda_{Z'}\gamma_1-\lambda_{Z''}\gamma_2)z_i\\
\nonumber (z_i+z_i'+z_i'')(A_{Z'}-A_{Z''})&=&(\lambda_{Z'}\gamma_2-\lambda_{Z''}\gamma_1)(z_i+z_i'+z_i'')\\
\nonumber z_i'(A_{Z'}-A_{Z''})&=&\lambda_{Z'}\left( (\gamma_2-\gamma_1)z_i+\gamma_2z_i'+(\gamma_2-1)z_i''\right)-\lambda_{Z''}z_i'\\
\label{equation:RankDistance4}&=&\lambda_{Z'}\gamma_2\left(z_i+z_i'+z_i''\right)-\lambda_{Z'}\gamma_1z_i-\lambda_{Z'}z_i''-\lambda_{Z''}z_i'
\end{eqnarray}
Since $\lambda_{Z'}\gamma_1\ne \lambda_{Z''}\gamma_2$ and $\lambda_{Z'}\gamma_2 \ne \lambda_{Z''} \gamma_1$, we have that $z_i,z_i+z_i'+z_i''\in\image(A_Z-A_{Z''})$. Therefore, it follows from~(\ref{equation:RankDistance4}) that $\lambda_{Z'} z_i''+\lambda_{Z''}z_i'\in\image(A_Z-A_{Z^*})$. To show that $z_i,z_i',z_i''\in\image(A_Z-A_{Z''})$ it suffices to show that $\lambda_{Z'}\ne \lambda_{Z''}$, which also follows from $\lambda_{Z'}^6\ne \lambda_{Z''}^6$.

To show that $A_{Z'}-A_{Z^*}$ is of full rank notice that
\begin{eqnarray}
\nonumber z_i (A_{Z'}-A_{Z^*})&=&(\lambda_{Z'}\gamma_1-\lambda_{Z^*})z_i\\
\nonumber z_i''(A_{Z'}-A_{Z^*})&=&(\lambda_{Z'}-\lambda_{Z^*}\gamma_1)z_i''\\
\nonumber z_i'(A_{Z'}-A_{Z^*})&=&\lambda_{Z'}\left( (\gamma_2-\gamma_1)z_i+\gamma_2z_i'+(\gamma_2-1)z_i''\right)-\lambda_{Z^*}\gamma_2z_i'\\
\label{equation:RankDistance5}&=&\lambda_{Z'}(\gamma_2-\gamma_1)z_i+(\lambda_{Z'}\gamma_2-\lambda_{Z^*}\gamma_2)z_i'+\lambda_{Z'}(\gamma_2-1)z_i''
\end{eqnarray}
Since $\lambda_{Z'}\gamma_1\ne \lambda_{Z^*}$ and $\lambda_{Z'} \ne \lambda_{Z^*} \gamma_1$, we have that $z_i,z_i''\in\image(A_{Z'}-A_{Z^*})$. Therefore, it follows from~(\ref{equation:RankDistance5}) that $z_i'\in\image(A_{Z'}-A_{Z^*})$, since $\lambda_{Z'}\gamma_2\ne\lambda_{Z^*}\gamma_2$.

To show that $A_{Z''}-A_{Z^*}$ is of full rank notice that
\begin{eqnarray}
\nonumber z_i (A_{Z''}-A_{Z^*})&=&(\lambda_{Z''}\gamma_2-\lambda_{Z^*})z_i\\
\nonumber z_i'(A_{Z''}-A_{Z^*})&=&(\lambda_{Z''}-\lambda_{Z^*}\gamma_2)z_i'\\
\nonumber z_i''(A_{Z''}-A_{Z^*})&=&\lambda_{Z''}\left((\gamma_1-\gamma_2)z_i+(\gamma_1-1)z_i'+\gamma_1z_i''\right)-\lambda_{Z^*}\gamma_1z_i''\\
\label{equation:RankDistance6}&=&\lambda_{Z''}(\gamma_1-\gamma_2)z_i+\lambda_{Z''}(\gamma_1-1)z_i'+(\lambda_{Z''}\gamma_1-\lambda_{Z^*}\gamma_1)z_i''
\end{eqnarray}
Since $\lambda_{Z''}\gamma_2\ne \lambda_{Z^*}$ and $\lambda_{Z''} \ne \lambda_{Z^*} \gamma_2$, we have that $z_i,z_i'\in\image(A_{Z''}-A_{Z^*})$. Therefore, it follows from~(\ref{equation:RankDistance6}) that $z_i'\in\image(A_{Z''}-A_{Z^*})$, since $\lambda_{Z''}\gamma_1\ne\lambda_{Z^*}\gamma_1$.
\subsection*{Appendix~B}\label{appendix:squareRankDistance}
To show that $A_Z^2-A_{Z^*}^2$ is of full rank, notice that
\begin{eqnarray}
\nonumber z_i' (A_Z^2-A_{Z^*}^2)&=&(\lambda_{Z}^2\gamma_2-\lambda_{Z^*}^2 \gamma_1)z_i'\\
\nonumber z_i''(A_Z^2-A_{Z^*}^2)&=&(\lambda_{Z}^2\gamma_1-\lambda_{Z^*}^2\gamma_2)z_i''\\
\nonumber z_i(A_Z^2-A_{Z^*}^2)&=&\lambda_Z^2(z_i+(1-\gamma_2)z_i'+(1-\gamma_1)z_i'')-\lambda_{Z^*}^2z_i\\
&=&(\lambda_Z^2-\lambda_{Z^*}^2)z_i+\lambda_Z^2(1-\gamma_2)z_i'+\lambda_Z^2(1-\gamma_1)z_i''\label{equation:SquareRankDistance2}
\end{eqnarray}
Since $\lambda_{Z}^2\gamma_2\ne \lambda_{Z^*}^2 \gamma_1$ and $\lambda_{Z}^2\gamma_1\ne \lambda_{Z^*}^2\gamma_2$, it follows that $z_i',z_i''\in \image (A_Z^2-A_{Z^*}^2)$. Therefore,~(\ref{equation:SquareRankDistance2}) implies that $z_i\in\image(A_Z^2-A_{Z^*}^2)$, since $\lambda_Z^2\ne \lambda_{Z^*}^2$.

To show that $A_Z^2-A_{Z''}^2$ is of full rank notice that
\begin{eqnarray}
\nonumber z_i' (A_Z^2-A_{Z''}^2)&=&(\lambda_{Z}^2\gamma_2-\lambda_{Z''}^2)z_i'\\
\nonumber (z_i+z_i'+z_i'')(A_Z^2-A_{Z''}^2)&=&(\lambda_{Z}^2-\lambda_{Z''}^2\gamma_2)(z_i+z_i'+z_i'')\\
\nonumber z_i(A_Z^2-A_{Z''}^2)&=&\lambda_Z^2(z_i+(1-\gamma_2)z_i'+(1-\gamma_1)z_i'')-\lambda_{Z''}^2 \gamma_1z_i\\
&=&\lambda_Z^2(z_i+z_i'+z_i'')-\lambda_Z^2 \gamma_2 z_i'-\lambda_Z^2 \gamma_1 z_i''-\lambda_{Z''}^2 \gamma_1 z_i
\label{equation:SquareRankDistance3}
\end{eqnarray}
Since $\lambda_Z^2 \gamma_2\ne \lambda_{Z''}^2$ and $\lambda_Z^2\ne \lambda_{Z''}^2\gamma_2$, it follows that $z_i',z_i+z_i'+z_i''\in\image(A_Z^2-A_{Z'}^2)$. Therefore,~(\ref{equation:SquareRankDistance3}) implies that $\lambda_Z^2 \gamma_1z_i''+\lambda_{Z''}^2\gamma_2z_i\in\image(A_Z^2-A_{Z''}^2)$. Hence, to have that $z_i,z_i',z_i''\in\image(A_Z^2-A_{Z'}^2)$, it suffices to show that $\lambda_Z^2\gamma_1\ne\lambda_{Z''}^2\gamma_2$, which is implied by $\lambda_Z^6\ne \lambda_{Z''}^6$.

To show that $A_{Z'}^2-A_{Z''}^2$ is of full rank notice that
\begin{eqnarray}
\nonumber z_i (A_{Z'}^2-A_{Z''}^2)&=&(\lambda_{Z'}^2\gamma_2-\lambda_{Z''}^2\gamma_1)z_i\\
\nonumber (z_i+z_i'+z_i'')(A_{Z'}^2-A_{Z''}^2)&=&(\lambda_{Z'}^2\gamma_1-\lambda_{Z''}^2\gamma_2)(z_i+z_i'+z_i'')\\
\nonumber z_i'(A_{Z'}^2-A_{Z''}^2)&=&\lambda_{Z'}^2\left((\gamma_1-\gamma_2)z_i+\gamma_1z_i'+(\gamma_1-1)z_i''\right)-\lambda_{Z''}^2z_i'\\
&=&\lambda_{Z'}^2\gamma_1(z_i+z_i'+z_i'')-\lambda_{Z'}^2\gamma_2z_i-\lambda_{Z'}^2z_i''-\lambda_{Z''}^2z_i'
\label{equation:SquareRankDistance4}
\end{eqnarray}
Since $\lambda_{Z'}^2 \gamma_2\ne \lambda_{Z''}^2$ and $\lambda_{Z'}^2\gamma_1\ne \lambda_{Z''}^2\gamma_2$, it follows that $z_i,z_i+z_i'+z_i''\in\image(A_{Z'}^2-A_{Z'}^2)$. Therefore,~(\ref{equation:SquareRankDistance4}) implies that $\lambda_{Z'}^2 z_i''+\lambda_{Z''}^2m_i'\in\image(A_{Z'}^2-A_{Z''}^2)$. Hence, to have that $m_i,m_i',m_i''\in\image(A_{Z'}^2-A_{Z''}^2)$, it suffices to show that $\lambda_{Z'}^2\ne\lambda_{Z''}^2$, which is implied by $\lambda_Z^6\ne \lambda_{Z''}^6$.

To show that $A_{Z'}^2-A_{Z^*}^2$ is of full rank notice that
\begin{eqnarray}
\nonumber z_i (A_{Z'}^2-A_{Z^*}^2)&=&(\lambda_{Z'}^2\gamma_2-\lambda_{Z^*}^2\gamma_1)z_i\\
\nonumber z_i''(A_{Z'}^2-A_{Z^*}^2)&=&(\lambda_{Z'}^2-\lambda_{Z^*}^2\gamma_2)m_i''\\
\nonumber z_i'(A_{Z'}^2-A_{Z^*}^2)&=&\lambda_{Z'}^2\left((\gamma_1-\gamma_2)z_i+\gamma_1z_i'+(\gamma_1-1)z_i''\right)-\lambda_{Z^*}^2\gamma_1z_i'\\
&=&\lambda_{Z'}^2(\gamma_1-\gamma_2)z_i+(\lambda_{Z'}^2\gamma_1-\lambda_{Z^*}^2\gamma_1)z_i'+\lambda_{Z'}^2(\gamma_1-1)z_i''
\label{equation:SquareRankDistance5}
\end{eqnarray}
Since $\lambda_{Z'}^2 \gamma_2\ne \lambda_{Z^*}^2\gamma_1$ and $\lambda_{Z'}^2\ne \lambda_{Z^*}^2\gamma_2$, it follows that $z_i,z_i''\in\image(A_{Z'}^2-A_{Z^*}^2)$. Therefore,~(\ref{equation:SquareRankDistance5}) implies that $m_i'\in\image(A_{Z'}^2-A_{Z^*}^2)$, since $\lambda_{Z'}^2\gamma_1\ne\lambda_{Z^*}^2\gamma_1$.

To show that $A_{Z''}^2-A_{Z^*}^2$ is of full rank notice that
\begin{eqnarray}
\nonumber z_i (A_{Z''}^2-A_{Z^*}^2)&=&(\lambda_{Z''}^2\gamma_1-\lambda_{Z^*}^2)z_i\\
\nonumber z_i'(A_{Z''}^2-A_{Z^*}^2)&=&(\lambda_{Z''}^2-\lambda_{Z^*}^2\gamma_1)z_i'\\
\nonumber z_i''(A_{Z''}^2-A_{Z^*}^2)&=&\lambda_{Z''}^2\left((\gamma_2-\gamma_1)z_i+(\gamma_2-1)z_i'+\gamma_2z_i''\right)-\lambda_{Z^*}^2\gamma_2z_i''\\
&=&\lambda_{Z''}^2(\gamma_2-\gamma_1)z_i+\lambda_{Z''}^2(\gamma_2-1)z_i'+(\lambda_{Z''}^2\gamma_2-\lambda_{Z^*}^2\gamma_2)z_i''
\label{equation:SquareRankDistance6}
\end{eqnarray}
Since $\lambda_{Z''}^2 \gamma_1\ne \lambda_{Z^*}^2$ and $\lambda_{Z''}^2\ne \lambda_{Z^*}^2\gamma_1$, it follows that $z_i,z_i'\in\image(A_{Z''}^2-A_{Z^*}^2)$. Therefore,~(\ref{equation:SquareRankDistance6}) implies that $z_i''\in\image(A_{Z'}^2-A_{Z^*}^2)$, since $\lambda_{Z''}^2\gamma_2\ne\lambda_{Z^*}^2\gamma_2$.

\subsection*{Appendix C}\label{appendixC}
\begin{lemma}\label{lemma:toThe6}
	For nonzero constants $\lambda_x,\lambda_y$ in $\bF_q$, if $\lambda_x^6\ne \lambda_y^6$ then \[\{\lambda_x,\gamma_1\lambda_x,\gamma_2\lambda_x\}\cap\{\lambda_y,\gamma_1\lambda_y,\gamma_2\lambda_y\}=\varnothing.\]
\end{lemma}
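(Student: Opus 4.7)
The plan is to prove the contrapositive: assuming the two sets share an element, deduce that $\lambda_x^6 = \lambda_y^6$.

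Concretely, suppose the intersection is nonempty. Then there exist exponents $i,j\in\{0,1,2\}$ such that $\gamma_1^i \lambda_x = \gamma_1^j \lambda_y$, using the identifications $\gamma_1^0=1$ and $\gamma_1^2=\gamma_2$ from Lemma~\ref{lemma:rootsOfUnity}. Raising both sides of this equality to the sixth power yields $\gamma_1^{6i}\lambda_x^6 = \gamma_1^{6j}\lambda_y^6$. Since $\gamma_1$ is a root of unity of order three, $\gamma_1^3=1$, and hence $\gamma_1^{6i}=\gamma_1^{6j}=1$. Therefore $\lambda_x^6=\lambda_y^6$, contradicting the hypothesis.

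There is no real obstacle here; the statement is essentially a direct consequence of the fact that $\{1,\gamma_1,\gamma_2\}$ is the group of sixth roots of unity raised to the third power, so each element has sixth power equal to $1$. The only thing to be slightly careful about is listing the nine possible coincidences between the two sets, but all of them collapse to the same conclusion after taking sixth powers, so one line of case analysis suffices.
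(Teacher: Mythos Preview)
Your proof is correct and follows essentially the same approach as the paper: both argue by contrapositive, write an element of the intersection as $\gamma_1^i\lambda_x=\gamma_1^j\lambda_y$, raise to the sixth power, and use $\gamma_1^3=1$ to conclude $\lambda_x^6=\lambda_y^6$. The parenthetical remark about ``sixth roots of unity raised to the third power'' is a bit muddled, but the actual argument is clean and matches the paper's proof line for line.
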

\begin{proof}
	For any $i,j\in[3]$, let $\lambda_x\gamma_1^i\in\{\lambda_x,\gamma_1\lambda_x,\gamma_2\lambda_x\}$ and $\lambda_y\gamma_1^j\in\{\lambda_y,\gamma_1\lambda_y,\gamma_2\lambda_y\}$. Since $\gamma_1$ is a root of unity of order 3, if $\lambda_x\gamma_1^i=\lambda_y\gamma_1^j$, then $(\lambda_x\gamma_1^i)^6=(\lambda_y\gamma_1^j)^6$, and hence $\lambda_x^6=\lambda_y^6$, a contradiction.
\end{proof}
\end{document}